\documentclass{scrartcl}

\usepackage{stmaryrd}
\usepackage{xspace}
\usepackage{color}
\usepackage{natbib}
\usepackage{amsthm, amssymb}
\usepackage{tikz}
\usetikzlibrary{matrix,arrows,shapes,calc}
\usepackage{subfig}

\newlength{\rulevgap}
\setlength{\rulevgap}{0.05in}
\newlength{\ruleheight}
\newlength{\ruledepth}
\newsavebox{\rulebox}
\newlength{\GapLength}

\newcommand{\Rule}[2]{\savebox{\rulebox}[\width][b]                         %
                              {\( \frac{\raisebox{0in} {\( #1 \)}}       %
                                       {\raisebox{-0.03in}{\( #2 \)}} \)}   %
                      \settoheight{\ruleheight}{\usebox{\rulebox}}          %
                      \addtolength{\ruleheight}{\rulevgap}                  %
                      \settodepth{\ruledepth}{\usebox{\rulebox}}            %
                      \addtolength{\ruledepth}{\rulevgap}                   %
                      \raisebox{0in}[\ruleheight][\ruledepth]               %
                               {\usebox{\rulebox}}}
\newcommand{\Axiom}[1]{\savebox{\rulebox}[\width][b]                        %
                               {$\frac{}{\raisebox{-0.03in}{$#1$}}$}        %
                      \settoheight{\ruleheight}{\usebox{\rulebox}}          %
                      \addtolength{\ruleheight}{\rulevgap}                  %
                      \settodepth{\ruledepth}{\usebox{\rulebox}}            %
                      \addtolength{\ruledepth}{\rulevgap}                   %
                      \raisebox{0in}[\ruleheight][\ruledepth]               %
                               {\usebox{\rulebox}}}

\usepackage{epigram}
\usepackage{epigram-desc}
\usepackage{epigram-ornament}
\usepackage{epigram-function}
\usepackage{epigram-inverse}
\usepackage{epigram-nat}
\usepackage{epigram-list}
\usepackage{epigram-fin}
\usepackage{epigram-vec}
\usepackage{epigram-enum}
\usepackage{epigram-sum}

\usepackage{wasysym}

\ColourEpigram


\usepackage{category}


\renewcommand{\IDesc}[1][\!]{\Canonical{Desc}\xspace\, #1}


\newcommand{\aContainer}{C}
\newcommand{\Shape}{S}
\newcommand{\Position}{P}
\newcommand{\Next}{n}

\newcommand{\MorphShape}{\sigma}
\newcommand{\MorphPosition}{\rho}
\newcommand{\MorphCoe}{q}

\newcommand{\aShape}{sh}
\newcommand{\aPosition}{pos}

\newcommand{\PosVars}{Xs}


\newcommand{\PolySource}{s}
\newcommand{\PolyTarget}{t}
\newcommand{\Polynome}{f}

\newcommand{\MorphSource}{u}
\newcommand{\MorphTarget}{v}
\newcommand{\MorphPull}{\alpha}
\newcommand{\MorphContra}{\omega}

\newcommand{\todo}[1]{\red{#1}}
\newcommand{\ie}{\emph{i.e.}\xspace}
\newcommand{\eg}{\emph{e.g.}\xspace}

\newtheorem{theorem}{Theorem}
\newtheorem{proposition}{Proposition}
\newtheorem{corollary}{Corollary}
\newtheorem{lemma}{Lemma}

\theoremstyle{definition}
\newtheorem{definition}{Definition}
\newtheorem{example}{Example}
\newtheorem{application}{Application}

\theoremstyle{remark}
\newtheorem{remark}{Remark}

\title{A Categorical Treatment of Ornaments\footnote{Revised \today}}
\author{Pierre-Evariste Dagand
        \and Conor McBride}
\date{}

\begin{document}

\maketitle

\begin{abstract}

Ornaments aim at taming the multiplication of special-purpose datatypes in dependently typed programming languages. In type theory, purpose is logic. By presenting datatypes as the combination of a structure and a logic, ornaments relate these special-purpose datatypes through their common structure. In the original presentation, the concept of ornament was introduced concretely for an example universe of inductive families in type theory, but it was clear that the notion was more general. This paper digs out the abstract notion of ornaments in the form of a categorical model. As a necessary first step, we abstract the universe of datatypes using the theory of polynomial functors. We are then able to characterise ornaments as cartesian morphisms between polynomial functors. We thus gain access to powerful mathematical tools that shall help us understand and develop ornaments. We shall also illustrate the adequacy of our model. Firstly, we rephrase the standard ornamental constructions into our framework. Thanks to its conciseness, this process gives us a deeper understanding of the structures at play. Secondly, we develop new ornamental constructions, by translating categorical structures into type theoretic artefacts. 

\end{abstract}


The theory of inductive types is generally understood as the study of
initial algebras in some appropriate category. A datatype definition
is therefore abstracted away as a signature functor that admits a
least fixpoint. This naturally leads to the study of polynomial
functors~\citep{gambino:poly-monads}, a class of functors that all
admit an initial algebra. These functors have been discovered and
studied under many guises. In type theory, they were introduced by
Martin-L\"{o}f under the name of \emph{well-founded
  trees}~\citep{martin-lof:itt,moerdijk:cat-w-trees,petersson:iw-types},
or W-types for short. Containers~\citep{abbott:containers} and their
indexed counterparts~\citep{morris:PhD} generalise these definitions to
a fibrational setting. Polynomial
functors~\citep{hyland:poly-fun,gambino:poly-monads} are the category
theorists' take on containers, working in a locally cartesian-closed
category.


There is a significant gap between this unified theoretical framework
and the implementations of inductive types: in systems such as
Coq~\citep{coq} or Agda~\citep{norell:agda}, datatypes are purely
syntactic artefacts. A piece of software, the positivity checker, is
responsible for checking that the definition entered by the user is
valid, \ie does not introduce a paradox. The power of the positivity
checker depends on the bravery of its implementers: for instance,
Coq's positivity checker is allegedly simple, therefore safer, but
rather restrictive. On the other hand, Agda's positivity checker is
more powerful, hence more complex, but also less trusted. For example,
the latter checks the positivity of functions in datatype
declarations, while the former conservatively rejects them. The more
powerful the positivity checker, the harder it is to relate the
datatype definitions to some functorial model.


An alternative presentation of inductive types is through a universe
construction~\citep{martin-lof:itt,dybjer:families,dagand:levitation}. The
idea is to reflect the grammar of polynomial functors into type theory
itself. Having internalised inductive types, we can formally
manipulate them and, for example, create new datatypes from old. The
notion of ornament~\citep{mcbride:ornament} is an illustration of this
approach. Ornaments arise from the realisation that inductive families
can be understood as the integration of a data-\emph{structure}
together with a data-\emph{logic}. The structure captures the dynamic,
operational behavior expected from the datatype. It corresponds to,
say, the choice between a list or a binary tree, which is governed by
performance considerations. The logic, on the other hand, dictates
the static invariants of the datatype. For example, by indexing lists
by their length, thus obtaining vectors, we integrate a logic of
length with the data. We can then take an \(m\times n\) matrix to be a
plainly rectangular \(m\)-vector of \(n\)-vectors, rather than a list
of lists together with a proof that measuring each length yields the
same result.

In dependent type theory, logic is purpose: when solving a problem, we
want to bake the problem's invariants into the datatype we
manipulate. Doing so, our code is correct by construction. The same
data-structure will be used for different purposes and will therefore
integrate as many logics: we assist to a multiplication of datatypes,
each built upon the same structure. This hinders any form of code
reuse and makes libraries next to pointless: every task requires us to
duplicate entire libraries for our special-purpose datatypes.


Ornaments tame this issue by organising datatypes along their
structure: given a datatype, an ornament gives an effective recipe to
extend -- introducing more information -- and refine -- providing a
more precise indexing -- the initial datatype. Applying that recipe
gives birth to a new datatype that \emph{shares the same structure} as
the original datatype. Hence, ornaments let us evolve datatypes with
some special-purpose logic without severing the structural ties
between them. In an earlier work\citep{dagand:fun-orn}, we have shown
how that information can be used to regain code reuse.


The initial presentation of ornaments and its subsequent
incarnation~\citep{mcbride:ornament,dagand:fun-orn} are however very
syntactic and tightly coupled with their respective universe of
datatypes. We are concerned that their syntactic nature obscures the
rather simple intuition governing these definitions. In this paper, we
give a semantic account of ornaments, thus exhibiting the underlying
structure of the original definitions. To do so, we adopt a
categorical approach and study ornaments in the framework of
polynomial functors. Our contributions are the
following:

\begin{itemize}


\item In Section~\ref{sec:desc-poly}, we formalise the connection
  between a universe-based presentation of datatypes and the theory of
  polynomial functors. In particular, we prove that the functors
  represented by our universe are equivalent to polynomial
  functors. This key result lets us move seamlessly from our concrete
  presentation of datatypes to the more abstract polynomial functors.


\item In Section~\ref{sec:orn}, we give a categorical presentation of
  ornaments as cartesian morphisms of polynomial functors. This
  equivalence sheds some light on the original definition of
  ornaments. It also connects ornaments to a mathematical object that
  has been widely studied: we can at last organise our universe of
  datatypes and ornaments on them into a category -- in fact a framed
  bicategory~\citep{shulman:framed-bicat} -- and start looking for
  categorical structures that would translate into interesting type
  theoretic objects.


\item In Section~\ref{sec:orn-structure}, we investigate the
  categorical structure of ornaments. The contribution here is
  twofold. On one hand, we translate the original, type theoretic
  constructions -- such as the ornamental algebra and the algebraic
  ornament -- in categorical terms and uncover the building blocks out
  of which they were carved out. On the other hand, we interpret the
  mathematical properties of ornaments into type theory -- such as the
  pullback of ornaments and the ornamentation of derivatives -- to
  discover meaningful software artefacts that were previously unknown.

\end{itemize}


Being at the interface between type theory and category theory, this
paper targets both communities. To the type theorist, we offer a more
semantic account of ornaments and use the intuition thus gained to
introduce new type theoretic constructions. Functional programmers
of the non-dependent kind will find a wealth of examples that should
help them grasp the more abstract concepts, both type theoretic and
category theoretic. To the category theorist,
we present a type theory, \ie a programming language, that offers an
interesting playground for categorical ideas. Our approach can be
summarised as \emph{categorically structured programming}. For
practical reasons, we do not work on categorical objects directly:
instead, we materialise these concepts through universes, thus
reifying categorical notions through computational objects. Ornaments
are merely an instance of that interplay between a categorical concept
-- cartesian morphism of polynomial functor -- and an effective, type
theoretic presentation -- the universe of ornaments. To help bridge
the gap between type theory and category theory, we have striven to
provide the type theorist with concrete examples of the categorical
notions and the category theorist with the computational intuition
behind the type theoretic objects.


\section{Categorical Toolkit}

In this section, we recall a few definitions and results from category
theory that will be used throughout this paper. None of these results
are new -- most of them are folklore -- we shall therefore not dwell
on the details. However, to help readers not familiar with these
tools, we shall give many examples, thus providing an intuition for
these concepts.


\subsection{Locally cartesian-closed categories}


Locally cartesian-closed categories (LCCC) were introduced by
\citet{seely:lccc} to give a categorical model of (extensional)
dependent type theory. A key idea of that presentation is the use of
adjunctions to model \(\Pi\)-types and \(\Sigma\)-types. 


\begin{definition}[Locally cartesian-closed category]

A locally cartesian-closed category is a category \(\LCCC{E}\) that is
pullback complete and such that, for \(\TypeAnn{f}{\Hom{\LCCC{E}}{X}{Y}}\), each
base change functor \(\TypeAnn{\Reindex{f}}{\Slice{\LCCC{E}}{Y} \to
  \Slice{\LCCC{E}}{X}}\), defined by pullback along \(f\), has a right
adjoint \(\CatForall{f}\).

\end{definition}

Throughout this paper, we work in a locally cartesian-closed category
\(\LCCC{E}\) with a terminal object \(\Terminal[\LCCC{E}]\) and sums. An object
\(\TypeAnn{f}{E \to I}\) in the slice \(\Slice{\LCCC{E}}{I}\) can be
thought of as an \(I\)-indexed set, which we shall informally denote
using a set comprehension notation
\(\ISet{E_i}{i \in I}\),
where \(E_i\) can be understood as the inverse image of \(f\) at
\(i\), or equivalently a fibre of the discrete fibration \(f\).

By construction, the base change functor has a left adjoint
\(\CatExists{f} = f \circ \_\). We therefore have the adjunctions
\[
    \CatExists{f} \leftAdj \Reindex{f} \leftAdj \CatForall{f}    
\]


Using a set theoretic notation, the base change functor writes as
reindexing by \(f\):
\[\Code{
\TypeAnn{\Reindex{f}}{\Slice{\LCCC{E}}{B} \to \Slice{\LCCC{E}}{A}} \\
\Reindex{f}\: \ISet{E_b}{b \in B} 
    \mapsto \ISet{E_{f a}}{a \in A}
}\]
While the left and right adjoints correspond to the following
definitions:
\[
\Code{
\TypeAnn{\CatExists{f}}{\Slice{\LCCC{E}}{A} \to \Slice{\LCCC{E}}{B}} \\
\CatExists{f}\: \ISet{E_a}{a \in A} 
    \mapsto \ISet{\sum_{a \in A_{b}} E_a}{b \in B}
}
\qquad
\Code{
\TypeAnn{\CatForall{f}}{\Slice{\LCCC{E}}{A} \to \Slice{\LCCC{E}}{B}} \\
    \CatForall{f}\: \ISet{E_a}{a \in A} \mapsto 
        \ISet{\prod_{a \in A_{b}} E_a}{b \in B}
}
\]
Where \(\sum\) and \(\prod\) respectively represent the (set
theoretic) disjoint union and cartesian product. Details of this construction
can be found elsewhere~\citep{maclane:sheaves}.


\if 0


\[
\begin{tikzpicture}
\matrix (m) [matrix of math nodes
            , row sep=3em
            , column sep=5em
            , text height=1.5ex
            , text depth=0.25ex
            , ampersand replacement=\&]
{ 
    U    \&  A \\
    B    \&  C \\
};
\path[->] 
   (m-1-1) edge node[above] { \(g\) } (m-1-2)
   (m-1-1) edge node[left]  { \(u\) } (m-2-1)
   (m-1-2) edge node[right] { \(v\) } (m-2-2)
   (m-2-1) edge node[below] { \(f\) } (m-2-2)
;
\begin{scope}[shift=($(m-1-1)!.25!(m-2-2)$)]
\draw +(-.25,0) -- +(0,0)  -- +(0,.25);
\end{scope}
\end{tikzpicture}
\]

\[
\CatExists{g} \Reindex{u} \Iso \Reindex{u} \CatExists{f}
\qquad
\CatForall{g} \Reindex{u} \Iso \Reindex{u} \CatForall{f}
\]

\todo{Prove Beck-Chevalley}

\fi


\if 0

\[
\begin{tikzpicture}
\matrix (m) [matrix of math nodes
            , row sep=3em
            , column sep=5em
            , text height=1.5ex
            , text depth=0.25ex
            , ampersand replacement=\&]
{ 
         \&  N    \&  M \\
    C    \&       \&    \\  
         \&  B    \&  A \\
};
\path[->] 
   (m-1-2) edge node[above] { \(g\) } (m-1-3)
   (m-1-2) edge node[left]  { \(\Reindex{f}\:v\) } (m-3-2)
   (m-1-3) edge node[right] { \(\CatForall{f}\:u\) } (m-3-3)
   (m-3-2) edge node[below] { \(f\) } (m-3-3)
   (m-1-2) edge node[left]  { \(e\) } (m-2-1)
   (m-2-1) edge node[right] { \(u\) } (m-3-2)
;
\begin{scope}[shift=($(m-1-2)!.25!(m-3-3)$)]
\draw +(-.25,0) -- +(0,0)  -- +(0,.25);
\end{scope}
\end{tikzpicture}
\]

\[
\CatForall{f} \CatExists{u} \Iso \CatExists{v} \CatForall{g} \Reindex{e}
\]

\todo{Prove dependent choice.}

\fi


The internal language of \(\LCCC{E}\) corresponds to an extensional
type theory denoted \(\Set\), up to bureaucracy~\citep{curien:lccc-bureaucracy}. This
type theory comprises a unit type \(\Unit\), \(\Sigma\)-types,
\(\Pi\)-types, and equality is extensional. Syntactically, this type
theory is specified by the following judgments:
\[
\begin{array}{c@{\quad}c@{\quad}c}
\mbox{Formation rules:} 
&
\mbox{Introduction rules:}
&
\mbox{Elimination rules:}
\\
\Axiom{\TypeJudgment{\Gamma}{\Unit}{\Set}} 
&
\Axiom{\TypeJudgment{\Gamma}{\Void}{\Unit}} 
&
\\
\\
\Rule{\Code{
        \TypeJudgment{\Gamma}{\Meta{A}}{\Set}  \\
        \TypeJudgment{\Gamma}{\Meta{B}}{\Set}}}
     {\TypeJudgment{\Gamma}{\Meta{A} \Sum \Meta{B}}{\Set}}
&
\begin{array}[b]{@{}l}
\Rule{\TypeJudgment{\Gamma}{\Meta{a}}{\Meta{A}}}
     {\TypeJudgment{\Gamma}{\InjLeft[\Meta{a}]}{\Meta{A}}} \\
\Rule{\TypeJudgment{\Gamma}{\Meta{b}}{\Meta{B}}}
     {\TypeJudgment{\Gamma}{\InjRight[\Meta{b}]}{\Meta{B}}}
\end{array}
&
\Rule{\Code{
       \TypeJudgment{\Gamma}{\Meta{f}}{\Meta{A} \To \Meta{C}} \\
       \TypeJudgment{\Gamma}{\Meta{g}}{\Meta{B} \To \Meta{C}} \\
       \TypeJudgment{\Gamma}{\Meta{x}}{\Meta{A} \Sum \Meta{B}}}}
     {\TypeJudgment{\Gamma}{\SumSplit{\Meta{f}}{\Meta{g}}[\Meta{x}]}{\Meta{C}}}
\\
\\
\Rule{\Code{
       \TypeJudgment{\Gamma}{\Meta{S}}{\Set} \\
       \TypeJudgment{\Gamma ; \TypeAnn{x}{\Meta{S}}}{\Meta{T}}{\Set}}}
     {\TypeJudgment{\Gamma}{\SigmaTimes{\Var{x}}{\Meta{S}}{\Meta{T}}}{\Set}} 
&
\Rule{\Code{
        \TypeJudgment{\Gamma}{\Meta{a}}{\Meta{S}} \\
        \TypeJudgment{\Gamma}{\Meta{b}}{\Meta{T}[\Meta{a}/\Var{x}]}}}
     {\TypeJudgment{\Gamma}{\Pair{\Meta{a}}{\Meta{b}}}{\SigmaTimes{\Var{x}}{\Meta{S}}{\Meta{T}}}} 
&
\begin{array}[b]{@{}l}
\Rule{\TypeJudgment{\Gamma}{p}{\SigmaTimes{\X}{S}{T}}}
     {\TypeJudgment{\Gamma}{\Fst[p]}{S}}
\\
\Rule{\TypeJudgment{\Gamma}{p}{\SigmaTimes{\X}{S}{T}}}
     {\TypeJudgment{\Gamma}{\Snd[p]}{T[\Fst[p]/\X]}}
\end{array}
\\
\\
\Rule{\Code{
        \TypeJudgment{\Gamma}{S}{\Set}\\
        \TypeJudgment{\Gamma ; \XS}{T}{\Set}}}
     {\TypeJudgment{\Gamma}{\PiTo{\X}{S} T}{\Set}}
&
\Rule{\Code{
       \TypeJudgment{\Gamma}{S}{\Set} \\
       \TypeJudgment{\Gamma ; \XS}{t}{T}
     }}
     {\TypeJudgment{\Gamma}{\LamAnn{\X}{S} t}{\PiTo{\X}{S} T}}
&
\Rule{\Code{
       \TypeJudgment{\Gamma}{f}{\PiTo{\X}{S} T} \\
       \TypeJudgment{\Gamma}{s}{S}
      }}
     {\TypeJudgment{\Gamma}{f\: s}{T[s/\X]}}
\end{array}
\]
We chose to work in an extensional model for simplicity. However, all
the constructions presented in this paper have been modelled in Agda,
an intuitionistic type theory.


\subsection{Polynomials and polynomial functors}


Polynomials~\citep{hyland:poly-fun,gambino:poly-monads} provide a
categorical model for indexed families~\citep{dybjer:families} in a
LCCC. Polynomials themselves are small, diagrammatic objects that
admit a rich categorical structure. They are then \emph{interpreted}
as strong functors -- the polynomial functors -- between slices of
\(\LCCC{E}\). In this section, we shall illustrate the categorical
definitions with the corresponding notion on (indexed)
container~\citep{petersson:iw-types,hancock:is,morris:PhD}, an
incarnation of polynomials in the internal language \(\Set\).


\newcommand{\PolyDiag}[7]{#5 \stackrel{#4}{\longleftarrow} 
                          #2 \stackrel{#1}{\longrightarrow} #3
                          \stackrel{#6}{\longrightarrow} #7}

\begin{definition}[Polynomial~{\citep[\S{1.1}]{gambino:poly-monads}}]
  \label{def:poly}

A polynomial is the data of 3 morphisms \(\TypeAnn{\Polynome}{B \to
  A}\), \(\TypeAnn{\PolySource}{B \to I}\), and
\(\TypeAnn{\PolyTarget}{A \to J}\) in \(\LCCC{E}\). Conventionally, a
polynomial is diagrammatically represented by
\(
  \PolyDiag{\Polynome}{B}{A}
           {\PolySource}{I}
           {\PolyTarget}{J}
\).

\end{definition}


\newcommand{\IContainer}[3]{#1 {\mathop{\red{\lhd}}}^{#3} #2}
\Spacedcommand{\ICont}{\mathit{ICont}}
\newcommand{\IContainerMor}[3]{#1 \mathop{\red{\LHD}} #2}
\newcommand{\IContainerArr}[2]{\overset{#1}{\underset{#2}{\mathop{\blue{\Longrightarrow}}}}}

\begin{application}[Container]

In type theory, it is more convenient to work with (proof relevant)
predicates rather than arrows. Hence, inverting the arrow
\(\TypeAnn{\PolyTarget}{A \to J}\), we obtain a predicate
\(\TypeAnn{\Shape}{J \to \Set}\) -- called the
\emph{shapes}. Similarly, inverting \(\TypeAnn{\Polynome}{B \to A}\),
we obtain a predicate \(\TypeAnn{\Position}{\forall j.\: \Shape\: j
  \to \Set}\) -- called the \emph{positions}. The indexing map \(s\)
remains unchanged but, following conventional notation, we rename it
\(\Next\) -- the \emph{next index} function. We obtain the following
definition:
\[
\left\{
\begin{array}{@{}l@{\:}l}
\TypeAnn{\Var{\Shape}}{& \Var{J} \To \Set}\\
\TypeAnn{\Var{\Position}}{& 
                    \Var{\Shape}\:\Var{j} \To \Set}\\
\TypeAnn{\Var{\Next}}{& 
              \Var{\Position}\:\Var{\aShape} \To \Var{I}}
\end{array}
\right.
\]
Note that, to remove clutter, we (implicitly) universally quantify
unbound type variables, such as \(j\) in the definition of
\(\Position\) or \(\aShape\) in the definition of \(\Next\).  The data
of \(\Shape\), \(\Position\), and \(\Next\) is called a
\emph{container} and is denoted
\(\IContainer{\Meta{\Shape}}{\Meta{\Position}}{\Meta{\Next}}\). The
class of containers indexed by \(I\) and \(J\) is denoted
\(\ICont[\Meta{I}\: \Meta{J}]\). 

\end{application}

\begin{remark}[Intuition]

  Polynomials, and more directly containers, can be understood as
  multi-sorted signatures. The indices specifies the sorts. The shapes
  at a given index specify the set of symbols at that sort. The
  positions specify the arities of each symbol. The next index
  function specifies, for each symbol, the sort of its arguments.

\end{remark}


\begin{definition}[Polynomial functor~{\citep[\S 1.4]{gambino:poly-monads}}]
We \emph{interpret} a polynomial
\(\TypeAnn{F}{\PolyDiag{\Polynome}{B}{A}{\PolySource}{I}{\PolyTarget}{J}}\)
into a functor, conventionally denoted \(P_F\), between slices of
\(\LCCC{E}\) with the construction
\[
P_F \triangleq 
\Slice{\LCCC{E}}{I} \stackrel{\Reindex{\PolySource}}{\longrightarrow}
\Slice{\LCCC{E}}{B} \stackrel{\CatForall{\Polynome}}{\longrightarrow}
\Slice{\LCCC{E}}{A} \stackrel{\CatExists{\PolyTarget}}{\longrightarrow}
\Slice{\LCCC{E}}{J}
\]

A functor \(F\) is called \emph{polynomial} if it is isomorphic to the
interpretation of a polynomial, \ie there exists \(\PolySource\),
\(\Polynome\), and \(\PolyTarget\) such that \(F \Iso
\CatExists{\PolyTarget} \CatForall{\Polynome} \Reindex{\PolySource}\).

\end{definition}


\newcommandx{\InterpretIS}[2][2=\!]{\green{\llbracket} #1 \green{\rrbracket}_{\green{\CN{Cont}}} \xspace\:#2}

\begin{application}[Interpretation of container]

Unfolding this definition in the internal language, we interpret a
container as, first, a choice (\(\Sigma\)-type) of shape ; then, for
each (\(\Pi\)-type) position, a variable \(X\) taken at the next index
\(n\) for that position:
\[\Code{
\Let{\InterpretIS{\PiTel{\Var{\aContainer}}
                        {\ICont[\Var{I}\: \Var{J}]}} & 
                  \PiTel{\Var{X}}{\Var{I} \To \Set}}
    {\Var{J} \To \Set}
    {}\\
\Case{
\Return{\InterpretIS{\IContainer{\Var{\Shape}}
                                {\Var{\Position}}
                                {\Var{\Next}}}\:
        \Var{X}}
       {\Lam{\Var{j}}
        \SigmaTimes{\Var{\aShape}}{\Var{\Shape}\: \Var{j}}
                   {(\PiTo{\Var{\aPosition}}
                          {\Var{\Position}\: \Var{\aShape}}
                          {\Var{X}\: (\Var{\Next}\: \Var{\aPosition})})}}
}}
\]
hence justifying the name \emph{polynomial functor}: a polynomial
interprets into an \(\Shape\)-indexed sum of monomials \(X\) taken at
some exponent \(\TypeAnn{\aPosition}{\Position\: \aShape}\), or put
informally:
\[
\InterpretIS{\IContainer{\Var{\Shape}}{\Var{\Position}}{\Var{\Next}}}
            [\ISet{X_i}{i \in I}]
    \mapsto
        \ISet{\sum_{\aShape \in \Var{\Shape}_{j}} 
              \prod_{\aPosition \in \Var{\Position}_{\aShape}} 
              X_{\Var{\Next}\: \aPosition}}
             {j \in J}
\]

\end{application}


\Spacedcommand{\NatICont}{\Function{NatCont}}
\Spacedcommand{\NatShape}{\Function{\Shape}_{\green{\CN{Nat}}}}
\Spacedcommand{\NatPos}{\Function{\Position}_{\green{\CN{Nat}}}}
\Spacedcommand{\NatNext}{\Function{\Next}_{\green{\CN{Nat}}}}

\Spacedcommand{\ListICont}{\Function{ListCont}}
\Spacedcommand{\ListShape}{\Function{\Shape}_{\green{\CN{List}}}}
\Spacedcommand{\ListPos}{\Function{\Position}_{\green{\CN{List}}}}
\Spacedcommand{\ListNext}{\Function{\Next}_{\green{\CN{List}}}}

\Spacedcommand{\VecICont}{\Function{VecCont}}
\Spacedcommand{\VecShape}{\Function{\Shape}_{\green{\CN{Vec}}}}
\Spacedcommand{\VecPos}{\Function{\Position}_{\green{\CN{Vec}}}}
\Spacedcommand{\VecNext}{\Function{\Next}_{\green{\CN{Vec}}}}

\begin{figure}[!t]

\centering
\subfloat[][Natural number]{
\footnotesize
\(\Code{
  \NatICont \triangleq \medskip\\
    \:\:
    \left\{
    \Code{
    \Let{\NatShape & \PiTel{\Void}{\Unit}}{\Set}{
      \Return{\NatShape & \Void}{\Unit \Sum \Unit}} 
    \medskip\\
    \Let{\NatPos & \PiTel{\Var{\aShape}}{\NatShape[\Void]}}{\Set}{
      \Return{\NatPos & (\InjLeft[\Void])}{\Empty}
      \Return{\NatPos & (\InjRight[\Void])}{\Unit}} 
    \medskip\\
    \Let{\NatNext & \PiTel{\Var{\aPosition}}{\NatPos[\Var{\aShape}]}}
        {\Unit}{
      \Return{\NatNext & \Var{\aPosition}}{\Void}}
    }
    \right.
}\) \label{fig:nat-cont}
}
\subfloat[][List]{
\footnotesize
\(\Code{
\ListICont_{\Meta{A}} \triangleq \medskip \\
    \:\:
    \left\{
    \Code{
      \Let{\ListShape & \PiTel{\Void}{\Unit}}{\Set}{
      \Return{\ListShape & \Void}{\Unit \Sum \Meta{A}}} 
      \medskip \\
      \Let{\ListPos & \PiTel{\Var{\aShape}}{\ListShape[\Void]}}{\Set}{
      \Return{\ListPos & (\InjLeft[\Void])}{\Empty}
      \Return{\ListPos & (\InjRight[\Var{a}])}{\Unit}} 
      \medskip \\
      \Let{\ListNext & \PiTel{\Var{\aPosition}}{\ListPos[\Var{\aShape}]}}
          {\Unit}{
      \Return{\ListNext & \Var{\aPosition}}{\Void}}
    }
    \right.
}\) \label{fig:list-cont}
} 
\subfloat[][Vector]{
\footnotesize
\(\Code{
\VecICont_{\Meta{A}} \triangleq \medskip\\
    \:\:
    \left\{
    \Code{
      \Let{\VecShape & \PiTel{\Var{n}}{\Nat}}{\Set}{
        \Return{\VecShape & \Zero}{\Unit}
        \Return{\VecShape & (\Suc[\Var{n}])}{\Meta{A}}} 
      \medskip \\
      \Let{\VecPos & \PiTel{\Var{n}}{\Nat} 
                   & \PiTel{\Var{\aShape}}{\VecShape[\Var{n}]}}
          {\Set}{
        \Return{\VecPos & \Zero & \Void}{\Empty}
        \Return{\VecPos & (\Suc[\Var{n}]) & \Var{a}}{\Unit}} 
      \medskip \\
      \Let{\VecNext & \PiTel{\Var{n}}{\Nat} 
                    & \PiTel{\Var{\aShape}}{\VecShape[\Var{n}]}
                    & \PiTel{\Var{\aPosition}}{\VecPos[\Var{\aPosition}]}}
          {\Nat}{
      \Return{\VecNext & (\Suc[\Var{n}]) & \Var{a} & \Void}{\Var{n}}
      }
    }
    \right.
}\) \label{fig:vec-cont}
}

\caption{Examples of containers}

\end{figure}


\begin{example}[Container: natural number]
  \label{example:NatICont}

Natural numbers are described by the signature functor \(X \mapsto 1 +
X\). The corresponding container is given in
\figurename~\ref{fig:nat-cont}. There are two shapes, one to represent the
\(\Zero\) case, the other to represent the successor case,
\(\Suc\). For the positions, none is offered by the \(\Zero\) shape,
while the \(\Suc\) shape offers one. Note that the signature functor
is not indexed: the container is therefore indexed by the unit set and
the next index is trivial.

\end{example}


\begin{example}[Container: list]
  \label{example:ListICont}

The signature functor describing a list of parameter \(A\) is \(X
\mapsto 1 + A \times X\). The container is presented
\figurename~\ref{fig:list-cont}. Note the similarity with natural
numbers. There are \(1 + A\) shapes, \ie either the empty list
\(\Nil\) or the list constructor \(\Cons\) of some
\(\TypeAnn{a}{A}\). There are no subsequent position for the \(\Nil\)
shape, while one position is offered by the \(\Cons\) shapes. Indices
are trivial, for lists are not indexed.

\end{example}


\begin{example}[Container: vector]
  \label{example:VecICont}

To give an example of an indexed datatype, we consider vectors, \ie
lists indexed by their length. The signature functor of vectors is
given by
\(\ISet{X_n}{n \in \Nat} \mapsto 
    \ISet{n \PropEqual \Zero}{n \in \Nat} +
    \ISet{A \times X_{n-1}}{n \in \Nat^*}
\)
where the empty vector \(\Nil\) requires the length \(n\) to be
\(\Zero\), while the vector constructor \(\Cons\) must have a length
\(n\) of at least one and takes its recursive argument \(X\) at index
\(n - 1\).
The container representing this signature is given
\figurename~\ref{fig:vec-cont}. At index \(\Zero\), only the \(\Nil\) shape
is available while index \(\Suc[\Var{n}]\) offers a choice of
\(\TypeAnn{a}{\Meta{A}}\) shapes. As for lists, the \(\Nil\) shape has
no subsequent position while the \(\VCons\) shapes offer one. It is
necessary to compute the next index (\ie the length of the tail) only
when the input index is \(\Suc[n]\), in which case the next index is
\(n\).

\end{example}

We leave it to the reader to verify that the interpretation of
\(\NatICont\) (Example~\ref{example:NatICont}), \(\ListICont\)
(Example~\ref{example:ListICont}), and \(\VecICont\)
(Example~\ref{example:VecICont}) are indeed equivalent to the
signature functors we aimed at representing. With this exercise, one
gains a better intuition of the respective contribution of shapes,
positions, and the next index to the encoding of signature functors.


\begin{definition}[Polynomial morphism~{\citep[\S{3.8}]{gambino:poly-monads}}]
  \label{def:poly-morphism}

A morphism from
\(\TypeAnn{F}{\PolyDiag{\Polynome'}{B}{A}{\PolySource'}{I}{\PolyTarget'}{J}}\)
to
\(\TypeAnn{G}{\PolyDiag{\Polynome}{D}{C}{\PolySource}{K}{\PolyTarget}{L}}\)
is uniquely represented -- up to the choice of pullback -- by the
diagram:
\[
\begin{tikzpicture}
\matrix (m) [matrix of math nodes
            , row sep=2em
            , column sep=5em
            , text height=1.5ex
            , text depth=0.25ex
            , ampersand replacement=\&]
{ 
    I  \& B  \& A  \& J \\
       \& D' \& A  \&    \\  
    K  \& D  \& C  \& L \\
};
\path[->] 
   (m-1-2) edge node[above] {\(\PolySource'\)} (m-1-1)
   (m-1-2) edge node[above] {\(\Polynome'\)} (m-1-3)
   (m-1-3) edge node[above] {\(\PolyTarget'\)} (m-1-4)
   (m-3-2) edge node[below] {\(\PolySource\)} (m-3-1)
   (m-3-2) edge node[below] {\(\Polynome\)} (m-3-3)
   (m-3-3) edge node[below] {\(\PolyTarget\)} (m-3-4)
   (m-1-1) edge node[left] { \(\MorphSource\) } (m-3-1)
   (m-1-4) edge node[right] { \(\MorphTarget\) } (m-3-4)
   (m-2-2) edge (m-3-2)
   (m-2-2) edge (m-2-3)
   (m-2-3) edge node[right] { \(\MorphPull\) } (m-3-3)
   (m-2-2) edge node[left] { \(\MorphContra\) } (m-1-2)
   (m-1-3) edge[double,-] (m-2-3)
;
\begin{scope}[shift=($(m-2-2)!.25!(m-3-3)$)]
\draw +(-.25,0) -- +(0,0)  -- +(0,.25);
\end{scope}
\end{tikzpicture}
\]

\end{definition}


\begin{example}[Container morphism]

Let \(\TypeAnn{\Var{\MorphSource}}{\Var{I} \To \Var{K}}\) and
\(\TypeAnn{\Var{\MorphTarget}}{\Var{J} \To \Var{L}}\). A morphism from
a container
\(\IContainer{\Var{\Shape'}}{\Var{\Position'}}{\Var{\Next'}}\) to a
container \(\IContainer{\Var{\Shape}}{\Var{\Position}}{\Var{\Next}}\)
framed by \(\MorphSource\) and \(\MorphTarget\) is given by two
functions and a coherence condition:
\[
\left\{
\begin{array}{@{}l@{\:}l}
\TypeAnn{\Var{\MorphShape}}
        {& 
          \Var{\Shape'}\: \Var{j} \To 
          \Var{\Shape}\: (\Var{\MorphTarget}\: \Var{j})} \\
\TypeAnn{\Var{\MorphPosition}}
        {& 
           \Var{\Position}\: (\Var{\MorphShape}\: \Var{\aShape'}) \To 
           \Var{\Position'}\: \Var{\aShape'}} \\
\TypeAnn{\Var{\MorphCoe}}
        {& \Forall{\Var{\aShape'}}
                  {\Var{\Shape'}\: \Var{j}}
           \Forall{\Var{\aPosition}}
                  {\Var{\Position}\: (\Var{\MorphShape}\: \Var{\aShape'})}
           \Var{\MorphSource}\: (\Var{\Next'}\: (\Var{\MorphPosition}\: \Var{\aPosition}))
           \PropEqual
           \Var{\Next}\: \Var{\aPosition}}
\end{array}
\right.
\]
Remark that \(\MorphShape\) and \(\MorphPosition\) correspond exactly
to their diagrammatic counterparts, respectively \(\MorphPull\) and
\(\MorphContra\), while the coherence condition \(\MorphCoe\) captures
the commutativity of the left square. Commutativity of the right
square is ensured by construction, since we reindex \(\Shape\) by
\(\MorphTarget\) in the definition of \(\MorphShape\).

A container morphism, \ie the data \(\MorphShape\),
\(\MorphPosition\), and \(\MorphCoe\), is denoted
\(\IContainerMor{\Meta{\MorphShape}}{\Meta{\MorphPosition}}{\Meta{\MorphCoe}}\)
(leaving implicit the coherence condition). The hom-set of morphisms
from \(\IContainer{\Var{\Shape'}}{\Var{\Position'}}{\Var{\Next'}}\) to
\(\IContainer{\Var{\Shape}}{\Var{\Position}}{\Var{\Next}}\) framed by
\(\MorphSource\) and \(\MorphTarget\) is denoted
\(\IContainer{\Var{\Shape'}}{\Var{\Position'}}{\Var{\Next'}}
\IContainerArr{\Var{\MorphSource}}{\Var{\MorphTarget}}
\IContainer{\Var{\Shape}}{\Var{\Position}}{\Var{\Next}}\).

\end{example}


In this paper, we are particularly interested in a sub-class of
polynomial morphisms: the class of cartesian morphisms. Cartesian
morphisms represent only cartesian natural
transformations -- \ie for which the naturality
square forms a pullback.
\begin{definition}[Cartesian morphism~{\citep[\S 3.14]{gambino:poly-monads}}]
  \label{def:poly-cart-morphism}

A cartesian morphism from
\(\TypeAnn{F}{\PolyDiag{\Polynome'}{B}{A}{\PolySource'}{I}{\PolyTarget'}{J}}\)
to
\(\TypeAnn{G}{\PolyDiag{\Polynome}{D}{C}{\PolySource}{K}{\PolyTarget}{L}}\)
is uniquely represented by the diagram:
\[
\begin{tikzpicture}
\matrix (m) [matrix of math nodes
            , row sep=3em
            , column sep=5em
            , text height=1.5ex
            , text depth=0.25ex
            , ampersand replacement=\&]
{ 
    I  \& B  \& A  \& J \\
    K  \& D  \& C  \& L \\
};
\path[->] 
   (m-1-2) edge (m-1-1)
   (m-1-2) edge (m-1-3)
   (m-1-3) edge (m-1-4)
   (m-2-2) edge (m-2-1)
   (m-2-2) edge (m-2-3)
   (m-2-3) edge (m-2-4)
   (m-1-1) edge node[left] { \(\MorphSource\) } (m-2-1)
   (m-1-4) edge node[right] { \(\MorphTarget\) } (m-2-4)
   (m-1-2) edge (m-2-2)
   (m-1-2) edge (m-1-3)
   (m-1-3) edge node[right] { \(\MorphPull\) } (m-2-3)
;
\begin{scope}[shift=($(m-1-2)!.25!(m-2-3)$)]
\draw +(-.25,0) -- +(0,0)  -- +(0,.25);
\end{scope}
\end{tikzpicture}
\]
Where the \(\MorphPull\) is pulled back along \(\Polynome\), as
conventionally indicated by the right angle symbol.

\end{definition}


\newcommand{\IContainerCartMor}[3]{#1 \mathop{\red{\LHD^{c}}}}
\newcommand{\IContainerCartArr}[2]{\overset{#1}{\underset{#2}{\mathop{\blue{\Longrightarrow^c}}}}}

\begin{application}[Cartesian morphism of containers]

In the internal language, a cartesian morphism from
\(\IContainer{\Meta{\Shape'}}{\Meta{\Position'}}{\Meta{\Next'}}\) to
\(\IContainer{\Meta{\Shape}}{\Meta{\Position}}{\Meta{\Next}}\)
framed by \(\MorphSource\) and \(\MorphTarget\) corresponds to the
triple:
\[
\left\{
\begin{array}{@{}l@{\:}l}
\TypeAnn{\Var{\MorphShape}}{& \Var{\Shape'}\: \Var{j} \To \Var{\Shape}\: (\Var{\MorphTarget}\: \Var{j})} \\
\TypeAnn{\Var{\MorphPosition}}{& \Forall{\Var{\aShape'}}{\Var{\Shape'}\: \Var{j}} 
                        \Var{\Position}\: (\Var{\MorphShape}\: \Var{\aShape'}) \PropEqual \Var{\Position'}\: \Var{\aShape'}} \\
\TypeAnn{\Var{\MorphCoe}}{& \Forall{\Var{\aShape'}}{\Var{\Shape'}\: \Var{j}} 
                    \Forall{\Var{\aPosition}}{\Var{\Position}\: (\Var{\MorphShape}\: \Var{\aShape'})}
                         \Var{\MorphSource}\: (\Var{\Next'}\: \Var{\aPosition}) \PropEqual \Var{\Next}\: \Var{\aPosition}}
\end{array}
\right.
\]
The diagrammatic morphism \(\MorphPull\) translates into an operation
on shapes, denoted \(\MorphShape\). The pullback condition
translates into a proof \(\MorphPosition\) that the source positions
are indeed obtained by pulling back the target positions along
\(\MorphShape\). As for the indices, the coherence condition
\(\MorphCoe\) captures the commutativity of the left
square. Commutativity of the right square is ensured by construction,
since we reindex \(\Shape\) by \(\MorphTarget\) in the definition of
\(\MorphShape\).

A cartesian morphism is denoted
\(\IContainerCartMor{\Var{\MorphShape}}{\Var{\MorphPosition}}{\Var{\MorphCoe}}\),
leaving implicit the proof obligations. The hom-set of cartesian
morphisms from
\(\IContainer{\Var{\Shape'}}{\Var{\Position'}}{\Var{\Next'}}\) to
\(\IContainer{\Var{\Shape}}{\Var{\Position}}{\Var{\Next}}\) is
denoted \(\IContainer{\Var{\Shape'}}{\Var{\Position'}}{\Var{\Next'}}
\IContainerCartArr{\Var{\MorphSource}}{\Var{\MorphTarget}}
\IContainer{\Var{\Shape}}{\Var{\Position}}{\Var{\Next}}\). Because
polynomials and containers conventionally use different notations, we
sum-up the equivalences in \figurename~\ref{fig:poly-cont}.

\end{application}


\begin{example}[Cartesian morphism]
  \label{example:cart-morph-list-nat}

We build a cartesian morphism from \(\ListICont_{\Meta{A}}\)
(Example~\ref{example:ListICont}) to \(\NatICont\)
(Example~\ref{example:NatICont}) by mapping shapes of
\(\ListICont_{\Meta{A}}\) to shapes of \(\NatICont\):
\[
\Let{\green{\MorphShape} & 
         \PiTel{\Var{\aShape_l}}
               {\ListShape[\Void]}}
    {\NatShape[\Void]}{
  \Return{\green{\MorphShape} & (\InjLeft[\Void])}
         {\InjLeft[\Void] \qquad \CommentLine{\(\Nil\) to \(\Zero\)}}
  \Return{\green{\MorphShape} & (\InjRight[\Var{a}])}
         {\InjRight[\Void] \qquad \CommentLine{\(\Cons[\Var{a}]\) to \(\Suc\)}}
}
\]
We are then left to check that positions are isomorphic: this is
indeed true, since, in the \(\Nil\)/\(\Zero\) case, there is no
position while, in the \(\Cons\)/\(\Suc\) case, there is only one
position. The coherence condition is trivially satisfied, since both
containers are indexed by \(\Unit\). We shall relate this natural
transformation to the function computing the length of a list in
Example~\ref{ex:orn-alg-list}.

\end{example}


\begin{figure}[t]

  \centering
  \begin{tabular}{|c|c|l|}
    \hline
    Polynomial & Container & Obtained by\\
    \hline
    \(\TypeAnn{\PolyTarget}{A \to J}\) & 
    \(\TypeAnn{\Shape}{\Var{J} \To \Set}\) &
    Inverse image 
    \\
    \(\TypeAnn{\Polynome}{B \to A}\) & 
    \(\TypeAnn{\Position}{\Var{\Shape}\:\Var{j} \To \Set}\) &
    Inverse image
    \\
    \(\TypeAnn{\PolySource}{B \to I}\) &
    \(\TypeAnn{\Var{\Next}}{\Var{\Position}\:\Var{\aShape} \To \Var{I}}\) &
    Identity
    \\
    \(\TypeAnn{\MorphPull}{A \to C}\) & 
    \(\TypeAnn{\Var{\MorphShape}}{\Var{\Shape'}\: \Var{j} \To \Var{\Shape}\: (\Var{\MorphTarget}\: \Var{j})}\) &
    Identity \\
    \hline
  \end{tabular}
  
  \caption{Translation polynomial/container}
  \label{fig:poly-cont}

\end{figure}


We have seen that polynomials interpret to (polynomial) functors. We
therefore expect morphisms of polynomials to interpret to natural
transformations between these functors.

\begin{definition}[Interpretation of polynomial morphism~{\citep[\S{2.1} and \S{2.7}]{gambino:poly-monads}}]

For the morphism of polynomials given in
Definition~\ref{def:poly-morphism}, we construct the following natural
transformation:
\[
\begin{tikzpicture}
\matrix (m) [matrix of math nodes
            , row sep=3em
            , column sep=5em
            , text height=1.5ex
            , text depth=0.25ex
            , ampersand replacement=\&]
{ 
    \SliceL{E}{I}  \& 
        \SliceL{E}{B} \&[-2.5em] 
        \SliceL{E}{B} \&[-2.5em] 
        \SliceL{E}{A} \&[-2.5em] 
                      \&[-2.5em] 
        \SliceL{E}{J} \\
    \SliceL{E}{K} \&
        \SliceL{E}{D'} \& 
                       \&
        \SliceL{E}{A} \&
                      \&
        \SliceL{E}{J} \\
    \SliceL{E}{K} \&
        \SliceL{E}{D} \&
                      \&
        \SliceL{E}{C} \&
        \SliceL{E}{C} \&
        \SliceL{E}{L} \\
};
\node at ($(m-1-1) !.5! (m-2-2)$) {\(\Iso\)};
\node at ($(m-1-3) !.5! (m-2-4)$) {\(\Iso\)};
\node at ($(m-1-4) !.5! (m-2-6)$) {\(\Iso\)};
\node at ($(m-2-1) !.5! (m-3-2)$) {\(\Iso\)};
\node at ($(m-2-2) !.5! (m-3-4)$) {\(\Iso\)};
\node at ($(m-3-5) !.5! (m-2-6)$) {\(\Iso\)};
\node at (barycentric cs:m-1-2=.3,m-1-3=.3,m-2-2=.3) {\(\Downarrow\eta\)};
\node at (barycentric cs:m-2-4=.3,m-3-4=.3,m-3-5=.3) {\(\Downarrow\epsilon\)};
\path[->] 
    (m-1-1) edge node[above] {\(\Reindex{\PolySource'}\)} (m-1-2)
    (m-1-2) edge[-,double] (m-1-3)
    (m-1-3) edge node[above] {\(\CatForall{\Polynome'}\)} (m-1-4)
    (m-1-4) edge node[above] {\(\CatExists{\PolyTarget'}\)} (m-1-6)
    (m-3-1) edge node[below] {\(\Reindex{\PolySource}\)} (m-3-2)
    (m-3-2) edge node[below] {\(\CatForall{\Polynome}\)} (m-3-4)
    (m-3-4) edge[-,double] (m-3-5)
    (m-3-5) edge node[below] {\(\CatExists{\PolyTarget}\)} (m-3-6)
   (m-2-1) edge node[above] {\(\Reindex{\PolySource \circ \MorphPull^\dagger}\)} (m-2-2)
   (m-2-2) edge node[below] {\(\CatForall{{\Polynome}^\dagger}\)} (m-2-4)
   (m-2-4) edge node[above] {\(\CatExists{\PolyTarget}\)} (m-2-6)
   (m-1-1) edge node[left] { \(\CatExists{\MorphSource}\) } (m-2-1)
   (m-2-1) edge[-,double] (m-3-1)
   (m-1-6) edge[-,double] (m-2-6)
   (m-2-6) edge node[right] { \(\CatExists{\MorphTarget}\) } (m-3-6)
   (m-3-2) edge node[right] {\(\Reindex{\MorphPull^\dagger}\)} (m-2-2)
   (m-3-4) edge node[left] {\(\Reindex{\MorphPull}\)} (m-2-4)
   (m-2-4) edge node[above right] {\(\CatExists{\MorphPull}\)} (m-3-5)
   (m-1-2) edge node[left] {\(\Reindex{\MorphContra}\)} (m-2-2)
   (m-2-2) edge node[below right] {\(\CatForall{\MorphContra}\)} (m-1-3)
   (m-1-4) edge[-,double] (m-2-4)
;
\end{tikzpicture}
\]

\end{definition}



The diagrammatic construction of the interpretation on morphism is
perhaps intimidating. Its actual simplicity is revealed by containers,
in the internal language.

\begin{example}[Interpretation of container morphism]

A morphism from
\(\IContainer{\Var{\Shape'}}{\Var{\Position'}}{\Var{\Next'}}\) to
\(\IContainer{\Var{\Shape}}{\Var{\Position}}{\Var{\Next}}\) simply
maps shapes \(\Shape'\) to shapes \(\Shape\) covariantly using
\(\MorphShape\) and maps positions \(\Position\) to positions
\(\Position'\) contravariantly using \(\MorphPosition\):
\[\Code{
\Let{\InterpretIS{\PiTel{\Var{m}}
                        {\IContainer{\Var{\Shape'}}{\Var{\Position'}}{\Var{\Next'}}
                          \IContainerArr{\Var{\MorphSource}}{\Var{\MorphTarget}}
                          \IContainer{\Var{\Shape}}{\Var{\Position}}{\Var{\Next}}}} & 
                  \PiTel{\Var{xs}}
                        {\InterpretIS{\IContainer{\Var{\Shape'}}
                                                 {\Var{\Position'}}
                                                 {\Var{\Next'}}}[\Var{X}]}}
    {\InterpretIS{\IContainer{\Var{\Shape}}
                             {\Var{\Position}}
                             {\Var{\Next}}}[\Var{X}]}
    {}\\
\Case{
\Return{\InterpretIS{\IContainerMor{\Var{\MorphShape}}
                                   {\Var{\MorphPosition}}
                                   {\Var{\MorphCoe}}} &
        \Pair{\Var{\aShape'}}{\Var{\PosVars}}}
       {\Pair{\Var{\MorphShape}\: \Var{\aShape'}}
             {\Var{\PosVars} \Compose \Var{\MorphPosition}}}
}
}
\]
Note that, thinking of \(X\) as being parametrically quantified, there
is not much choice anyway: the shapes are in a covariant position while
positions are on the left on an arrow, \ie contravariant position.

\end{example}


\citet{hancock:is} present these morphisms as defining a
(constructive) simulation relation: having a morphism from \(C'
\triangleq
\IContainer{\Var{\Shape'}}{\Var{\Position'}}{\Var{\Next'}}\) to \(C
\triangleq \IContainer{\Var{\Shape}}{\Var{\Position}}{\Var{\Next}}\)
gives you an effective recipe to \emph{simulate} the behavior of
\(C'\) using \(C\): a choice of shape \(\aShape'\) in \(\Shape'\) is
translated to a choice of shape in \(\Shape\) through \(\MorphShape\)
while the subsequent response \(\TypeAnn{\aPosition}{\Position
  (\MorphShape\: \aShape')}\) is back-translated through
\(\MorphPosition\) to a response in \(\Position'\).


We shall need the following lemma that creates polynomial functors
from a cartesian natural transformations to a polynomial functor:
\begin{lemma}[Lemma 2.2~\citep{gambino:poly-monads}]
  \label{lemma:cart-induce-poly}

Let \(\TypeAnn{P_F}{\Slice{\LCCC{E}}{I} \to \Slice{\LCCC{E}}{J}}\) be
a polynomial functor. Let \(Q\) a functor from \(\Slice{\LCCC{E}}{I}\)
to \(\Slice{\LCCC{E}}{J}\).

If \(\TypeAnn{\phi}{Q \NatTrans P_F}\) is a cartesian natural
transformation, then \(Q\) is also a polynomial functor.

\end{lemma}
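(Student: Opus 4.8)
The plan is to reconstruct a polynomial representing $Q$ by evaluating $Q$ at a terminal object and pulling the middle map of $F$ back along the resulting comparison; the isomorphism $Q \Iso P_{F'}$ then reduces to the distributivity of base change past dependent products, \ie the Beck--Chevalley condition, which holds in any LCCC.

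First I would exploit cartesianness at the terminal object. Write $1$ for the terminal object of $\Slice{\LCCC{E}}{I}$ and $!_X : X \to 1$ for the unique morphism. Since $\phi$ is cartesian, the naturality square of $\phi$ along $!_X$ is a pullback in $\Slice{\LCCC{E}}{J}$, so the canonical comparison exhibits $Q(X)$ as $P_F(X) \times_{P_F(1)} Q(1)$, and by functoriality of pullbacks this is natural in $X$. Since $\Reindex{\PolySource}$ and $\CatForall{\Polynome}$ preserve terminal objects, $P_F(1) = \CatExists{\PolyTarget}\CatForall{\Polynome}\Reindex{\PolySource}(1)$ is (isomorphic to) the object $(\PolyTarget : A \to J)$ of $\Slice{\LCCC{E}}{J}$. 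I then name the data that $Q$ contributes: let $(\PolyTarget' : A' \to J)$ be the object $Q(1)$ (so that $A'$ is its domain) and let $\MorphPull : A' \to A$ be the underlying morphism of the comparison $\phi_1 : Q(1) \to P_F(1)$, so $\PolyTarget' = \PolyTarget \circ \MorphPull$.

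Next I would build the candidate polynomial and check that it computes $Q$. Let $\Polynome' : B' \to A'$ together with $\pi : B' \to B$ be a pullback of $\Polynome$ along $\MorphPull$, and put $\PolySource' := \PolySource \circ \pi$; this is a polynomial $F'$ with source $\PolySource' : B' \to I$, middle map $\Polynome' : B' \to A'$, and target $\PolyTarget' : A' \to J$. By functoriality $\Reindex{\PolySource'} \Iso \Reindex{\pi}\Reindex{\PolySource}$ and $\CatExists{\PolyTarget'} \Iso \CatExists{\PolyTarget}\CatExists{\MorphPull}$, while Beck--Chevalley for the pullback square defining $B'$ yields $\CatForall{\Polynome'}\Reindex{\pi} \Iso \Reindex{\MorphPull}\CatForall{\Polynome}$. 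Composing, $P_{F'} = \CatExists{\PolyTarget'}\CatForall{\Polynome'}\Reindex{\PolySource'} \Iso \CatExists{\PolyTarget}\,\CatExists{\MorphPull}\Reindex{\MorphPull}\,\CatForall{\Polynome}\Reindex{\PolySource}$. On the other hand, for each $X$ set $Z := \CatForall{\Polynome}\Reindex{\PolySource}(X)$ in $\Slice{\LCCC{E}}{A}$, so that $P_F(X) = \CatExists{\PolyTarget}(Z)$ and the comparison $P_F(X) \to P_F(1)$ is, under the identifications above, the structure map $Z \to A$ regarded over $J$. Pulling this back along $\MorphPull$ gives precisely $\CatExists{\PolyTarget}\,\CatExists{\MorphPull}\Reindex{\MorphPull}\,(Z)$, and chasing universal properties shows all of these identifications are natural in $X$; hence $Q \Iso P_{F'}$ and $Q$ is polynomial.

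The only genuinely nontrivial ingredient is the Beck--Chevalley isomorphism $\CatForall{\Polynome'}\Reindex{\pi} \Iso \Reindex{\MorphPull}\CatForall{\Polynome}$, which is standard for LCCCs and which I would cite rather than reprove. The rest, and where I expect the real work, is bookkeeping: checking that the string of canonical isomorphisms above is natural in $X$ so that it assembles into a natural isomorphism of functors rather than a mere objectwise family, and keeping track of the choices of pullback implicit in the right-adjoint constructions and in the definition of $B'$ so that the Beck--Chevalley square actually used agrees with the chosen data of $F$.
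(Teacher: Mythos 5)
Your proposal is correct: the paper itself does not prove this lemma but imports it from \citet{gambino:poly-monads}, and your argument is essentially the proof given there --- evaluate the cartesian transformation at the terminal object, pull the middle map $\Polynome$ back along $\phi_1$ to build the new polynomial, and use Beck--Chevalley plus naturality of the pullback squares to identify its extension with $Q$. The only caveat is the bookkeeping you already flag (naturality of the chain of canonical isomorphisms and coherence of pullback choices), which is routine and does not hide a gap.
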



Finally, we shall need the following algebraic characterization of
polynomial functors:

\begin{lemma}[Corollary 1.14~\citep{gambino:poly-monads}]
  \label{lemma:polyfunc-smallest-class}
  The class of polynomial functors is the smallest class of functors
  between slices of \(\LCCC{E}\) containing the pullback functors and
  their adjoints, and closed under composition and natural
  isomorphism.
\end{lemma}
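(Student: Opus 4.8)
The plan is to establish the two inclusions separately. Write \(\mathcal{P}\) for the class of polynomial functors, and \(\mathcal{C}\) for the smallest class of functors between slices of \(\LCCC{E}\) that contains every pullback functor \(\Reindex{u}\) together with its two adjoints \(\CatExists{u}\) and \(\CatForall{u}\), and is closed under composition and natural isomorphism. The inclusion \(\mathcal{P} \subseteq \mathcal{C}\) is essentially immediate: any \(F \in \mathcal{P}\) is, by definition, naturally isomorphic to a composite \(\CatExists{\PolyTarget}\CatForall{\Polynome}\Reindex{\PolySource}\), whose three factors are respectively a left adjoint of a pullback functor, a right adjoint of a pullback functor, and a pullback functor; each lies in \(\mathcal{C}\), their composite lies in \(\mathcal{C}\) by closure under composition, and \(F\) lies in \(\mathcal{C}\) by closure under natural isomorphism.

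For the reverse inclusion \(\mathcal{C} \subseteq \mathcal{P}\), since \(\mathcal{C}\) is by hypothesis the \emph{smallest} class with those closure properties, it suffices to check that \(\mathcal{P}\) itself has them. Closure under natural isomorphism is, again, built into the definition of polynomial functor. That \(\mathcal{P}\) contains the generators is witnessed by trivial polynomials: for \(u : K \to I\) the pullback functor \(\Reindex{u}\) is the interpretation of \(\PolyDiag{\mathrm{id}}{K}{K}{u}{I}{\mathrm{id}}{K}\), using that \(\CatExists{\mathrm{id}}\) and \(\CatForall{\mathrm{id}}\) are naturally isomorphic to identity functors; and for \(u : I \to J\) the functor \(\CatExists{u}\) is the interpretation of \(\PolyDiag{\mathrm{id}}{I}{I}{\mathrm{id}}{I}{u}{J}\) while \(\CatForall{u}\) is the interpretation of \(\PolyDiag{u}{I}{J}{\mathrm{id}}{I}{\mathrm{id}}{J}\).

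What remains -- and what carries the real content of the statement -- is that \(\mathcal{P}\) is closed under composition. Given polynomials \(F = (\PolyDiag{\Polynome}{B}{A}{\PolySource}{I}{\PolyTarget}{J})\) and \(G = (\PolyDiag{\Polynome'}{B'}{A'}{\PolySource'}{J}{\PolyTarget'}{K})\), the composite \(P_G \circ P_F\) unfolds to the six-fold composite \(\CatExists{\PolyTarget'}\CatForall{\Polynome'}\Reindex{\PolySource'}\CatExists{\PolyTarget}\CatForall{\Polynome}\Reindex{\PolySource}\), and the plan is to rewrite it into the canonical form \(\CatExists{t''}\CatForall{f''}\Reindex{s''}\) of a polynomial functor by a finite sequence of three kinds of move: a Beck--Chevalley isomorphism \(\Reindex{a}\CatExists{b} \Iso \CatExists{b'}\Reindex{a'}\) attached to a pullback square (applied first to the pullback of \(\PolyTarget\) along \(\PolySource'\)); its right-adjoint dual \(\Reindex{a}\CatForall{b} \Iso \CatForall{b'}\Reindex{a'}\); and the distributivity law \(\CatForall{a}\CatExists{b} \Iso \CatExists{b'}\CatForall{a'}\Reindex{c}\), obtained from the distributivity (``dependent product'') pullback diagram. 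Between moves, adjacent functors of the same kind are amalgamated via \(\CatExists{a}\CatExists{b} \Iso \CatExists{a \circ b}\), \(\CatForall{a}\CatForall{b} \Iso \CatForall{a \circ b}\) and \(\Reindex{a}\Reindex{b} \Iso \Reindex{b \circ a}\); one checks routinely that finitely many steps bring the composite into canonical form, whence its underlying polynomial -- the sought composite -- can be read off.

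The main obstacle is precisely this last reordering argument: one must verify that the squares used in the two Beck--Chevalley steps really are pullbacks, and that the distributivity diagram exists with the required universal property -- and it is here that local cartesian closure is indispensable, since the distributivity law fails in a merely finitely complete category. As none of this is new -- closure of \(\mathcal{P}\) under composition is itself one of the results of \citet{gambino:poly-monads} that precedes this corollary -- an equally legitimate route is to invoke that result directly, after which the corollary collapses to the two short inclusion arguments above.
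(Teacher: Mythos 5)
Your proposal is correct, but be aware that the paper itself contains no proof of this lemma: it is imported wholesale as Corollary~1.14 of \citet{gambino:poly-monads}, so what you have written is a reconstruction of the cited source's argument rather than an alternative to anything in the paper. As a reconstruction it is faithful and accurate. The easy inclusion is exactly as you say; your representations of the generators as interpretations of degenerate polynomials are correct (for \(u : K \to I\), the diagram \(I \leftarrow K \rightarrow K \rightarrow K\) with identities in the middle and right legs interprets to \(\CatExists{\mathrm{id}}\CatForall{\mathrm{id}}\Reindex{u} \Iso \Reindex{u}\), and similarly for \(\CatExists{u}\) and \(\CatForall{u}\)); closure under natural isomorphism is built into the definition; and you correctly locate all the real content in closure under composition. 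Your normalisation sketch -- Beck--Chevalley for \(\CatExists{}\) and for \(\CatForall{}\) past reindexing, together with the distributivity law \(\CatForall{a}\CatExists{b} \Iso \CatExists{b'}\CatForall{a'}\Reindex{c}\), which is indeed the one step that genuinely needs local cartesian closure -- is precisely how \citet{gambino:poly-monads} prove their composition theorem, and your remark that one may instead invoke that theorem directly is exactly what makes their Corollary~1.14 a corollary. In short: no gap, and your closing observation that the corollary ``collapses to the two short inclusion arguments'' once composition closure is granted matches the logical status the result has in both the paper and its source.
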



\subsection{Framed bicategory}


We have resisted the urge of defining a category of polynomials and
polynomial functors. Such a category can be defined for a given pair
of indices \(I\) and \(J\), with objects being polynomials indexed by
\(I\) and \(J\) (Definition~\ref{def:poly}) and morphisms
(Definition~\ref{def:poly-morphism}) specialised to the case
where \(\TypeAnn{\MorphSource = \CatId}{I \to I}\) and
\(\TypeAnn{\MorphTarget = \CatId}{J \to J}\).

From there, we are naturally lead to organise polynomials and their
indices in a 2-category. However, this fails to capture the fact that
indices have a life of their own: it makes sense to have morphisms
between differently indexed functors, \ie between different slices of
\(\LCCC{E}\). Indeed, morphisms between indices -- the objects --
\emph{induce} 1-morphisms. The 2-categorical presentation does not
capture this extra power.
Following the steps of \citet{gambino:poly-monads}, we organize
polynomials and their functors in a \emph{framed
  bicategory}~\citep{shulman:framed-bicat}. We refer the reader to
that latter paper for a comprehensive presentation of this concept.
%
%
A framed bicategory is a double category with some more structure. We
therefore recall the definition of a double category and give a few
examples.

\begin{definition}[Double category]

A double category \(\Cat{D}\) consists of a category of objects
\(\Cat{D}_0\) and a category of morphisms \(\Cat{D}_1\), together with
structure functors:
\[
\begin{tikzpicture}[description/.style={fill=white,inner sep=2pt}]
\matrix (m) [matrix of math nodes
            , row sep=3em
            , column sep=5em
            , text height=1.5ex
            , text depth=0.25ex
            , ampersand replacement=\&
            ]
{ 
    \Cat{D}_0  \& \Cat{D}_1 \& \Pullback{\Cat{D}_1}{\Cat{D}_0}{\Cat{D}_1} \\
};
\path[->] 
   (m-1-2) edge[bend right=30] node[above] { \(L\) } (m-1-1)
   (m-1-1) edge node[description] { \(U\) } (m-1-2) 
   (m-1-2) edge[bend left=30] node[below] { \(R\) } (m-1-1)
   (m-1-3) edge node[above] { \(\odot\) } (m-1-2)
;
\end{tikzpicture}
\]
Satisfying the usual axioms of categories relating the left frame
\(L\), right frame \(R\), identity \(U\), and composition
\(\odot\).

\end{definition}


\newcommand{\PolyFunc}[1]{{\mathit{PolyFun}}_{#1}}

\begin{example}[Double category \(\PolyFunc{\LCCC{E}}\)~{\citep[\S{3.5}]{gambino:poly-monads}}]

The double category \(\PolyFunc{\LCCC{E}}\) is defined as follow:
\begin{itemize}
\item Objects (\ie objects of \(\Cat{D}_0\)): slices \(\SliceL{E}{I}\)
\item Vertical arrows (\ie morphism of \(\Cat{D}_0\)): colift
  \(\TypeAnn{\CatExists{u}}{\SliceL{E}{I} \to \SliceL{E}{K}}\), for
  \(\TypeAnn{u}{I \to K}\) in \(\LCCC{E}\)

\item Horizontal arrows (\ie objects of \(\Cat{D}_1\)) with left frame
  \(\SliceL{E}{I}\) and right frame \(\SliceL{E}{J}\): polynomial
  functor \(\TypeAnn{P_F}{\SliceL{E}{I} \to \SliceL{E}{J}}\)

\item Squares (\ie morphisms of \(\Cat{D}_1\)) with left frame
  \(\CatExists{u}\) and right frame \(\CatExists{v}\): strong natural
  transformation \(\phi\):
\[
\begin{tikzpicture}
\matrix (m) [matrix of math nodes
            , row sep=3em
            , column sep=5em
            , text height=1.5ex
            , text depth=0.25ex
            , ampersand replacement=\&
            ]
{ 
    \SliceL{E}{I}    \&     \SliceL{E}{J} \\
    \SliceL{E}{K}    \&     \SliceL{E}{L} \\
};
\node at ($(m-1-1) !.5! (m-2-2)$) {\(\Downarrow \phi\)};
\path[->] 
   (m-1-1) edge node[above] { \(P_F\) } (m-1-2)
   (m-1-1) edge node[left] { \(\CatExists{u}\) } (m-2-1) 
   (m-1-2) edge node[right] { \(\CatExists{v}\) } (m-2-2)
   (m-2-1) edge node[below] { \(P_G\) } (m-2-2)
;
\end{tikzpicture}
\]
\end{itemize}

\end{example}


\newcommand{\Poly}[1]{{\mathit{Poly}}_{#1}}

\begin{example}[Double category \(\Poly{\LCCC{E}}\)]

The double category \(\Poly{\LCCC{E}}\) is defined as follow:
\begin{itemize}
\item \(\Cat{D}_0 = \Set\), \ie
  \begin{itemize}
  \item Objects: index set \(I, J, K, L, \ldots\)
  \item Vertical arrows: \(\TypeAnn{u}{I \to K}\), \(\TypeAnn{v}{J \to L}\), \ldots
  \end{itemize}
\item Horizontal arrows (\ie objects of \(\Cat{D}_1\)) with left frame \(I\) and
  right frame \(J\): polynomial indexed by \(I\) and \(J\)
\item Squares (\ie morphisms of \(\Cat{D}_1\)) with left frame
  \(\CatExists{u}\) and right frame \(\CatExists{v}\): polynomial
  morphism, with \(u\) closing the diagram on the left and \(v\)
  closing the diagram on the right.
\end{itemize}

\end{example}

\if 0


By taking the LCCC to be \(\Set\), considering polynomials and their
morphisms as containers and morphisms between them, we obtain the
double category \(\ICont\) of containers.

\fi


\if 0

\begin{example}[Double category \(\ICont\)]

The double category \(\ICont\) is defined as follow:
\begin{itemize}
\item \(\Cat{D}_0 = \Set\), \ie
  \begin{itemize}
  \item 0-cells: index set \(I, O, \ldots\)
  \item vertical arrows: \(\TypeAnn{u}{O^+ \to O}\), \(\TypeAnn{v}{O^+ \to O}\), \ldots
  \end{itemize}
\item 1-cells (\ie objects of \(\Cat{D}_1\)) with left frame \(I\) and right frame \(O\): container \(\ICont[I \: O]\)
\item 2-cells (\ie morphisms of \(\Cat{D}_1\)) with left frame \(\CatExists{u}\) and right frame \(\CatExists{v}\): container morphism \(\_ \IContainerArr{\Var{u}}{\Var{v}} \_\)
\end{itemize}

\end{example}

\fi


We are naturally tempted to establish a connection between the double
category \(\Poly{\LCCC{E}}\) of polynomials and the double category
\(\PolyFunc{\LCCC{E}}\) of polynomial functors. We expect the
interpretation of polynomials to play that role, behaving, loosely
speaking, as a functor from \(\Poly{\LCCC{E}}\) to
\(\PolyFunc{\LCCC{E}}\). To formalize that intuition, we need a notion
of functor between double categories:
\begin{definition}[Lax double functor~{\citep[\S{6.1}]{shulman:framed-bicat}}]

A lax double functor \(\TypeAnn{F}{\Cat{C} \to \Cat{D}}\) consists of:
\begin{itemize}
  \item Two functors \(\TypeAnn{F_0}{\Cat{C}_0 \to \Cat{D}_0}\) and
    \(\TypeAnn{F_1}{\Cat{C}_1 \to \Cat{D}_1}\) such that \(L \circ F_1
    = F_0 \circ L\) and \(R \circ F_1 = F_0 \circ R\).
  \item Two natural transformations transporting the \(\odot\) and
    \(U\) functors in the expected way.
\end{itemize}

\end{definition}


Having presented the double-categorical framework, we now move on to
framed bicategories. The key intuition here comes from our earlier
observation: morphisms between indices, \ie morphisms in
\(\Cat{D}_0\), induce polynomials, \ie objects in
\(\Cat{D}_1\). Categorically, this translates into a bifibrational
structure on the \((L,R)\) functor:
\begin{definition}[Framed bicategory~\citep{shulman:framed-bicat}]

  A framed bicategory is a double category for which the functor
\[
\TypeAnn{(L, R)}{\Cat{D}_1 \to \Cat{D}_0 \times \Cat{D}_0}
\]
is a bifibration.

\end{definition}

\newcommand{\BaseChange}[3]{{(#2,#3)}^{*}#1}
\newcommand{\CobaseChange}[3]{(#2,#3)_{!}#1}

\begin{example}[Framed bicategory \(\Poly{\LCCC{E}}\)~{\citep[\S{3.7}]{gambino:poly-monads}}]

The bifibration is therefore the \emph{endpoints}
functor~\citep[\S{3.10}]{gambino:poly-monads} for which the cobase
change of polynomial \(\TypeAnn{F}{\PolyDiag{f}{B}{A}{s}{I}{t}{J}}\)
along \((u,v)\), denoted \(\CobaseChange{F}{u}{v}\), is
\[
\begin{tikzpicture}
\matrix (m) [matrix of math nodes
            , row sep=3em
            , column sep=5em
            , text height=1.5ex
            , text depth=0.25ex
            , ampersand replacement=\&
            ]
{ 
    I  \& B \& A \& J \\
    K  \& B \& A \& L  \\
};
\path[->] 
   (m-1-2) edge node[above] {\(\PolySource\)} (m-1-1)
   (m-1-2) edge node[above] {\(\Polynome\)} (m-1-3)
   (m-1-3) edge node[above] {\(\PolyTarget\)} (m-1-4)
   (m-2-2) edge node[below] { \(\MorphSource \circ \PolySource\) } (m-2-1)
   (m-2-2) edge node[below] { \(\Polynome\) } (m-2-3) 
   (m-2-3) edge node[below] { \(\MorphTarget \circ \PolyTarget\) } (m-2-4)
   (m-1-1) edge node[left] { \(\MorphSource\) } (m-2-1)
   (m-1-2) edge[double,-] (m-2-2)
   (m-1-3) edge[double,-] (m-2-3)
   (m-1-4) edge node[right] { \(\MorphTarget\) } (m-2-4)
;
\end{tikzpicture}
\]
While the base change of \(G\) along \((\MorphSource,\MorphTarget)\), denoted
\(\BaseChange{G}{u}{v}\) is defined as:
\[
\begin{tikzpicture}
\matrix (m) [matrix of math nodes
            , row sep=3em
            , column sep=5em
            , text height=1.5ex
            , text depth=0.25ex
            , ampersand replacement=\&
            ]
{ 
    I     \&       \& \cdot \& \cdot \& J    \\
    \cdot \&       \& \cdot \& \cdot \& \cdot \\
    \cdot \& \cdot \&       \&       \&       \\
    K     \&       \& D     \& C     \& L     \\
};
\path[->] 
   (m-1-3) edge node[above] {} (m-1-1)
   (m-1-3) edge node[above] {} (m-1-4) 
   (m-1-4) edge node[above] {} (m-1-5)
   (m-2-3) edge node[above] {} (m-2-1)
   (m-2-3) edge node[above] {} (m-2-4)
   (m-2-4) edge node[above] {} (m-2-5)
   (m-4-3) edge node[below] {\(\PolySource\)} (m-4-1)
   (m-4-3) edge node[below] {\(\Polynome\)} (m-4-4)
   (m-4-4) edge node[below] {\(\PolyTarget\)} (m-4-5)
   (m-1-1) edge[double,-] (m-2-1)
   (m-1-3) edge node[right] {\(\MorphTarget^{\dagger\dagger}\)} (m-2-3)
   (m-1-4) edge node[right] {\(\MorphTarget^\dagger\)} (m-2-4)
   (m-1-5) edge node[right] {\(\MorphTarget\)} (m-2-5)
   (m-2-1) edge[double,-] (m-3-1)
   (m-3-1) edge node[left] {\(\MorphSource\)} (m-4-1)
   (m-2-3) edge node[below] {\(\epsilon\)} (m-3-2)
   (m-3-2) edge node[above] {\(\MorphSource^\dagger\)} (m-4-3)
   (m-3-2) edge node[above] {\(\PolySource^\dagger\)} (m-3-1)
   (m-2-3) edge node[right] {\(\Reindex{\Polynome} \CatForall{\Polynome} \MorphSource^\dagger\)} (m-4-3)
   (m-2-4) edge node[right] {\(\CatForall{\Polynome} \MorphSource^\dagger\)} (m-4-4)
   (m-2-5) edge[double,-] (m-4-5)
;
\begin{scope}[shift=($(m-1-3)!.25!(m-2-4)$)]
\draw +(-.25,0) -- +(0,0)  -- +(0,.25);
\end{scope}
\begin{scope}[shift=($(m-1-4)!.25!(m-2-5)$)]
\draw +(-.25,0) -- +(0,0)  -- +(0,.25);
\end{scope}
\begin{scope}[shift=($(m-2-3)!.25!(m-3-4)$)]
\draw +(-.25,0) -- +(0,0)  -- +(0,.25);
\end{scope}
\begin{scope}[shift=($(m-3-2)!.25!(m-4-1)$)]
\draw +(0,.25) -- +(0,0) -- + (.25,0);
\end{scope}
\end{tikzpicture}
\]

\end{example}
As before, these definitions straightforwardly translates to operations
on containers.



\begin{example}[Framed bicategory \(\PolyFunc{\LCCC{E}}\)~{\citep[\S{3.6}]{gambino:poly-monads}}]

  The fibrational structure of the framed bicategory gives rise to a
  \emph{transporter lift} (the cartesian lifting of the fibration
  \((L, R)\)) and a \emph{cotransporter lift} (the op-cartesian
  lifting of the op-fibration \((L,R)\)).

  The transporter lift of \((u,v)\) to \(P_G\) is given by:
\[
\begin{tikzpicture}
\matrix (m) [matrix of math nodes
            , row sep=3em
            , column sep=5em
            , text height=1.5ex
            , text depth=0.25ex
            , ampersand replacement=\&
            ]
{ 
    \SliceL{E}{I}  \& \SliceL{E}{K} \& \SliceL{E}{L} \& \SliceL{E}{K} \\
    \SliceL{E}{K}  \& \SliceL{E}{K} \& \SliceL{E}{L} \& \SliceL{E}{L} \\
};
\node at ($(m-1-1) !.5! (m-2-2)$) {\(\Iso\)};
\node at ($(m-1-2) !.5! (m-2-3)$) {\(\Iso\)};
\node at ($(m-1-3) !.5! (m-2-4)$) {\(\Downarrow \eta\)};
\path[->] 
   (m-1-1) edge node[above] { \(\CatExists{u}\) } (m-1-2)
   (m-1-2) edge node[above] { \(P_G\) } (m-1-3) 
   (m-1-3) edge node[above] { \(\Reindex{v} \) } (m-1-4)
   (m-2-1) edge[double,-] (m-2-2)
   (m-2-2) edge node[below] {\(P_G\)} (m-2-3)
   (m-2-3) edge[double,-] (m-2-4)
   (m-1-1) edge node[left] { \(\CatExists{u}\) } (m-2-1)
   (m-1-2) edge[double,-] (m-2-2)
   (m-1-3) edge[double,-] (m-2-3)
   (m-1-4) edge node[right] { \(\CatExists{v}\) } (m-2-4)
;
\end{tikzpicture}
\]

The cotransporter lift of \((u,v)\) to \(P_F\) is given by:
\[
\begin{tikzpicture}
\matrix (m) [matrix of math nodes
            , row sep=3em
            , column sep=5em
            , text height=1.5ex
            , text depth=0.25ex
            , ampersand replacement=\&
            ]
{ 
    \SliceL{E}{I} \& \SliceL{E}{I} \& \SliceL{E}{J} \& \SliceL{E}{J} \\
    \SliceL{E}{K} \& \SliceL{E}{I} \& \SliceL{E}{J} \& \SliceL{E}{L} \\
};
\node at ($(m-1-1) !.5! (m-2-2)$) {\(\Downarrow \epsilon\)};
\node at ($(m-1-2) !.5! (m-2-3)$) {\(\Iso\)};
\node at ($(m-1-3) !.5! (m-2-4)$) {\(\Iso\)};
\path[->] 
   (m-1-1) edge[double,-] (m-1-2)
   (m-1-2) edge node[above] {\(P_F\)} (m-1-3)
   (m-1-3) edge[double,-] (m-1-4)
   (m-2-1) edge node[below] { \(\Reindex{u}\) } (m-2-2)
   (m-2-2) edge node[below] { \(P_F\) } (m-2-3) 
   (m-2-3) edge node[below] { \(\CatExists{v} \) } (m-2-4)
   (m-1-1) edge node[left] { \(\CatExists{u}\) } (m-2-1)
   (m-1-2) edge[double,-] (m-2-2)
   (m-1-3) edge[double,-] (m-2-3)
   (m-1-4) edge node[right] { \(\CatExists{v}\) } (m-2-4)
;
\end{tikzpicture}
\]

  Following \citet{gambino:poly-monads}, we define the base change of
  \(P_G\) along \((u,v)\) by \(\BaseChange{P_G}{u}{v} = \Reindex{v} \circ
  P_G \circ \CatExists{u}\). Dually, we define the cobase change of
  \(P_F\) along \((u,v)\) by \(\CobaseChange{P_F}{u}{v} = \CatExists{v}
  \circ P_F \circ \Reindex{u}\).

\end{example}


At this stage, it should be clear that the interpretation of
polynomials is more than a mere functor from \(\Poly{\LCCC{E}}\) to
\(\PolyFunc{\LCCC{E}}\): loosely speaking, it establishes an
equivalence of categories. Equivalence of framed bicategory is
formally defined as follow:
\begin{definition}[Framed biequivalence~{\citep[\S{7.1}]{shulman:framed-bicat}}]

  A framed equivalence between the framed bicategory \(\Cat{C}\) and
  \(\Cat{D}\) consists of:
  \begin{itemize}
  \item Two lax double functors \(\TypeAnn{F}{\Cat{C} \to
    \Cat{D}}\) and \(\TypeAnn{G}{\Cat{D} \to \Cat{C}}\), and
  \item Two framed natural isomorphism \(\TypeAnn{\eta}{1 \Iso G \circ
    F}\) and \(\TypeAnn{\epsilon}{F \circ G \Iso 1}\)
  \end{itemize}

\end{definition}


We then recall this result of~\citet{gambino:poly-monads}:
\begin{theorem}[Theorem 3.8~\citep{gambino:poly-monads}]

  Squares of \(\PolyFunc{\LCCC{E}}\) are uniquely represented (up to a choice of
  pullback) by a square of \(\Poly{\LCCC{E}}\). Consequently, the interpretation of
  polynomials is a framed biequivalence.

\end{theorem}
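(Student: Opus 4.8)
The plan is to exhibit the interpretation $\llbracket-\rrbracket : \Poly{\LCCC{E}} \to \PolyFunc{\LCCC{E}}$ as one half of a framed biequivalence, the heavy lifting being the first sentence of the statement --- that every square of $\PolyFunc{\LCCC{E}}$ is represented, uniquely up to a choice of pullback, by a square of $\Poly{\LCCC{E}}$ --- from which the biequivalence is assembled in the usual way. First I would check that $\llbracket-\rrbracket$ is a lax (indeed pseudo) double functor: on objects it sends $I$ to $\Slice{\LCCC{E}}{I}$ and $u$ to $\CatExists{u}$, on horizontal arrows a polynomial $F$ to $P_F$ (Definition~\ref{def:poly}), and on squares a polynomial morphism to the pasting diagram recalled above; the equations $L\circ\llbracket-\rrbracket_1 = \llbracket-\rrbracket_0\circ L$ and its $R$-analogue then hold strictly, while the comparison cells for $U$ and $\odot$ are the standard isomorphisms $P_{\CatId}\Iso\CatId$ and $P_G\circ P_F\Iso P_{G\odot F}$; the latter is the composition theorem of \citet{gambino:poly-monads}, which rests on Beck--Chevalley and the distributivity law and in particular exhibits these composites as members of the class of Lemma~\ref{lemma:polyfunc-smallest-class}. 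Essential surjectivity on objects is immediate, and on horizontal arrows it is exactly the definition of a polynomial functor, so the only substantive point is the bijection on squares.

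For that, fix polynomials $F$ over $(I,J)$ and $G$ over $(K,L)$ together with maps $u:I\to K$ and $v:J\to L$; the claim is that strong natural transformations $\CatExists{v}\circ P_F \Rightarrow P_G\circ\CatExists{u}$ correspond, up to a choice of pullback, to polynomial morphisms $F\to G$ over $(u,v)$. I would first reduce to $u = \CatId$, $v = \CatId$ using the framed structure of both double categories: the cotransporter lift of $(u,v)$ to $P_F$ exhibits $\CobaseChange{P_F}{u}{v} = \CatExists{v}\circ P_F\circ\Reindex{u}$ as an op-cartesian cell, so by its universal property a square over $(u,v)$ out of $P_F$ is the same datum as a square over $(\CatId,\CatId)$ out of $\CobaseChange{P_F}{u}{v}$; on the polynomial side the cobase change $\CobaseChange{F}{u}{v}$ plays the same role, and one checks that $\llbracket\CobaseChange{F}{u}{v}\rrbracket \Iso \CobaseChange{P_F}{u}{v}$ by a Beck--Chevalley computation. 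Hence everything reduces to comparing natural transformations $P_F\Rightarrow P_G$ between polynomial functors $\Slice{\LCCC{E}}{K}\to\Slice{\LCCC{E}}{L}$ with the corresponding polynomial morphisms between polynomials over $(K,L)$.

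That fibrewise comparison I would settle by representability. A polynomial functor is, fibrewise, a coproduct of representables --- in container notation, $P_F$ sends $X$ to the family whose fibre over an index $j$ is $\sum_{a\in S_j}\prod_{p\in P_a}X_{n\,p}$ --- so a natural transformation $\phi:P_F\Rightarrow P_G$ is determined by its values at the ``generic'' families: for each shape $a$ of $F$, the family $X^a$ whose fibre over $i$ is the set of positions $p$ of $a$ with $n\,p=i$, equipped with the universal element $(a,\CatId)$ of $P_F X^a$. Evaluating $\phi$ at that element returns a shape $\MorphShape\,a$ of $G$ lying over the same index --- equivalently a map $A\to C$ over $J$, the diagrammatic $\MorphPull$ --- together with a reindexing of the positions of $G$ at $\MorphShape\,a$ into the positions of $F$ at $a$, the map $\MorphContra$; naturality and strength in $X$ then force $\phi$ to be \emph{precisely} the interpretation of the polynomial morphism assembled from this data, and conversely every polynomial morphism interprets to such a $\phi$ by the construction recalled above, the two assignments being mutually inverse. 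The qualifier ``up to a choice of pullback'' enters exactly here: the pulled-back object in the diagram of Definition~\ref{def:poly-morphism} is pinned down only up to canonical isomorphism, and polynomial morphisms differing only by that choice have equal interpretations.

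I expect the main obstacle to lie in making this representability step fully rigorous in the indexed, strong setting: one must verify that the probes $X^a$ genuinely detect all of $\phi$ --- which uses the fibrewise coproduct-of-representables structure of $P_F$ and the Yoneda lemma in the target slice, not merely preservation of connected limits --- and that naturality and strength in $X$ leave no residual freedom. Once that is secured, checking that the two assignments are inverse, that they respect the $U$ and $\odot$ comparison cells, and that the resulting quasi-inverse lax double functor $G$ together with the framed natural isomorphisms $\eta : 1\Iso G\circ\llbracket-\rrbracket$ and $\epsilon : \llbracket-\rrbracket\circ G\Iso 1$ cohere, is routine bookkeeping.
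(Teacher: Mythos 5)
The paper does not prove this statement at all: it is imported wholesale from \citet{gambino:poly-monads} (their Theorem 3.8), so there is no internal proof to compare against. Your sketch essentially reconstructs the argument of that cited source -- exhibit the interpretation as a pseudo double functor, use the framed (bifibration) structure to reduce squares over $(u,v)$ to squares over identities, then pin down a strong natural transformation $P_F \Rightarrow P_G$ by evaluating at the generic elements $(a,\CatId)$ and invoking naturality plus strength -- and that is indeed the correct route, with the ``up to a choice of pullback'' caveat entering exactly where you place it. The one point you rightly flag as delicate, that strength (not mere external naturality) is what makes the generic-element probes detect everything in a general LCCC, is also where the cited proof does its real work, so your outline is sound but would need that step carried out in full to stand alone.
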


The interpretation functor is an equivalence of framed bicategory
between \(\Poly{\LCCC{E}}\) and the framed bicategory
\(\PolyFunc{\LCCC{E}}\). We
thus conflate the category of polynomials \(\Poly{\LCCC{E}}\) and the
category of polynomial functors \(\PolyFunc{\LCCC{E}}\). Polynomials
are a ``small'' presentation of the larger functorial objects. Since
both categories are equivalent, we do not lose expressive power by
working in the small language.


\newcommand{\PolyC}[1]{{\Poly{\!}}^c_{#1}}
\newcommand{\PolyFuncC}[1]{{\PolyFunc{\!}}^c_{#1}}

Earlier, we have isolated a class of cartesian morphisms. This defines
a sub-category of polynomials, which will be of prime interest in this
paper. For clarity, we expound its definition:
\begin{example}[Framed bicategory \(\PolyC{\LCCC{E}}\)~{\citep[\S{3.13}]{gambino:poly-monads}}]
  \label{example:polyc-bicat}

  The framed bicategory \(\PolyC{\LCCC{E}}\) is defined by:
  \begin{itemize}
  \item Objects: indices, \ie objects of \(\LCCC{E}\)
  \item Vertical arrows: index morphisms, \ie morphisms of
    \(\LCCC{E}\)
  \item Horizontal arrows: polynomial indexed by \(I\) and \(J\), respectively left and right frames
  \item Squares: cartesian morphism of polynomial reindexed by
    \(u\) and \(v\), respectively left and right frames:
\[
\begin{tikzpicture}
\matrix (m) [matrix of math nodes
            , row sep=3em
            , column sep=5em
            , text height=1.5ex
            , text depth=0.25ex
            , ampersand replacement=\&]
{ 
    I  \& B  \& A  \& J \\
    K  \& D  \& C  \& L \\
};
\path[->] 
   (m-1-2) edge (m-1-1)
   (m-1-2) edge (m-1-3)
   (m-1-3) edge (m-1-4)
   (m-2-2) edge (m-2-1)
   (m-2-2) edge (m-2-3)
   (m-2-3) edge (m-2-4)
   (m-1-1) edge node[left] { \(\PolySource\) } (m-2-1)
   (m-1-4) edge node[right] { \(\PolyTarget\) } (m-2-4)
   (m-1-2) edge (m-2-2)
   (m-1-2) edge (m-1-3)
   (m-1-3) edge node[right] { \(\MorphPull\) } (m-2-3)
;
\begin{scope}[shift=($(m-1-2)!.25!(m-2-3)$)]
\draw +(-.25,0) -- +(0,0)  -- +(0,.25);
\end{scope}
\end{tikzpicture}
\]
  \end{itemize}
  That is, we take \(\Cat{D}_0 \triangleq \LCCC{E}\) and \(\Cat{D}_1
  \triangleq \biguplus_{I,J} \PolyC{\LCCC{E}}(I,J)\) for which we define:
  \begin{itemize}
  \item The identity functor \(U\) that maps an index to the identity
    polynomial at that index~;
  \item The left frame \(L\) that projects the source index \(I\) ;
  \item The right frame \(R\) that projects the target index \(J\) ;
  \item The composition \(\odot\) of polynomial functors.
  \end{itemize}
  The \emph{frames} defined by \(L\) and \(R\) thus correspond to,
  respectively, the left-hand side and right-hand side of polynomials
  and polynomial morphisms. As for the bifibration structure, consider
  a pair of morphism \(\TypeAnn{u, v}{K \to I, L \to J}\) in the base
  category \(\LCCC{E} \times \LCCC{E}\), we have:
  \begin{itemize}
  \item A cobase-change functor reindexing a polynomial
    \(\TypeAnn{P}{\PolyC{\LCCC{E}}(I,J)}\) to a polynomial
    \(\TypeAnn{\CobaseChange{P}{u}{v}}{\PolyC{\LCCC{E}}(K,L)}\) ;
  \item A base-change functor reindexing a polynomial
    \(\TypeAnn{P}{\PolyC{\LCCC{E}}(K,L)}\) to a polynomial
    \(\TypeAnn{\BaseChange{P}{u}{v}}{\PolyC{\LCCC{E}}(I,J)}\).
  \end{itemize}
  This extra-structure lets us transport polynomials across frames:
  given a polynomial, we can reindex or op-reindex it to any frame
  along a pair of index morphisms.

\end{example}

\if 0

\begin{example}[Framed bicategory of cartesian containers]

  The framed bicategory of cartesian containers is the
  restriction of the frame bicategory of containers to
  cartesian squares. Explicitly, we have:
  \begin{description}
  \item[Objects:] sets
  \item[Vertical arrows:] set morphisms
  \item[Horizontal arrows:] polynomial indexed by \(I\) and \(O\), respectively left and right frames
  \item[Squares:] cartesian container morphism framed by
    \(u\) and \(v\), respectively left and right frames.
  \end{description}

\end{example}

\fi

The interpretation functor is again an equivalence of framed
bicategory between \(\PolyC{\LCCC{E}}\) and the framed bicategory
\(\PolyFuncC{\LCCC{E}}\) which morphisms of functors consists of
cartesian natural transformations. We therefore have the following
result:
\begin{theorem}[Theorem 3.13~\citep{gambino:poly-monads}]

  The subcategory \(\PolyFuncC{\LCCC{E}}\) is framed biequivalent to
  \(\PolyC{\LCCC{E}}\). In particular, squares of
  \(\PolyFuncC{\LCCC{E}}\) are uniquely represented by their
  diagrammatic counterpart in \(\PolyC{\LCCC{E}}\).

\end{theorem}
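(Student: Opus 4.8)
The plan is to deduce this from the framed biequivalence $\Poly{\LCCC{E}} \simeq \PolyFunc{\LCCC{E}}$ established in the preceding theorem, by restriction rather than by re-running a biequivalence argument from scratch. Note that $\PolyC{\LCCC{E}}$ and $\PolyFuncC{\LCCC{E}}$ have \emph{the same} objects, vertical arrows and horizontal arrows as $\Poly{\LCCC{E}}$ and $\PolyFunc{\LCCC{E}}$ respectively; only the squares have been cut down to cartesian ones (and, as recorded in Example~\ref{example:polyc-bicat}, the cartesian squares are closed under the identity, the composition $\odot$, and the transporter and cotransporter lifts, so that $\PolyC{\LCCC{E}}$ really is a sub-framed-bicategory). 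Hence it suffices to show that the interpretation functor $\llbracket - \rrbracket$ and its pseudo-inverse carry cartesian squares to cartesian squares in both directions. Granting that, the lax double functors restrict, the framed natural isomorphisms $\eta$ and $\epsilon$ of the preceding theorem restrict (their components are isomorphisms, and every natural isomorphism is cartesian, a square with two opposite invertible sides being automatically a pullback), and the ``unique representation'' clause is inherited verbatim: a cartesian natural transformation is in particular a square of $\PolyFunc{\LCCC{E}}$, hence uniquely represented by a polynomial morphism, which the matching of cartesianness then promotes to a square of $\PolyC{\LCCC{E}}$.

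So everything reduces to one lemma: \emph{a polynomial morphism as in Definition~\ref{def:poly-morphism} is cartesian in the sense of Definition~\ref{def:poly-cart-morphism} if and only if the natural transformation it interprets to is cartesian}. Recall (Definition~\ref{def:poly-cart-morphism} and \figurename~\ref{fig:poly-cont}) that a morphism is cartesian precisely when the comparison $\MorphContra$ is an isomorphism --- equivalently, when $B$ is the chosen pullback of $\MorphPull$ along $\Polynome$; on containers this is the clause ``$\MorphPosition$ is an isomorphism''. For the forward direction, inspect the pasting diagram defining $\llbracket - \rrbracket$ on morphisms. It is assembled from coherence isomorphisms, whiskerings by pullback functors and their adjoints (each of which preserves pullbacks, hence preserves cartesianness of natural transformations), the counit $\epsilon$ of an adjunction $\CatExists{\MorphPull} \dashv \Reindex{\MorphPull}$, and the unit $\eta$ of $\Reindex{\MorphContra} \dashv \CatForall{\MorphContra}$. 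A counit of an adjunction $\CatExists{g} \dashv \Reindex{g}$ is always cartesian, since its naturality square at a map is a pasting of two pullbacks. When $\MorphContra$ is an isomorphism, $\Reindex{\MorphContra} \dashv \CatForall{\MorphContra}$ is an adjoint equivalence, so $\eta$ is an isomorphism and a fortiori cartesian. As cartesian natural transformations are closed under vertical and horizontal pasting, the whole composite --- the interpreted morphism --- is cartesian.

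For the converse, suppose the interpreted transformation $\phi$ is cartesian, and extract the diagrammatic datum of Definition~\ref{def:poly-morphism} from it via the preceding theorem; we must see that its $\MorphContra$ is an isomorphism. Here the container model is the most transparent vehicle: the interpreted morphism sends a pair $\Pair{\aShape'}{\PosVars}$ to $\Pair{\MorphShape\: \aShape'}{\PosVars \Compose \MorphPosition}$, so imposing that the relevant naturality squares be pullbacks --- tested against generic morphisms such as $X \to \Terminal[\LCCC{E}]$ in the source slice, or fibrewise --- exhibits the source polynomial functor, on such inputs, as a pullback of the target one along $\MorphPull$, and unwinding this forces precomposition with $\MorphPosition \colon \Position\: (\MorphShape\: \aShape') \to \Position'\: \aShape'$ to be invertible; hence $\MorphPosition$, that is $\MorphContra$, is an isomorphism and the morphism is cartesian. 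Alternatively one may reconstruct the source functor directly from a cartesian $\phi$ through Lemma~\ref{lemma:cart-induce-poly} and observe it already lands in $\PolyC{\LCCC{E}}$. Either way, combined with the uniqueness half of the preceding theorem this closes the equivalence of the two classes of squares, and the theorem follows.

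I expect the genuine obstacle to be this reflection step --- proving that cartesianness of $\phi$ forces $\MorphContra$ invertible. Over $\Set$ one simply tests the relevant exponentials against the two-element set; in a bare locally cartesian-closed category that shortcut is unavailable, so one needs a genuinely diagrammatic, descent-style computation showing that the fibrewise-pullback condition on those exponentials can hold only when the exponent map is an isomorphism. Everything else --- closure of cartesian natural transformations under pasting, preservation of pullbacks by base-change functors and their adjoints, and the transport of the lax-double-functor and framed-natural-isomorphism data along the inclusions $\PolyC{\LCCC{E}} \hookrightarrow \Poly{\LCCC{E}}$ and $\PolyFuncC{\LCCC{E}} \hookrightarrow \PolyFunc{\LCCC{E}}$ --- is routine bookkeeping once the preceding theorem is in hand.
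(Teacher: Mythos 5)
First, a point of comparison: the paper does not prove this statement at all --- it is imported verbatim as Theorem 3.13 of \citet{gambino:poly-monads}, just as the preceding Theorem 3.8 is. So there is no ``paper proof'' to match; what you have written is an attempt at the external result. Your overall strategy --- restrict the non-cartesian framed biequivalence \(\Poly{\LCCC{E}} \simeq \PolyFunc{\LCCC{E}}\) along the inclusions of the cartesian sub-bicategories, reducing everything to the lemma that a polynomial morphism is cartesian (\(\MorphContra\) invertible, middle square a pullback) if and only if its interpretation is a cartesian natural transformation --- is essentially how the cited source proceeds, and your forward direction is sound: the interpretation pastes counits of \(\CatExists{} \dashv \Reindex{}\) (always cartesian), a unit of \(\Reindex{\MorphContra} \dashv \CatForall{\MorphContra}\) (an isomorphism when \(\MorphContra\) is), and whiskerings by pullback-preserving functors, so cartesianness is preserved under the pasting.

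The genuine gap is the one you flag yourself: the reflection step. Your primary argument --- test the naturality squares ``against generic morphisms such as \(X \to \Terminal\)'' and conclude that precomposition with \(\MorphPosition\) must be invertible --- is exactly the step that does not survive the passage from \(\Set\) to a general LCCC: invertibility of \((-) \circ \MorphPosition\) on \emph{all} fibres \(X\) only yields invertibility of \(\MorphPosition\) if you have a separating test object like the two-element set, and you supply no substitute ``descent-style computation''. Since the unique-representation clause of the theorem also rests on this reflection (a cartesian square of \(\PolyFuncC{\LCCC{E}}\) must be represented by a \emph{cartesian} diagram, not merely by some diagram), the gap is load-bearing, not cosmetic. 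The repair is the route you mention only in one sentence: run the explicit construction underlying Lemma~\ref{lemma:cart-induce-poly}, which manufactures a representing polynomial for the source of a cartesian \(\phi\) by pulling back the target's exponent data along the component \(\phi_{\Terminal}\), so that the resulting representing square is cartesian \emph{by construction}; then invoke the uniqueness (up to choice of pullback) of representations from Theorem 3.8 to conclude that the originally given diagram is isomorphic to this one, hence itself cartesian. As written, ``observe it already lands in \(\PolyC{\LCCC{E}}\)'' asserts the conclusion rather than deriving it; spelling out the pullback construction and the appeal to uniqueness is precisely the missing content.
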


The interpretation functor is an equivalence of framed bicategory
between \(\PolyC{\LCCC{E}}\) and the framed bicategory
\(\PolyFuncC{\LCCC{E}}\)~\citep[Theorem 3.13]{gambino:poly-monads}. We
can therefore conflate, once and for all, the category of polynomials
\(\PolyC{\LCCC{E}}\) and the category of polynomial functors
\(\PolyFuncC{\LCCC{E}}\). In a sense, polynomials are a ``small''
presentation of the larger functorial objects. However, we do not lose
expressive power by working in the small language, since both
categories are equivalent.


\section{Inductive Families in Type Theory}
\label{sec:desc-poly}


In this section, we set out to establish a formal connection between a
presentation of inductive families in type theory and the categorical
model of polynomial functors. On the type theoretical side, we adopt
the universe-based presentation introduced by
\citet{dagand:levitation}. Working on a universe gives us a syntactic
internalisation of inductive families within type theory. Hence, we
can manipulate and reason about inductive families from within the
type theory itself.

The original motivation for this design is \emph{generic programming}:
the programmer can compute over the structure of datatypes, or even
compute new datatypes from old. In mathematical term, ``generic
programming'' reads as \emph{reflection}: we can reflect the meta-theory
of inductive types within the type theory. For systems like Agda or
Coq, we can imagine reducing their syntactic definition to such a
universe by \emph{elaboration}~\citep{dagand:elaboration}.


We recall the definition of the universe in
\figurename~\ref{fig:universe-desc}. A \(\IDesc\) code is a syntactic
object \emph{desc}ribing a functor from \(\Set^{\Var{I}}\) to
\(\Set\). To obtain this functor, we have to interpret the code using
\(\InterpretIDesc{\_}\). The reader will gain intuition for the codes
by looking at their interpretation, \ie their semantics. To describe
functors from \(\Set^{\Var{I}}\) to \(\Set^{\Var{J}}\), we use the
isomorphism \([\Set^{\Var{I}}, \Set]^{\Var{J}} \Iso [\Set^{\Var{I}},
  \Set^{\Var{J}}]\). Hence, in \(\idesc\), we pull the \(J\)-index to
the front and thus capture functors on slices of \(\Set\). The
interpretation \(\InterpretIDesc{\_}\) extends pointwise to
\(\idesc\). Inhabitants of the \(\idesc\) type are called
\emph{descriptions}. By construction, the interpretation of a
description is a strictly positive functor: for a description \(D\),
the initial algebra always exists and is denoted \((\lfp{D},
\TypeAnn{\inLfp}{\InterpretIDesc{D}[\mu{D}] \to \mu{D}})\).

\begin{figure}[!t]

{\footnotesize
\[
\begin{array}{l@{\qquad\qquad\qquad}l}
\IDescDef & \InterpretIDescDef\\
\medskip & \\
\Let{\idesc &
         \PiTel{\Var{I}}{\Set} &
         \PiTel{\Var{J}}{\Set}}{\Set[1]}{
\Return{\idesc & \Var{I} & \Var{J}}{\Var{J} \To \IDesc[\Var{I}]}}
&
\Let{\InterpretDesc{\PiTel{\Var{D}}{\idesc[\Var{I}\: \Var{J}]}} &
     \PiTel{\Var{X}}{\Var{I} \To \Set}}
    {\Var{J} \To \Set}{
\Return{\InterpretDesc{\Var{D}} & \Var{X}}{\Lam{\Var{j}}{\InterpretDesc{\Var{D}\: \Var{j}}\: \Var{X}}}
}
\end{array}
\]}

\caption{Universe of inductive families}
\label{fig:universe-desc}

\end{figure}

\begin{definition}[Described functor]

  A functor is \emph{described} if it is isomorphic to the
  interpretation of a description.

\end{definition}

\begin{example}[Natural numbers]

The signature functor of natural numbers is described by:
\[
\Let{\NatD}{\idesc[\Unit\: \Unit]}{
\Return{\NatD}
       {\Lam{\Void}
         \DSigma[{(\Unit \Sum \Unit)\:
                \LambdaBinder
                      \left\{
                      \begin{array}{@{}l@{\DoReturn}l}
                        \InjLeft[\Void]   & \DUnit \\
                        \InjRight[\Void]  & \DVar[\Void]
                      \end{array}
                      \right.
                    }]}}
\]
The reader will check that the interpretation \(\InterpretIDesc{\_}\)
of this code gives a functor isomorphic to the expected \(X \mapsto
1 + X\).

\end{example}


\subsection{Descriptions are equivalent to polynomials}

\newcommand{\ReindexDesc}[1]{\green{D\Reindex{\black{#1}}}}
\newcommand{\CatExistsDesc}[1]{\green{D\CatExists{\black{#1}}}}
\newcommand{\CatForallDesc}[1]{\green{D\CatForall{\black{#1}}}}
\newcommand{\ComposeDesc}{\mathop{\green{\circ_D}}}
\Spacedcommand{\ComposeDescHelp}{\Function{compose}}


We can now prove the equivalence between described functors and
polynomial functors. The first step is to prove that described
functors are polynomial:
\begin{lemma}
  \label{lemma:desc-sub-polyfun}

The class of described functors is included in the class of polynomial
functors.

\end{lemma}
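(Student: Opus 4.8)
The plan is to argue by structural induction on descriptions. It suffices to show that the interpretation $\InterpretIDesc{D}$ of each code $D$ of type $\IDesc[I]$ --- a functor $\Slice{\LCCC{E}}{I}\to\Slice{\LCCC{E}}{\Terminal[\LCCC{E}]}$ --- is polynomial, since being polynomial means, by definition, being isomorphic to the interpretation of a polynomial, and, as recorded in Figure~\ref{fig:poly-cont}, containers are exactly polynomials phrased in the internal language. So in each case I would exhibit a container whose interpretation is naturally isomorphic to $\InterpretIDesc{D}$, reading off its shapes, positions and next-index map from the syntax of $D$; the interpretation then lifts pointwise to a $J$-indexed family of codes, yielding a polynomial functor $\Slice{\LCCC{E}}{I}\to\Slice{\LCCC{E}}{J}$ for an arbitrary target. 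An alternative, less informative, packaging is to invoke Lemma~\ref{lemma:polyfunc-smallest-class} and verify that the operations named by the code formers preserve polynomiality; I prefer the explicit construction because it keeps the shapes and positions under control, which is precisely what gets reused when the converse inclusion is established.

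For the base cases: the variable former at $i$ denotes $X\mapsto X_i$, that is the base change $\Reindex{i}$ along the global element $i\colon\Terminal[\LCCC{E}]\to I$, a pullback functor and hence polynomial --- equivalently, the container with one shape carrying one position whose next index is $i$. The constant (terminal) former denotes the constant functor at $\Terminal[\LCCC{E}]$, which is the interpretation of the container with a single shape and no positions, hence polynomial. For the inductive step, assume each immediate subcode denotes a polynomial functor, presented by a container. Pairing --- whether taken as primitive or as the two-element instance of $\Pi$ --- denotes the pointwise product of the two interpretations; combining the containers by taking the product of the shapes, the coproduct of the positions and the copairing of the next-index maps, and using that binary products distribute over binary sums, exhibits this as the interpretation of a container. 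The $\Sigma$ former over a set $S$ denotes the $S$-indexed coproduct of the interpretations of its branches, handled the same way: the shapes of the combined container are the disjoint union over $s\in S$ of the shapes of the $s$-th branch, with positions and next index inherited branch-wise. The $\Pi$ former over $S$ denotes the $S$-indexed product of its branches; here the distributive law of dependent products over dependent sums --- equivalently the type-theoretic axiom of choice, which is valid in $\LCCC{E}$ --- rewrites $\prod_{s\in S}\InterpretIDesc{T\,s}[X]$ into a sum, ranging over functions choosing a shape for each branch, of a product over the disjoint union of the corresponding position sets, which is again a container. Finally, for a $J$-indexed family $D\colon J\to\IDesc[I]$, the pointwise interpretation is the container whose shapes over $j\in J$ are the shapes of $D\,j$, again with inherited positions and next index; this is a polynomial functor with target $\Slice{\LCCC{E}}{J}$, its strength and the naturality of the isomorphism with $\InterpretIDesc{D}$ being inherited from the pieces via the framed-bicategorical structure of $\Poly{\LCCC{E}}$.

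I expect the $\Pi$ case to be the main obstacle. It is the only former for which the shapes and positions of the output container are themselves built by (co)products indexed by $S$, so the verification that this container really reinterprets to $\prod_{s\in S}\InterpretIDesc{T\,s}$ hinges on the distributive law relating $\CatExists{}$ and $\CatForall{}$ in $\LCCC{E}$; one also has to observe that the product $\prod_{s\in S}$ exists in the first place, which it does because $S$, being an object of $\Set$, is an object of $\LCCC{E}$, so the product is computed as $\CatForall{}$ along $S\to\Terminal[\LCCC{E}]$. A smaller but genuine point of care is the $J$-indexed case, where one must check that the assembled functor is strong and that the witnessing isomorphism is natural; this is routine, and is exactly where it is cleaner to argue through the equivalence between polynomials and polynomial functors rather than to juggle functors between slices by hand.
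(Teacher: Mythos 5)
Your proposal is correct and follows essentially the same route as the paper: structural induction on the description code, exhibiting at each case a container (sum-of-products normal form) isomorphic to the interpretation, with the $\DPi$ case handled by the distributivity of dependent products over dependent sums (the type-theoretic choice principle valid in an LCCC), and the $J$-indexing dealt with pointwise. Your explicit container phrasing and the remark about the $\Pi$ case being the crux match the paper's proof in substance.
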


\begin{proof}

Let \(\TypeAnn{F}{\Slice{\Set}{I} \to \Slice{\Set}{J}}\) a described
functor.

By definition of the class of described functor, \(F\) is naturally
isomorphic to the interpretation of a description. That is, for any
\(\TypeAnn{j}{J}\), there is a \(\TypeAnn{D}{\IDesc[I]}\) such that:
\[
F\: j \Iso \InterpretIDesc{D} 
\]


By induction over \(D\), we show that \(\InterpretIDesc{D}\) is
naturally isomorphic to a polynomial:
\begin{description}
\item[Case \(D = \DUnit\):] 

We have \(\InterpretIDesc{\DUnit}[X] \Iso \Unit \Times X^{\Empty}\),
which is clearly polynomial

\item[Case \(D = \DVar\: i\):] 

We have \(\InterpretIDesc{\DVar[i]}[X] \Iso \Unit \times (X\:
i)^{\Unit}\), which is clearly polynomial

\item[Case \(D = \DSigma\: S\: T\):] 

We have \(\InterpretIDesc{\DSigma[S\: T]}[X] =
\SigmaTimes{\Var{s}}{S}{\InterpretIDesc{T\: \Var{s}}[X]}\). By
induction hypothesis, \(\InterpretIDesc{T\: \Var{s}}[X] \Iso
\SigmaTimes{\Var{x}}{S_{T\: \Var{s}}}{\PiTo{\Var{p}}{P_{T\:
      \Var{s}}\: \Var{x}}{X\: (n_{T\: \Var{s}}\: p)}}\). Therefore, we obtain that:
\begin{align*}
\InterpretIDesc{\DSigma[S\: T]}[X] & \Iso 
    \SigmaTimes{\Var{s}}
               {S}
               {\SigmaTimes{\Var{x}}
                           {S_{T\: \Var{s}}}
                           {\PiTo{\Var{p}}
                                 {P_{T\: \Var{s}}}
                                 {X\: (n_{T\: \Var{s}}\: p)}}} \\
    & \Iso 
    \SigmaTimes{\Var{sx}}
               {\SigmaTimes{\Var{s}}{S}{S_{T\: \Var{s}}}}
               {\PiTo{\Var{p}}
                     {P_{T\: (\Fst[\Var{sx}])} (\Snd[\Var{sx}])}
                     {X\: (n_{T\: (\Fst[\Var{sx}])}\: \Var{p})}}
\end{align*}
This last functor being clearly polynomial.

\item[Case \(D = \DPi\: S\: T\):]

We have \(\InterpretIDesc{\DPi[S\: T]}[X] =
\PiTo{\Var{s}}{S}{\InterpretIDesc{T\: \Var{s}}[X]}\). By induction
hypothesis, \(\InterpretIDesc{T\: \Var{s}}[X] \Iso
\SigmaTimes{\Var{x}}{S_{T\: \Var{s}}}{\PiTo{\Var{p}}{P_{T\: \Var{s}}\:
    \Var{x}}{X\: (n_{T\: \Var{s}}\: p)}}\). Therefore, we obtain that:
\begin{align*}
\InterpretIDesc{\DPi[S\: T]}[X] & \Iso
    \PiTo{\Var{s}}{S} 
        \SigmaTimes{\Var{x}}
                   {S_{T\: \Var{s}}}
                   {\PiTo{\Var{p}}
                         {P_{T\: \Var{s}}\: \Var{x}}
                         {X\: (n_{T\: \Var{s}}\: p)}} \\
    & \Iso
    \SigmaTimes{\Var{f}}
               {\PiTo{\Var{s}}{S}{S_{T\: \Var{s}}}}
               {\PiTel{\Var{s}}{S}
                \PiTo{\Var{p}}{P_{T\: \Var{s}}\: (\Var{f}\: \Var{s})}
                 X\: (n_{T\: \Var{s}}\: \Var{p})} \\
    & \Iso
    \SigmaTimes{\Var{f}}
               {\PiTo{\Var{s}}{S}{S_{T\: \Var{s}}}}
               {\PiTo{\Var{sp}}
                     {\SigmaTimes{\Var{s}}{S}{P_{T\: \Var{s}}\: (\Var{f}\: \Var{s})}}
                     {X\: (n_{T\: (\Fst[\Var{sp}])}\: (\Snd[\Var{sp}]))}}
\end{align*}
This last functor being clearly polynomial.

\end{description}

\end{proof}



To prove the other inclusion -- that polynomials functors on \(\Set\) are a
subset of described functors -- we rely on
Lemma~\ref{lemma:polyfunc-smallest-class}. To this end, we must prove
some algebraic properties of the class of described functors, namely
that they are closed under reindexing, its adjoints, and
composition. To do so, the methodology is simply to code these
operations in \(\idesc\).

\begin{lemma}
  \label{lemma:desc-closed-reindexing}

Described functors are closed under reindexing and its adjoints.

\end{lemma}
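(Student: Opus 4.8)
The plan is to carry out exactly the ``code it up in \(\idesc\)'' recipe that the surrounding text announces. Since being described is closed under natural isomorphism by definition, and since --- by Lemma~\ref{lemma:polyfunc-smallest-class} --- the polynomial functors form the smallest class containing the base change functors and their adjoints and closed under composition and natural isomorphism, the eventual converse of Lemma~\ref{lemma:desc-sub-polyfun} will follow as soon as we know that, for every index morphism \(\TypeAnn{u}{I \to K}\) in \(\Set\), each of \(\Reindex{u}\), \(\CatExists{u}\) and \(\CatForall{u}\) is described. So I would produce an explicit description for each of them and then compare its unfolded interpretation with the set-theoretic formulas for reindexing and its adjoints recalled earlier.

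For the base change functor I would set \(\ReindexDesc{u} \triangleq \lambda i.\ \DVar[u\: i] \in \idesc[K\: I]\), so that \(\InterpretIDesc{\ReindexDesc{u}}[X]\) sends \(i\) to \(\InterpretIDesc{\DVar[u\: i]}[X] \Iso X\: (u\: i)\), which is exactly \((\Reindex{u}\: X)\: i\). For the left adjoint, \(\CatExistsDesc{u} \in \idesc[I\: K]\) picks at each \(k\) a \(\DSigma\) that binds an element of the fibre \(u^{-1}(k)\) --- concretely a \(\DSigma\) over \(I\) nested with a \(\DSigma\) over the equality \(u\: i \PropEqual k\) --- and then offers a single recursive variable \(\DVar[i]\) at the chosen \(i\); unfolding \(\InterpretIDesc{\_}\) turns this into a \(\Sigma\)-type indexed by \(u^{-1}(k)\) of copies of \(X\), which is precisely the formula for \(\CatExists{u}\). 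The right adjoint \(\CatForallDesc{u}\) is the same description with \(\DPi\) in place of the inner \(\DSigma\), and its interpretation is the corresponding \(\Pi\)-type, \ie \(\CatForall{u}\). It then remains to check that these isomorphisms are natural in \(X\), which is routine: each is assembled out of the same ``bureaucratic'' isomorphisms --- \(\Unit \times A^{\Unit} \Iso A\) together with the re-association and currying laws --- already used in the proof of Lemma~\ref{lemma:desc-sub-polyfun}, all of which are natural, and functoriality of \(\InterpretIDesc{\_}\) propagates naturality through the outer \(\DSigma\)/\(\DPi\) layer.

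I do not expect a genuine obstacle here: this is deliberately a matter of bookkeeping, which is the point of working in a universe in the first place. The two places that deserve a little care are keeping the index discipline straight --- \(\ReindexDesc{u}\) lives in \(\idesc[K\: I]\), whereas \(\CatExistsDesc{u}\) and \(\CatForallDesc{u}\) live in \(\idesc[I\: K]\) --- and checking that the fibre \(\SigmaTimes{i}{I}{u\: i \PropEqual k}\) really is a legitimate \(\Set\)-code to pass to \(\DSigma\) and \(\DPi\). The latter holds because \(\Set\) is closed under \(\Sigma\)-types and carries an (extensional) equality, so the fibre genuinely denotes a set; in an intensional implementation such as the accompanying Agda development, one phrases the fibre with the identity type and the verification is unchanged. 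Together with closure under composition (the next lemma) and closure under natural isomorphism, this lemma supplies exactly the hypotheses of Lemma~\ref{lemma:polyfunc-smallest-class}, and hence the inclusion of polynomial functors into described functors.
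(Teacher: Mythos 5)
This is essentially the paper's own proof: the paper likewise defines \(\ReindexDesc{f}\) by a \(\DVar\) at \(\Var{f}\:\Var{a}\), and \(\CatExistsDesc{f}\), \(\CatForallDesc{f}\) by a single \(\DSigma\), respectively \(\DPi\), over the inverse image \(\Var{f} \Inverse \Var{b}\) followed by \(\DVar[\Var{a}]\), and then checks that the interpretations are \(\Reindex{f}\), \(\CatExists{f}\), \(\CatForall{f}\); your phrasing of the fibre as \(\SigmaTimes{\Var{i}}{\Var{I}}{\Var{u}\:\Var{i}\PropEqual\Var{k}}\) instead of the paper's inverse-image type is immaterial. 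One caution: for \(\CatForallDesc{u}\) you must bind the \emph{whole} fibre with \(\DPi\) -- either one \(\DPi\) over that set, as the paper does, or \(\DPi\) replacing \emph{both} of your nested \(\DSigma\) codes -- since replacing only the inner, equality-level \(\DSigma\), as your wording literally suggests, interprets to \(\SigmaTimes{\Var{i}}{\Var{I}}{(\Var{u}\:\Var{i}\PropEqual\Var{k})\To\Var{X}\:\Var{i}}\), which is not \(\CatForall{u}\).
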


\begin{proof}

We describe the pullback functors and their adjoints by:
\[
\Code[c]{
\Let{\ReindexDesc{\PiTel{\Var{f}}{\Var{A} \To \Var{B}}}}
    {\idesc[\Var{B}\: \Var{A}]}{
\Return{\ReindexDesc{\Var{f}}}
       {\Lam{\Var{a}}{\DVar[(\Var{f}\: \Var{a})]}}
} \\
\Let{\CatExistsDesc{\PiTel{\Var{f}}{\Var{A} \To \Var{B}}}}
    {\idesc[\Var{A}\: \Var{B}]}{
\Return{\CatExistsDesc{\Var{f}}}
       {\Lam{\Var{b}}
         \DSigma[\Var{f} \Inverse \Var{b}\: \Lam{\Var{a}}
                 \DVar[\Var{a}]]}
} \qquad
\Let{\CatForallDesc{\PiTel{\Var{f}}{\Var{A} \To \Var{B}}}}
    {\idesc[\Var{A}\: \Var{B}]}{
\Return{\CatForallDesc{\Var{f}}}
       {\Lam{\Var{b}}
         \DPi[\Var{f} \Inverse \Var{b}\: \Lam{\Var{a}}
             \DVar[\Var{a}]]}
}
}
\]
Where the inverse of a function \(f\) is represented by the following
inductive type:
\[
\InverseDef
\]

It is straightforward to check that these descriptions interpret to
the expected operation on slices of \(\Set\), \ie that we have:
\[
\InterpretIDesc{\ReindexDesc{f}} \Iso \Reindex{f} \qquad
\InterpretIDesc{\CatExistsDesc{f}} \Iso \CatExists{f} \qquad
\InterpretIDesc{\CatForallDesc{f}} \Iso \CatForall{f}
\]

\end{proof}


\begin{lemma} 
  \label{lemma:desc-closed-composition}

Described functors are closed under composition.
\end{lemma}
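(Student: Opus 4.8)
The plan is to re-use the methodology of Lemma~\ref{lemma:desc-closed-reindexing}: instead of manipulating the functors directly, I would exhibit composition as a syntactic operation on $\idesc$ and then check that $\InterpretIDesc{\_}$ intertwines the two. Given descriptions $\TypeAnn{D}{\idesc[I\: J]}$ and $\TypeAnn{E}{\idesc[J\: K]}$, I want a description $E \ComposeDesc D : \idesc[I\: K]$ together with a natural isomorphism $\InterpretIDesc{E \ComposeDesc D} \Iso \InterpretIDesc{E} \circ \InterpretIDesc{D}$. The guiding observation is that an $\IDesc[J]$ code mentions its argument family only through its $\DVar[j]$ leaves, so grafting the code $D\: j : \IDesc[I]$ in place of each such leaf performs precisely the substitution of functors that composition requires.

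Concretely, I would first define, by structural recursion on an $\IDesc[J]$ code (with $D$ held fixed), the auxiliary operation $\ComposeDescHelp$:
\[
\begin{array}{l@{\quad\triangleq\quad}l}
\ComposeDescHelp(\DUnit) & \DUnit \\
\ComposeDescHelp(\DVar[j]) & D\: j \\
\ComposeDescHelp(\DSigma[S\: T]) & \DSigma[S\: (\lambda s.\,\ComposeDescHelp(T\: s))] \\
\ComposeDescHelp(\DPi[S\: T]) & \DPi[S\: (\lambda s.\,\ComposeDescHelp(T\: s))]
\end{array}
\]
and then set $E \ComposeDesc D \triangleq \lambda k.\,\ComposeDescHelp(E\: k)$, which indeed lands in $\idesc[I\: K]$. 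The recursion is well founded because in the $\DSigma$ and $\DPi$ clauses the recursive calls are applied to the branches $T\: s$, which are immediate subterms of the code being analysed.

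It then remains to check that $\InterpretIDesc{E \ComposeDesc D} \Iso \InterpretIDesc{E} \circ \InterpretIDesc{D}$. Since $\InterpretIDesc{\_}$ is computed pointwise in the target index, this reduces to proving, for every $\IDesc[J]$ code $E'$ and every family $\TypeAnn{X}{I \to \Set}$, a natural isomorphism
\[
\InterpretIDesc{\ComposeDescHelp(E')}[X] \;\Iso\; \InterpretIDesc{E'}[\lambda j.\,\InterpretIDesc{D\: j}[X]],
\]
which I would establish by induction on $E'$. The $\DUnit$ case is immediate; the $\DVar[j]$ case unfolds both sides to $\InterpretIDesc{D\: j}[X]$, modulo the inessential $\Unit \times {-}$ wrapper already encountered in the proof of Lemma~\ref{lemma:desc-sub-polyfun}; and the $\DSigma[S\: T]$ and $\DPi[S\: T]$ cases push the isomorphism through a $\Sigma$-type, respectively a $\Pi$-type, using the induction hypothesis on each branch and the fact that these type formers preserve isomorphisms. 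Naturality in $X$ is automatic, as every clause of $\ComposeDescHelp$ is built from functorial type formers.

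I do not anticipate a real obstacle: the mathematical content amounts to the observation that $\idesc$ is closed under grafting the variables of one code into another. The only steps calling for care are the bookkeeping --- threading the three index sets $I$, $J$, $K$ correctly through the types of $\ComposeDescHelp$ and $E \ComposeDesc D$ --- and the minor bureaucracy of lining up the $\Unit \times {-}$ factors when comparing $\DVar$ interpretations. Together with Lemma~\ref{lemma:desc-closed-reindexing} and Lemma~\ref{lemma:polyfunc-smallest-class}, this closure property is exactly what will let us deduce the converse inclusion, that every polynomial functor on $\Set$ is described, and hence the full equivalence between described and polynomial functors.
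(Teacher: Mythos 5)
Your proposal matches the paper's proof: the paper likewise defines composition of descriptions by a structurally recursive grafting operation that replaces each variable leaf of the outer code by the corresponding code of the inner description (copying the unit, sigma, and pi nodes), and then notes that it is straightforward to check the interpretation of the composite code is naturally isomorphic to the composite of the interpretations. The only difference is presentational: the paper leaves that verification as a routine check, whereas you spell out the code-by-code induction, which is exactly the argument intended.
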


\begin{proof}

We define composition of descriptions by:
\[
\Code{
\Let{\PiTel{\Var{D}}{\idesc[\Var{B}\: \Var{C}]} &
     \ComposeDesc &
     \PiTel{\Var{E}}{\idesc[\Var{A}\: \Var{B}]}}
    {\idesc[\Var{A}\: \Var{C}]}{
\Return{\Var{D} & 
        \ComposeDesc &
        \Var{E}}
       {\Lam{\Var{c}}\ComposeDescHelp[(\Var{D}\: \Var{c})\: \Var{E}]}
} \\
\quad \where \\
\begin{array}{@{\qquad}l}
\Let{\ComposeDescHelp & 
     \PiTel{\Var{D}}{\IDesc[\Var{B}]} &
     \PiTel{\Var{E}}{\idesc[\Var{B}\: \Var{A}]}}
    {\IDesc[\Var{A}]}{
\Return{\ComposeDescHelp & 
          (\DVar[\Var{b}]) & 
          \Var{E}}
       {\Var{E}\: \Var{b}}
\Return{\ComposeDescHelp &
          \DUnit & 
          \Var{E}}
       {\DUnit}
\Return{\ComposeDescHelp &
          (\DPi[\Var{S}\: \Var{T}]) & 
          \Var{E}}
       {\DPi[\Var{S}\: \Lam{\Var{s}}{\ComposeDescHelp[(\Var{T}\: \Var{s})\: \Var{E}]}]}
\Return{\ComposeDescHelp &
          (\DSigma[\Var{S}\: \Var{T}]) &
          \Var{E}}
       {\DSigma[\Var{S}\: \Lam{\Var{s}}{\ComposeDescHelp[(\Var{T}\: \Var{s})\: \Var{E}]}]}
}
\end{array}
}
\]

It is then straightforward to check that this is indeed computing the
composition of the functors, \ie that we have:
\[
\InterpretIDesc{D \ComposeDesc E} \Iso \InterpretIDesc{D} \circ \InterpretIDesc{E}
\]

\end{proof}


\begin{lemma}
  \label{lemma:polyfun-sub-desc}

The class of polynomial functors is included in the class of described
functors.

\end{lemma}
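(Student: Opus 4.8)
The plan is to deduce the inclusion formally from the algebraic characterisation of polynomial functors recalled in Lemma~\ref{lemma:polyfunc-smallest-class}. That lemma states that the polynomial functors form the \emph{smallest} class of functors between slices of $\LCCC{E}$ -- here taken to be $\Set$ -- that contains every pullback functor $\Reindex{f}$, contains their adjoints $\CatExists{f}$ and $\CatForall{f}$, and is closed both under composition and under natural isomorphism. It therefore suffices to verify that the class of described functors satisfies each of these four closure conditions; minimality then immediately gives that every polynomial functor is described.

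First I would note that the described functors contain the pullback functors together with their adjoints: this is precisely Lemma~\ref{lemma:desc-closed-reindexing}, whose proof exhibits codes $\ReindexDesc{f}$, $\CatExistsDesc{f}$, $\CatForallDesc{f}$ in $\idesc$ and establishes $\InterpretIDesc{\ReindexDesc{f}} \Iso \Reindex{f}$, $\InterpretIDesc{\CatExistsDesc{f}} \Iso \CatExists{f}$, and $\InterpretIDesc{\CatForallDesc{f}} \Iso \CatForall{f}$. Closure under composition is Lemma~\ref{lemma:desc-closed-composition}, via the description combinator $\ComposeDesc$ and the isomorphism $\InterpretIDesc{D \ComposeDesc E} \Iso \InterpretIDesc{D} \circ \InterpretIDesc{E}$. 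Closure under natural isomorphism is immediate from the definition of a described functor: if $G$ is isomorphic to a described functor $F$ and $F \Iso \InterpretIDesc{D}$, then $G \Iso \InterpretIDesc{D}$ by composing the two isomorphisms, so $G$ is described as well.

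Putting these observations together, the class of described functors satisfies all the defining closure properties of the class generated in Lemma~\ref{lemma:polyfunc-smallest-class}, hence contains the least such class -- the polynomial functors. Combined with Lemma~\ref{lemma:desc-sub-polyfun}, this shows that the described functors and the polynomial functors coincide, which is the equivalence we were after.

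I do not anticipate any real obstacle, since the substantive work has already been carried out in Lemmas~\ref{lemma:desc-closed-reindexing} and~\ref{lemma:desc-closed-composition}. The only points that deserve a moment's care are: instantiating the ambient locally cartesian-closed category in Lemma~\ref{lemma:polyfunc-smallest-class} to $\Set$, so that ``polynomial functor between slices of $\LCCC{E}$'' lines up with the described-functor setting; and observing that the apparently weaker reading of Lemma~\ref{lemma:desc-closed-reindexing} as closure under precomposition with $\Reindex{f}$ is subsumed once each $\Reindex{f}$ is itself known to be described and the class is closed under composition -- so no extra argument is required.
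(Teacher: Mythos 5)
Your proposal is correct and follows exactly the paper's own argument: invoke the minimality characterisation of Lemma~\ref{lemma:polyfunc-smallest-class}, check that described functors contain reindexing and its adjoints (Lemma~\ref{lemma:desc-closed-reindexing}), are closed under composition (Lemma~\ref{lemma:desc-closed-composition}), and are closed under natural isomorphism by definition, then conclude by leastness. The extra remarks you add (instantiating $\LCCC{E}$ to $\Set$ and reading the closure conditions carefully) are sensible but do not change the substance.
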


\begin{proof}

Described functors are closed under reindexing, its left and right
adjoint (Lemma~\ref{lemma:desc-closed-reindexing}), are closed under
composition (Lemma~\ref{lemma:desc-closed-composition}) and are
defined up to natural isomorphism. By
Lemma~\ref{lemma:polyfunc-smallest-class}, the class of
polynomial functors is the least such set. Therefore, the class of
polynomial functor is included in the class of described functors.

\end{proof}


We conclude with the desired equivalence:
\begin{proposition}
  \label{prop:desc-iso-polyfun}
  The class of described functors corresponds exactly to the class of
  polynomial functors. 
\end{proposition}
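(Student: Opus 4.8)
The plan is simply to assemble the two inclusions that have already been established. By Lemma~\ref{lemma:desc-sub-polyfun}, every described functor is polynomial; by Lemma~\ref{lemma:polyfun-sub-desc}, every polynomial functor is described. Both classes are, by definition, closed under natural isomorphism, so a functor between slices of \(\Set\) lies in one class exactly when it lies in the other, which is precisely the asserted equivalence. The body of the proof is therefore a single line invoking those two lemmas.

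It is worth recalling where the actual work sits, since this proposition is the payoff of the whole subsection. The forward inclusion (Lemma~\ref{lemma:desc-sub-polyfun}) is a structural induction on the code \(D\): the base cases \(\DUnit\) and \(\DVar\,i\) interpret to visibly polynomial functors, while for \(\DSigma\) and \(\DPi\) one reassociates nested \(\Sigma\)'s and, crucially, commutes a \(\Pi\) past a \(\Sigma\) (the type-theoretic shadow of the distributivity isomorphism) so as to bring the interpretation back into the canonical shape of an indexed sum of monomials. The reverse inclusion (Lemma~\ref{lemma:polyfun-sub-desc}) goes instead through the algebraic characterisation of Lemma~\ref{lemma:polyfunc-smallest-class}: polynomial functors form the least class containing the base-change functors and their adjoints and closed under composition and natural isomorphism, so it suffices to exhibit codes \(\ReindexDesc{f}\), \(\CatExistsDesc{f}\), \(\CatForallDesc{f}\) realising reindexing and its adjoints (Lemma~\ref{lemma:desc-closed-reindexing}) together with a composition operator \(\ComposeDesc\) on \(\idesc\) (Lemma~\ref{lemma:desc-closed-composition}) whose interpretations match the functor-level operations.

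Consequently, the only genuine obstacle has already been discharged in those lemmas --- namely checking that each coded operation truly interprets, up to natural isomorphism, to the intended functor, and in particular coaxing the \(\DPi\) case of the forward direction back inside the polynomial fragment. For the proposition itself nothing remains, and I would simply write: immediate from Lemma~\ref{lemma:desc-sub-polyfun} and Lemma~\ref{lemma:polyfun-sub-desc}.
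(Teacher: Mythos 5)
Your proof is correct and coincides with the paper's: the proposition is discharged in one line by combining Lemma~\ref{lemma:desc-sub-polyfun} and Lemma~\ref{lemma:polyfun-sub-desc}, exactly as you propose. The additional commentary on where the real work lives accurately reflects how those lemmas are proved in the paper, but it is not needed for the proposition itself.
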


\begin{proof}

By Lemma~(\ref{lemma:desc-sub-polyfun}) and
Lemma~(\ref{lemma:polyfun-sub-desc}), we have both inclusions.

\end{proof}


The benefit of this algebraic approach is its flexibility with respect
to the universe definition: for practical purposes, we are likely to
introduce new \(\IDesc\) codes. However, the implementation of
reindexing and its adjoints will remain unchanged. Only composition
would need to be verified. Besides, these operations are useful in
practice, so we are bound to implement them anyway. In the rest of
this paper, we shall conflate descriptions, polynomials, and
polynomial functors, silently switching from one to another as we see
fit.


\subsection{An alternative proof}

\newcommand{\IDescToIS}[1]{\green{\langle}#1\green{\rangle}}
\Spacedcommand{\IDescToShape}{\Function{Shape}}
\Spacedcommand{\IDescToPos}{\Function{Pos}}
\Spacedcommand{\IDescToIndex}{\Function{Index}}

\newcommand{\ISToIDesc}[1]{\green{\langle}#1\green{\rangle^{-1}}}


An alternative approach, followed by~\citet{morris:PhD} for
example, consists in reducing these codes to containers. We thus
obtain the equivalence to polynomial functors, relying on the fact
that containers are an incarnation of polynomial functors in the
internal language~\citet[\S{2.18}]{gambino:poly-monads}. This less
algebraic approach is more constructive. However, to be absolutely
formal, it calls for proving some rather painful (extensional)
equalities.
%
%
%
If the proofs are laborious, the translation itself is not devoid of
interest. In particular, it gives an intuition of descriptions in
terms of shape, position and indices. This slightly more abstract
understanding of our universe will be useful in this paper, and is
useful in general when reasoning about datatypes.

We formalise the translation in Figure~\ref{fig:desc-to-cont}, mapping
descriptions to containers. The message to take away from that
translation is which code contributes to which part of the container,
\ie shape, position, and/or index. Crucially, the \(\DUnit\) and
\(\DSigma\) codes contribute only to the shapes. The \(\DVar\) and
\(\DPi\) codes, on the other hand, contribute to the
positions. Finally, the \(\DVar\) code is singly defining the next
index.

The inverse translation is otherwise trivial and given here for the
sake of completeness:
\[
\Let{\ISToIDesc{\PiTel{\Var{\aContainer}}{\ICont[\Var{I}\: \Var{J}]}}}
    {\idesc[\Var{I}\: \Var{J}]}{
\Return{\ISToIDesc{\IContainer{\Var{\Shape}}{\Var{\Position}}{\Var{\Next}}}}
       {\DSigma[{\Var{\Shape}\: \Lam{\Var{\aShape}}
        \DPi[{(\Var{\Position}\: \Var{\aShape})\: \Lam{\Var{\aPosition}}
        \DVar[(\Var{\Next}\: \Var{\aPosition})]}]}]}
}
\]
We are left to prove that these translations are
indeed inverse of each other: while this proof is extremely tedious to
carry formally, it should be intuitively straightforward. We
therefore assume the following lemma:
\begin{lemma}[Described functors to polynomials, alternatively]
  \label{lemma:idesc-to-is}

\(\IDescToIS{\_}\) is essentially surjective.

\end{lemma}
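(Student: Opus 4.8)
The plan is to let $\ISToIDesc{\_}$ itself provide the preimages: for an arbitrary container $C = \IContainer{\Shape}{\Position}{\Next}$ indexed by $I$ and $J$, I claim $\IDescToIS{\ISToIDesc{C}} \Iso C$ in $\ICont[I\: J]$, which immediately yields essential surjectivity of $\IDescToIS{\_}$. The crucial simplification is that this claim needs \emph{no induction}: $\ISToIDesc{C}$ is the fixed three-layer code displayed just above --- an outer $\DSigma$ over $\Shape$ wrapping a $\DPi$ over $\Position$ wrapping a single $\DVar$ at $\Next$ --- so the entire argument is a finite unfolding of the translation of Figure~\ref{fig:desc-to-cont} on this one code.

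Running those clauses through in turn: the shape component of $\ISToIDesc{C}$ receives $\Shape$ from the outer $\DSigma$, a $\Pi$ over $\Position$ from the $\DPi$, and $\Unit$ from the $\DVar$ leaf, hence at index $j$ is a $\Sigma$-type over $\Shape\: j$ of a $\Pi$-type over $\Position$ of $\Unit$, which collapses to $\Shape\: j$ via $\PiTo{x}{A}{\Unit} \Iso \Unit$ and $\SigmaTimes{x}{A}{\Unit} \Iso A$. The position component ignores the $\DSigma$ layer, turns the $\DPi$ layer into a $\Sigma$ over $\Position$, and reads $\Unit$ off the $\DVar$ leaf, so the positions over a shape form a $\Sigma$-type over $\Position$ of $\Unit$, isomorphic to $\Position$. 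The next-index component is fed solely by the $\DVar$ leaf and so returns exactly $\Next$ on the transported position --- and, pleasantly, that identification is definitional, so the coherence condition of the resulting container morphism holds by reflexivity on the indices. Packaging the shape isomorphism, the position isomorphism that matches $\Position$ against that $\Sigma$-over-$\Unit$, and this identification of next indices gives a morphism of $\ICont[I\: J]$, and running the same unfolding backwards exhibits it as an isomorphism.

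The step I expect to be the genuine work --- and, I suspect, why the paper assumes the lemma rather than proving it --- is discharging the bureaucracy hidden in the $\Unit$-valued isomorphisms above. Showing that $(sh, f) \mapsto sh$ is invertible needs $\eta$ for $\Unit$ together with function extensionality (any $f$ of type $\PiTo{p}{\Position\: sh}{\Unit}$ is forced to be the constant function), and checking that the two container morphisms compose to the identities unravels into a chain of such extensional equalities threaded through nested $\Sigma$- and $\Pi$-types. None of this is deep --- it is exactly the ``bureaucracy'' the extensional model absorbs for free --- but it is genuinely tedious, and in the intensional Agda development it must be carried out with the usual care about transports. A lighter-weight route, enough if one only wants essential surjectivity for the sake of Proposition~\ref{prop:desc-iso-polyfun}, is to establish the corresponding natural isomorphism of the \emph{interpreted} functors directly from the two $\Unit$-isomorphisms, and then invoke the fact that the interpretation of containers reflects isomorphisms~\citep[\S{2.18}]{gambino:poly-monads}, which sidesteps building and verifying a container-level morphism by hand.
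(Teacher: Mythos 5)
Your proposal is correct and takes essentially the route the paper itself intends: the paper defines the inverse translation and then explicitly assumes this lemma because the round-trip (translating a container to a description and back, collapsing the unit-valued $\Sigma$- and $\Pi$-layers) is intuitively straightforward but formally tedious, and your unfolding of the three-layer code, together with the extensionality bureaucracy you flag, is exactly that argument made explicit. No genuine gap beyond the housekeeping you already identify.
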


\begin{figure}

{\footnotesize
\[\Code{
\Code{
\Let{\IDescToIS{\PiTel{\Var{D}}{\idesc[\Var{I}\: \Var{J}]}}}
    {\ICont[\Var{I}\: \Var{J}]}{}\\
\Case{
\Return{\IDescToIS{\Var{D}}}
       {\IContainer{\Lam{\Var{j}}\IDescToShape[(\Var{D}\: \Var{j})]}
                   {\Lam{\Var{j}}\IDescToPos[(\Var{D}\: \Var{j})]}
                   {\Lam{\Var{j}}\IDescToIndex[(\Var{D}\: \Var{j})]}}}
\quad \where}\\
\qquad\Code{
  \Let{\IDescToShape & \PiTel{\Var{D}}{\IDesc[\Var{I}]}}
      {\Set}{
  \Return{\IDescToShape & \DVar[\Var{i}]}
         {\Unit}
  \Return{\IDescToShape & \DUnit}
         {\Unit}
  \Return{\IDescToShape & \DPi[\Var{S}\: \Var{T}]}
         {\PiTo{\Var{s}}{\Var{S}}{\IDescToShape[(\Var{T}\: \Var{s})]}}
  \Return{\IDescToShape & \DSigma[\Var{S}\: \Var{T}]}
         {\SigmaTimes{\Var{s}}{\Var{S}}{\IDescToShape[(\Var{T}\: \Var{s})]}}
  } \\ \\
  \Let{\IDescToPos & \PiTel{\Var{D}}{\IDesc[\Var{I}]}
                   & \PiTel{\Var{\aShape}}{\IDescToShape[\Var{D}]}}
      {\Set}{
  \Return{\IDescToPos & \DVar[\Var{i}] & \Void}
         {\Unit}
  \Return{\IDescToPos & \DUnit & \Void}
         {\Empty}
  \Return{\IDescToPos & \DPi[\Var{S}\: \Var{T}] & \Var{f}}
         {\SigmaTimes{\Var{s}}{\Var{S}}{\IDescToPos[(\Var{T}\: \Var{s})\: (\Var{f}\: \Var{s})]}}
  \Return{\IDescToPos & \DSigma[\Var{S}\: \Var{T}] & \Pair{\Var{s}}{\Var{t}}}
         {\IDescToPos[(\Var{T}\: \Var{s})\: \Var{t}]}
  } \\ \\
  \Let{\IDescToIndex & \PiTel{\Var{D}}{\IDesc[\Var{I}]}
                     & \PiTel{\Var{\aPosition}}{\IDescToPos[\Var{D}\: \Var{\aShape}]}}
      {\Var{I}}{
  \Return{\IDescToIndex & \DVar[\Var{i}] & \Void}
         {\Var{i}}
  \Return{\IDescToIndex & \DPi[\Var{S}\: \Var{T}] & \Pair{\Var{s}}{\aPosition}}
         {\IDescToIndex[(\Var{T}\: \Var{s})\: \Var{\aPosition}]}
  \Return{\IDescToIndex & \DSigma[\Var{S}\: \Var{T}] & \Var{\aPosition}}
         {\IDescToIndex[(\Var{T}\: (\Fst[\Var{sh}]))\: \Var{\aPosition}]}
  } 
} 
}\]
}

\caption{From descriptions to containers}
\label{fig:desc-to-cont}

\end{figure}


\subsection{Discussion}


Let us reflect on the results obtained in this section. By
establishing an equivalence between descriptions -- a programming
artefact -- and polynomial functors -- a mathematical object -- we
connect software to mathematics, and conversely. On the one hand,
descriptions are suitable for practical purposes: they are a syntactic
object, fairly intensional, and can therefore be conveniently
manipulated by a computer. Polynomial functors, on the other hand, are
fit for theoretical work: they admit a diagrammatic representation and
are defined extensionally, up to natural isomorphism.

Better still, we have introduced containers as a middle ground between
these two presentations. Containers are an incarnation of polynomials
in the internal language. Reasoning extensionally about them is
equivalent to reasoning about polynomials. Nonetheless, they are also
rather effective type theoretic procedures: we can implement them in
Agda\footnote{Such an implementation is available on the the first
  author's website.}.


We shall traverse this bridge between software and mathematics in both
directions. Going from software to mathematics, we hope to gain a
deeper understanding of our constructions. Case in point is generic
programming in type theory: we develop many constructions over
datatypes, such as ornaments, but the justification for these is often
extremely operational, one might even say ``ad-hoc''. By putting our polynomial
glasses on, we can finally see through the syntax and understand the
structure behind these definitions. Conversely, going from mathematics
to software, we translate mathematical structures to new software
constructions. The theory of polynomial functors is indeed well
developed. Most programming examples presented in this paper -- such
as derivatives or ornaments -- were first presented in the polynomial
functor literature. Besides, by exploring the structure of polynomial
functors, we discover new and interesting programming idioms -- such
as the pullback and composition of ornaments.


The categorically minded reader might be tempted to look for an
equivalence of category. However, we have not yet introduced any
notion of morphism between descriptions. What we have established is a
lowly ``set theoretic'' equivalence between the class of descriptions
and the class of polynomial functors. In terms of equivalence of
categories, we have established that the object part of a functor, yet
to be determined, maps descriptions to polynomial functors in an
essentially surjective way. We shall complete this construction in the
following section. We will set up descriptions in a double category
with ornaments as morphisms. The translation \(\IDescToIS{\_}\) will
then functorially map it to the double category
\(\PolyFuncC{\LCCC{E}}\).


\section{A Categorical Treatment of Ornaments}
\label{sec:orn}


\newcommand{\ListO}{\Function{List-Orn}}
\newcommand{\FinO}{\Function{Fin-Orn}}

The motivation for ornaments comes from the frequent need, when using
dependent types, to relate datatypes that share the same structure. In
this setting, ornaments play the role of an organisation
principle. Intuitively, an ornament is the combination of two datatype
transformations: we may \emph{extend} the constructors, and/or
\emph{refine} the indices. Ornaments preserve the underlying
data-structure by enforcing that an extension respects the arity of
the original constructors. By extending a datatype, we introduce more
information, thus enriching its logical content. A typical example of
such an ornament is the one taking natural numbers to lists:
\[
\NatDef 
\quad
\stackrel{\ListO\:\Var{A}}{\Rightarrow}
\quad
\ListDef
\]

By refining the indices of a datatype, we make it logically more
discriminating. For example, we can ornament natural numbers to finite
sets:
\[
\NatDef 
\quad
\stackrel{\FinO}{\Rightarrow}
\quad
\FinDef
\]


\subsection{Ornaments}


We recall the definition of the universe of ornaments in
\figurename~\ref{fig:universe-orn}. Besides our ability to copy the
original description (with the codes \(\DUnit\), \(\DSigma\), and
\(\DPi\)), we can \(\OInsert\) new \(\Sigma\)-types, \(\ODelete\)
\(\Sigma\)-types by providing a witness, and use a more precise index
in the \(\DVar\) codes. While this universe is defined on
\(\IDesc[K]\), \ie functors from \(\Slice{\Set}{K}\) to \(\Set\), it
readily lifts to endofunctors on slices, \ie on descriptions
\(\idesc[K\: L]\):
\smallskip\\
\[\Code{
\ornDef 
\medskip \\
\InterpretornDef
}\]

\renewcommand{\ornDef}{
\Code{
  \Let{\orn\:
       \PiTel{\Var{D}}{\idesc[\Var{K}\: \Var{L}]}\:
       \PiTel{\Var{\MorphSource}}{\Var{I} \To \Var{K}}\:
       \PiTel{\Var{\MorphTarget}}{\Var{J} \To \Var{L}}}
      {\Set[1]}
      {} 
\\
  \orn\: \Var{D}\: \Var{\MorphSource}\: \Var{\MorphTarget}
      \DoReturn
         {\PiTo{\Var{j}}{\Var{J}}
             \Orn[(\Var{D}\: (\Var{\MorphTarget}\: \Var{j}))\: \Var{\MorphSource}]}
  }
}

\renewcommand{\InterpretornDef}{
  \Code{
  \Let{\Interpretorn{\PiTel{\Var{o}}{\orn[\Var{D}\: \Var{\MorphSource}\: \Var{\MorphTarget}]}}}
      {\idesc[\Var{I}\: \Var{J}]}{}\\
  \Case{
  \Return{\Interpretorn{\Var{o}}}
         {\Lam{\Var{j}}\InterpretOrn{\Var{o}\: \Var{j}}}
  }}
}

\renewcommand{\OrnDef}{
\Data{\Orn}
     {\Index{\Var{D}}{\IDesc[\Var{K}]}
      \Param{\Var{\MorphSource}}{\Var{I} \To \Var{K}}}
     {\Set[1]}
     {
 \multicolumn{4}{l}{\quad\CommentLine{Extend with \(S\):}} \\
  \Emit{\Orn}
       {\Var{D} &
        \Var{\MorphSource}}
       {\OInsert[\PiTel{\Var{S}}{\Set}
                 \PiTel{\Var{D^+}}{\Var{S} \To \Orn[\Var{D}\: \Var{\MorphSource}]}]}
  \multicolumn{4}{l}{\quad\CommentLine{Refine index:}} \\
  \Emit{\Orn}
       {(\DVar[\Var{k}]) &
        \Var{\MorphSource}}
       {\DVar[\PiTel{\Var{i}}{\Var{\MorphSource} \Inverse \Var{k}}]}
  \multicolumn{4}{l}{\quad\CommentLine{Copy the original:}} \\
  \Emit{\Orn}
       {\DUnit &
        \Var{\MorphSource}}
       {\DUnit}
  \Emit{\Orn}
       {(\DPi[\Var{S}\: \Var{T}]) &
        \Var{\MorphSource}}
       {\DPi[\PiTel{\Var{T^+}}
                   {\PiTo{\Var{s}}{\Var{S}}
                         {\Orn[(\Var{T}\: \Var{s})\: \Var{\MorphSource}]}}]}
  \Emit{\Orn}
       {(\DSigma[\Var{S}\: \Var{T}]) &
        \Var{\MorphSource}}
       {\DSigma[\PiTel{\Var{T^+}}
                   {\PiTo{\Var{s}}{\Var{S}}
                         {\Orn[(\Var{T}\: \Var{s})\: \Var{\MorphSource}]}}]}
  \multicolumn{4}{l}{\quad\CommentLine{Delete \(\DSigma[S]\):}} \\
& \OrEmit{\ODelete[\PiTel{\Var{s}}{\Var{S}}
                   \PiTel{\Var{T^+}}{\Orn[(\Var{T}\: \Var{s})\: \Var{\MorphSource}]}]}}}

\renewcommand{\InterpretOrn}[1]{\Interpretorn{#1}}

\renewcommand{\InterpretOrnDef}{
  \Let{\InterpretOrn{\PiTel{\Var{O}}{\Orn[\Var{D}\: \Var{\MorphSource}]}}}
      {\IDesc[\Var{I}]}{
  \Return{\InterpretOrn{\OInsert[\Var{S}\: \Var{D^+}]}}
         {\DSigma[\Var{S}\: \Lam{\Var{s}}{\InterpretOrn{\Var{D^+}\: \Var{s}}}]}
  \Return{\InterpretOrn{\DVar[(\Inv[\Var{i}])]}}
         {\DVar[\Var{i}]} 
  \Return{\InterpretOrn{\DUnit}}
         {\DUnit}
  \Return{\InterpretOrn{\DPi[\Var{T^+}]}}
         {\DPi[\Var{S}\: \Lam{\Var{s}}{\InterpretOrn{\Var{T^+}\: \Var{s}}}]}
  \Return{\InterpretOrn{\DSigma[\Var{T^+}]}}
         {\DSigma[\Var{S}\: \Lam{\Var{s}}{\InterpretOrn{\Var{T^+}\: \Var{s}}}]}
  \Return{\InterpretOrn{\ODelete[\Var{s}\: \Var{T^+}]}}
         {\InterpretOrn{\Var{T^+}\: \Var{s}}}
 }
}

\begin{figure}

\centering
\subfloat[][Code]{\footnotesize\(\OrnDef\)} \qquad
\subfloat[][Interpretation]{\footnotesize\(\InterpretOrnDef\)}

\caption{Universe of ornaments}
\label{fig:universe-orn}

\end{figure}

\begin{example}[Ornamenting natural numbers to list]
  \label{ex:nat-orn-list}

We obtain list from natural numbers with the following ornament:
\[
\Let{\ListO & \PiTel{\Var{A}}{\Set}}
    {\orn[\NatD\: \Identity\: \Identity]}
    {
\Return{\ListO & \Var{A}}
       {\Lam{\Void}
         \DSigma\: \LambdaBinder
           \left\{
           \begin{array}{@{}l@{\DoReturn}l}
            \InjLeft[\Void] & \DUnit \\
            \InjRight[\Void] & \OInsert[\Var{A}\: \Lam{\_}{\DVar[\Void]}]
           \end{array}
           \right.}
}
\]
The reader will check that the interpretation (\(\Interpretorn{\_}\))
of this ornament followed by the interpretation
(\(\InterpretIDesc{\_}\)) of the resulting description yields the
signature functor of list \(X \mapsto 1 + A \times X\).

\end{example}
\begin{example}[Ornamenting natural numbers to finite sets]
  \label{ex:nat-orn-fin}

We obtain finite sets by inserting a number \(\TypeAnn{n'}{\Nat}\),
constraining the index \(n\) to \(\Suc[n']\), and -- in the recursive
case -- indexing at \(n'\):
\[
\Let{\FinO}
    {\orn[\NatD\: (\Lam{\Var{n}}{\Void})\: (\Lam{\Var{n}}{\Void})]}
    {
\Return{\FinO}
       {\Lam{\Var{n}}
         \begin{array}[t]{@{}l}
         \OInsert\: \Nat\:
                    \Lam{\Var{n'}} 
         \OInsert\: (\Var{n} \PropEqual \Suc[\Var{n'}])\:
                    \Lam{\_} \\
         \DSigma\: \LambdaBinder
           \left\{
           \begin{array}{@{}l@{\DoReturn}l}
            \InjLeft[\Void] & \DUnit \\
            \InjRight[\Void] & \DVar[\Var{n'}]
           \end{array}
           \right.
         \end{array}
       }
}
\]
Again, the reader will verify that this is indeed describing the
signature of finite sets.

\end{example}
A detailed account of ornaments from a programmer's perspective will
be found
elsewhere~\citep{mcbride:ornament,dagand:fun-orn,ko:modularising-inductive}. For
the purpose of this paper, these definitions are enough. We shall
refer to the aforementioned papers when programming concepts reappear
in our categorical framework.

\subsection{Ornaments are cartesian morphisms}

\newcommand{\CatOrnExt}{\mathit{ext}}
\newcommand{\CatOrnNext}{\Next^+}
\newcommand{\CatOrnCoe}{q}

\newcommand{\aExt}{e}

Relating the definition of ornaments with our polynomial reading of
descriptions, we make the following remarks. Firstly, the ornament
code lets us only insert -- with the \(\OInsert\) code -- or delete --
with the \(\ODelete\) code -- \(\DSigma\) codes while forbidding
deletion or insertion of either \(\DPi\) or \(\DVar\) codes. In terms
of container, this translates to: shapes can be extended, while
positions must be isomorphic. Secondly, on the \(\DVar\) code, the
ornament code lets us pick any index in the inverse image of \(u\). In
terms of container, this corresponds to the coherence condition: the
initial indexing must commute with applying the ornamented indexing
followed by \(u\).  Concretely, for a container
\(\IContainer{\Shape}{\Position}{\Next}\), an ornament can be modelled
as an extension \(\CatOrnExt\), a refined indexing \(\CatOrnNext\)
subject to coherence condition \(\CatOrnCoe\) with respect to the
original indexing:
\[
\left\{
\Code{
\TypeAnn{\Var{\CatOrnExt}}{\Var{\Shape}\: (\Var{\MorphTarget}\: \Var{l}) \To \Set} \\
\TypeAnn{\Var{\CatOrnNext}}{\Var{\CatOrnExt}\:\Var{\aShape} \To \Var{\Position}\: \Var{\aShape} \To \Var{K}} \\
\TypeAnn{\Var{\CatOrnCoe}}{\Forall{\Var{\aPosition}}{\Var{\Position}\: \Var{\aShape}}
                  \Var{\MorphSource}\: (\Var{\Next^+}\: \Var{\aExt}\: \Var{\aPosition}) \PropEqual \Var{\Next}\:\Var{\aPosition}}
}
\right.
\]
\if 0
\begin{example}[]

\newcommand{\CatOrnExtVec}{\green{\CatOrnExt_{\CN{Vec}}}}
\newcommand{\CatOrnNextVec}{\green{\Next^+_{\CN{Vec}}}}



\[
\left\{
\Code{
\Let{\CatOrnExtVec &
         \Var{\NatShape}\: ((\Lam{\_}{\Void})\: \Var{n})}
    {\Set}{
\Return{\CatOrnExtVec & (\InjLeft[\Void])}
       {\Var{n} \PropEqual \Zero}
\Return{\CatOrnExtVec & (\InjRight[\Void])}
       {A \Times \SigmaTimes{\Var{m}}{\Nat} \Var{n} \PropEqual \Suc[\Var{m}]}
} \\
\Let{\CatOrnNextVec &
         \PiTel{\Var{\aShape}}{\Var{\NatShape} ((\Lam{\_}{\Void})\: \Var{n})} &
         \PiTel{\Var{\aExt}}{\CatOrnExtVec\: \aShape} &
         \PiTel{\Var{\aPosition}}{\Var{\NatPos}\: \Var{\aShape}}}
    {\Nat}{
\Return{\CatOrnNextVec &
            (\InjRight[\Void]) &
            \Pair{\Var{a}}{\Pair{\Var{m}}{\Refl{}{}}}} &
            \Void
       {\Var{m}}
}
}
\right.
\]

\end{example}
\fi
Equivalently, the family of set \(\CatOrnExt\) can be understood as
the inverse image of a function
\(\TypeAnn{\MorphShape}{\Var{\Shape^+}\:\Var{l} \To \Var{\Shape}\:
  (\Var{\MorphTarget}\: \Var{l})}\). The function \(\CatOrnNext\) is
then the next index function of a container with shapes \(\Shape^+\)
and positions \(\Position \Compose \MorphShape\). Put otherwise, the
morphism on shapes \(\MorphShape\) together with the coherence
condition \(\CatOrnCoe\) form a cartesian morphism from
\(\IContainer{\Shape^+}{\Position \Compose \MorphShape}{\Next^+}\) to
\(\IContainer{\Shape}{\Position}{\Next}\)! To gain some intuition, the
reader can revisit the cartesian morphism of
Example~\ref{example:cart-morph-list-nat} as an ornament of container
-- by simply inverting the morphism on shapes -- and as an ornament of
description -- by relating it with the ornament \(\ListO\)
(Example~\ref{ex:nat-orn-list}).

We shall now formalise this intuition by proving the following
isomorphism:
\begin{lemma}
  \label{lemma:orn-to-polyc}

Ornaments describe cartesian morphisms between polynomial functors, \ie we have the isomorphism
\[
\orn[D\: u\: v] \Iso \PolyC{\LCCC{E}}(\_,D)_{u,v}
\]

\end{lemma}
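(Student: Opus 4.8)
The plan is to prove the isomorphism in two stages and then transport it along the description/container dictionary. For the first stage I would unpack the right-hand side: by Lemma~\ref{lemma:idesc-to-is} the description $D$ presents a container $\IDescToIS{D} = \IContainer{\Shape}{\Position}{\Next}$, and a square of $\PolyC{\LCCC{E}}$ into $D$ framed by $u$ and $v$ (Example~\ref{example:polyc-bicat}) amounts to a source container together with a map on shapes $\MorphShape$, a witness $\MorphPosition$ that the source positions are the pullback of $\Position$ along $\MorphShape$, and a coherence proof. Since $\MorphPosition$ forces those positions to be $\Position \Compose \MorphShape$ and leaves the source next-index function as a map $\CatOrnNext$ into the refined index set, this is equivalent --- exactly the ``inverting the morphism on shapes'' move of the paragraph preceding the statement --- to the triple $(\CatOrnExt, \CatOrnNext, \CatOrnCoe)$ over $\IContainer{\Shape}{\Position}{\Next}$, with $\CatOrnExt$ the inverse image of $\MorphShape$. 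I would record this as a sub-lemma in pure container calculus: cartesian morphisms into a fixed container, with arbitrary source and framed by $(u,v)$, are in bijection with such triples, independently of any feature of descriptions.

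For the second stage I would exhibit this triple as the interpretation of an ornament, by structural induction. Since $\orn[D\: u\: v]$ unfolds to $\PiTo{j}{J}\Orn[(D\:(v\:j))\:u]$ and the triple is likewise a $\Pi$ over $J$, it suffices to work fibrewise and show, by induction on an $\IDesc[K]$ code $E$, that $\Orn[E\: u]$ is isomorphic to the matching fibre of the triple over $\IDescToIS{E}$. Figure~\ref{fig:desc-to-cont} pins down which code feeds which component --- $\DUnit$ and $\DSigma$ build $\Shape$, $\DVar$ and $\DPi$ build $\Position$, and $\DVar$ alone builds $\Next$ --- and the ornament codes of Figure~\ref{fig:universe-orn} line up accordingly: on $\DUnit$ and $\DSigma$ the extension $\CatOrnExt$ is unconstrained, which is exactly the freedom of the $\OInsert$, $\ODelete$ and $\DSigma$ ornament codes; on $\DPi$ the positions must be preserved, matching the $\DPi$ ornament code while threading $\CatOrnExt$ and $\CatOrnNext$ through the $\Pi$; and on $\DVar[k]$ the sole position must be preserved --- so neither $\OInsert$ nor $\ODelete$ is available there --- the one remaining choice being a refined index in $u \Inverse k$, which is precisely the $\DVar$ ornament code, with $\CatOrnCoe$ recording that this index lies over $k$. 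Composing these component-wise bijections along the induction, and unfolding the defining clauses of $\Interpretorn{\_}$, gives the required isomorphism between $\Orn[E\: u]$ and the fibre of the triple; taking the $\Pi$ over $j$ then yields the statement.

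The main obstacle is the bookkeeping of the second stage, not any conceptual subtlety. The ornament universe is indexed, so the induction hypothesis has to be stated for an arbitrary sub-description carrying the same $u$ before being specialised; and checking that the two constructed maps are mutually inverse forces extensional reasoning about the inverse-image types $u \Inverse k$ and the fibres of $\MorphShape$, and about the re-association of nested $\Sigma$- and $\Pi$-types --- precisely the ``rather painful extensional equalities'' flagged earlier for the container-based argument. Following the pragmatic line already taken for Lemma~\ref{lemma:idesc-to-is}, I would exhibit both directions of the isomorphism explicitly by recursion and then argue that the round-trips are pointwise equal, appealing to function extensionality and to uniqueness of identity proofs where the coherence components meet, rather than discharging every equation in detail.
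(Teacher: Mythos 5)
Your first stage is sound, but it is not new relative to the paper: it is exactly the observation made in the paragraph preceding the lemma (an extension family is the inverse image of a map on shapes, cartesianness forces the source positions to be \(\Position \Compose \MorphShape\), and \(\CatOrnCoe\) is the frame coherence), and your ornament-to-morphism direction is essentially the paper's \(\OrnToCartArr\). The genuine gap is in your second stage. First, a misreading of the grammar: the \(\OInsert\) clause of \(\Orn\) is stated for an \emph{arbitrary} description, so insertion is available at \(\DVar[k]\) and at \(\DPi\) just as much as at \(\DUnit\)/\(\DSigma\); only \(\ODelete\) is confined to \(\DSigma\). So your \(\DVar\) case already undercounts the ornament side. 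Much more seriously, the fibrewise induction cannot go through at the \(\DPi\) case. An ornament of \(\DPi[S\:T]\) consists, apart from insertions placed \emph{before} the \(\DPi\) (which cannot depend on the shape), of a family \(T^+\) of ornaments of the branches \(T\:s\); hence the extension it induces over a shape \(f\) of \(\IDescToIS{\DPi[S\:T]}\) is forced to be of product form: an outer set times, for each \(s\), a set depending only on the component \(f\:s\). A cartesian morphism into the same container may carry an \emph{arbitrary} family over the function-space of shapes. Concretely, take \(S = \Unit \Sum \Unit\) and \(T\:s = \DSigma[(\Unit \Sum \Unit)\:\Lam{\_}{\DUnit}]\): the container of \(\DPi[S\:T]\) has four shapes and no positions, and the ``diagonal'' cartesian morphism from the two-shape, position-free polynomial, whose fibres have cardinalities \(1,0,0,1\), is not of product form (its cardinality matrix has rank 2), so it is not the interpretation of any ornament of this code, even up to isomorphism of the source. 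So the code-by-code bijection you propose to prove by induction on \(E\) is false for such \(E\); ``threading \(\CatOrnExt\) and \(\CatOrnNext\) through the \(\Pi\)'' only reaches the product-shaped triples.

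This is precisely why the paper's backward map \(\CartArrToOrn\) does \emph{not} proceed by induction on \(D\): it passes to the container normal form \(\ISToIDesc{\IDescToIS{D}}\) --- a single \(\DSigma\) over the shapes, then an \(\OInsert\) of the fibre of the shape map, which may therefore depend on the \emph{whole} shape, including all choices made under any \(\DPi\), then a copied \(\DPi\) over positions and a refined \(\DVar\) --- and the stated isomorphism is read up to that normalisation, consistently with the framed bicategory \(\IDescC\), whose squares are ornaments interpreting to a code merely \emph{isomorphic} to the source. To repair your plan you must either weaken stage 2 to essential surjectivity up to precomposition with the normalisation isomorphism of \(D\) (at which point you have reconstructed the paper's argument), or restrict the claim to descriptions already in \(\DSigma\)--\(\DPi\)--\(\DVar\) form. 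The bookkeeping issues you anticipate (function extensionality, identity proofs at the coherence components) are real but secondary to this structural point.
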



In terms of cartesian morphism of polynomials, extending the shape
corresponds to the morphism \(\alpha\). Enforcing that the positions,
\ie the structure, of the datatype remain the same corresponds to the
pullback along \(\alpha\). The refinement of indices corresponds to
the frame morphisms commuting.

\newcommand{\OrnToCartArr}{\green{\phi}}
\newcommand{\CartArrToOrn}{\green{\psi}}

\begin{proof}


We develop the proof on the container presentation: this lets us
work in type theory, where is anchored the definition of ornaments. It
is a necessary hardship, since no other decent model of ornaments is
available to us. After this bootstrapping process, we shall have the
abstract tools necessary to lift off type theory.


\Spacedcommand{\OrnToCartForget}{\Function{forget}}

The first half of the isomorphism consists of mapping an ornament
\(o\) of a description \(D\) to a cartesian morphism from the
container described by \(\Interpretorn{o}\) to the container described
by D. By definition of cartesian morphisms, we simply have to give a
map from the shape of \(\Interpretorn{o}\) to the shape of \(D\):
\[\Code{
\Let{\OrnToCartArr & 
     \PiTel{\Var{o}}{\orn[\Var{D}\: \Var{\MorphSource}\: \Var{\MorphTarget}]}}
    {\IDescToIS{\Interpretorn{\Var{o}}}
          \IContainerCartArr{\Var{\MorphSource}}{\Var{\MorphTarget}}
          \IDescToIS{\Var{D}}}{
\Return{\OrnToCartArr & \Var{o}}
       {\IContainerCartMor{(\Lam{\Var{i}}{\OrnToCartForget[\Var{o}\: (\Var{\MorphSource}\: \Var{i})]})}
                          {\_}{\_}\: \where}
} \\
\quad
\Let{\OrnToCartForget &
     \PiTel{\Var{O}}{\Orn[\Var{D}\: \Var{\MorphSource}]} &
     \PiTel{\Var{sh}}{\IDescToShape[\InterpretOrn{\Var{O}}]}}
    {\IDescToShape[\Var{D}]}{
\Return{\OrnToCartForget & 
        \DUnit & 
        \Void}
       {\Void}
\Return{\OrnToCartForget & 
        (\DPi[\Var{T^+}]) &
        \Var{f}}
       {\Lam{\Var{a}}{\OrnToCartForget[(\Var{T^+}\: \Var{a})\: (\Var{f}\: \Var{a})]}}
\Return{\OrnToCartForget & 
        (\DSigma[\Var{T^+}]) &
        \Pair{\Var{a}}{\Var{sh}}}
       {\Pair{\Var{a}}{\OrnToCartForget[(\Var{T^+}\: \Var{a})\: \Var{sh}]}}
\Return{\OrnToCartForget &
        (\DVar[(\Inv[\Var{j}])]) &
        \Void}
       {\Void} 
\Return{\OrnToCartForget & 
        (\OInsert[\Var{a}\: \Var{D^+}]) &
        \Pair{\Var{a}}{\Var{sh}}}
       {\OrnToCartForget[(\Var{D^+}\: \Var{a})\: \Var{sh}]}
\Return{\OrnToCartForget & 
        (\ODelete[\Var{s}\: \Var{O}]) &
        \Var{sh}}
       {\Pair{\Var{s}}{\OrnToCartForget[\Var{O}\: \Var{sh}]}}
}
}\]


We are then left to check that (extensionally) the positions are
constructed by pullback and the indexing is coherent. This is indeed
the case, even though proving it in type theory is cumbersome. On
positions, the ornament does not introduce or delete any new \(\DPi\)
or \(\DVar\): hence the positions are left unchanged. On indices, we
rely on \(u \Inverse k\) to ensure that the more precise indexing is
coherent by construction.


In the other direction, we are given a cartesian morphism from
\(\Var{F}\) to \(\Var{G}\). We return an ornament of the description
of \(G\). For the isomorphism to hold, this ornament must interpret to
the description of \(F\):
\[
\Let{\CartArrToOrn & 
     \PiTel{\Var{m}}{\Var{F} \IContainerCartArr{\Var{\MorphSource}}{\Var{\MorphTarget}} \Var{G}}}
    {\orn[\ISToIDesc{\Var{G}}\: \Var{\MorphSource}\: \Var{\MorphTarget}]}{
\Return{\CartArrToOrn & 
        (\IContainerCartMor{\Var{\CN{forget}}}{\_}{\_})}
       {\Lam{\Var{j}}
            {\DSigma[\Lam{\Var{sh}}
             \OInsert[(\Var{\CN{forget}} \Inverse \Var{sh})\: \Lam{\Var{ext}}
             \DPi[\Lam{\Var{ps}} 
             \DVar[(\Inv[(\Var{n}_{\Var{F}}\: \Var{ps})])]]]]}}
}
\]


Indeed, the description of \(G\) is a \(\DSigma\) of its shape,
followed by a \(\DPi\) of its positions, terminated by a \(\DVar\) at
the next index. To ornament \(G\) to \(F\), we simply have to
\(\OInsert\) the inverse image of \(\Var{\CN{forget}}\), \ie the
information that extends \(G\) to \(F\). As for the next index, we
can legitimately use \(F\)'s indexing function: the coherence
condition of the cartesian morphism ensures that it is indeed in the
inverse image of the reindexing function.


Having carefully crafted the definition of \(\OrnToCartArr\) and
\(\CartArrToOrn\), it should be obvious that these functions are
inverse of each other. It is sadly not that obvious to an
(intensional) theorem prover. Hence, we will not attempt to prove it
in type theory here.

\end{proof}


\paragraph{Relation with ornamental algebras~\citep[\S{4}]{mcbride:ornament}} 
To introduce the notion of ``ornamental algebra'', the second author
implemented the \(\Function{erase}\) helper function taking an
ornamented type to its unornamented form. This actually corresponds to
our transformation \(\OrnToCartArr\), followed by the interpretation
of the resulting cartesian morphism. The \(\Function{erase}\) function
given in the original presentation is indeed natural and cartesian. 


\newcommand{\IDescC}{\mathit{IDesc}^c}

In the previous section, we have established a connection between
descriptions and polynomials. We have now established a connection
between ornaments and cartesian morphisms of polynomials. It thus
makes sense to organise descriptions in a framed bicategory
\(\IDescC\):
\begin{definition}[Framed bicategory \(\IDescC\)]

  The framed bicategory \(\IDescC\) is defined by:
\begin{itemize}
\item Objects: sets
\item Vertical morphisms: set morphisms
\item Horizontal morphisms: descriptions, framed by \(I\) and \(J\)
\item Squares: a square from \(F\) to \(G\) framed by \(u\) and
  \(v\) is an ornament \(\TypeAnn{o}{\orn[G\: u\: v]}\) of \(G\) that
  interprets to (a code isomorphic to) \(F\)
\end{itemize}
Where, as for \(\PolyC{\LCCC{E}}\) (Example~\ref{example:polyc-bicat}),
the frame structure consists in reindexing a description along a pair
of functions.

\end{definition}


\subsection{A framed biequivalence}

We are now ready to establish an equivalence of category between
\(\IDescC\) and \(\PolyFuncC{\LCCC{E}}\), thus completing our journey
from the type theoretical definition of ornaments to its model as
cartesian morphisms.

\begin{proposition}
  The double category \(\IDescC\) is framed biequivalent to
  \(\PolyFuncC{\LCCC{E}}\).
\end{proposition}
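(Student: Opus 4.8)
The plan is to route the biequivalence through the framed bicategory $\PolyC{\LCCC{E}}$: \citet[Theorem 3.13]{gambino:poly-monads}, recalled just above, already exhibits a framed biequivalence between $\PolyC{\LCCC{E}}$ and $\PolyFuncC{\LCCC{E}}$, and framed biequivalences compose, so it is enough to produce one between $\IDescC$ and $\PolyC{\LCCC{E}}$. The candidate is the translation $\IDescToIS{\_}$ of Figure~\ref{fig:desc-to-cont}, which I would promote to a lax double functor $\mathcal{I} : \IDescC \to \PolyC{\LCCC{E}}$. It acts as the identity on the base category $\Set$; on a horizontal $1$-cell --- a description $D : \idesc[I\: J]$ --- it returns the container $\IDescToIS{D}$, whose source and target frames agree with those of $D$ by Figure~\ref{fig:poly-cont}; and on a square --- \ie an ornament $o : \orn[G\: u\: v]$, whose interpretation is the top boundary --- it returns the cartesian morphism $\OrnToCartArr(o) : \IDescToIS{\Interpretorn{o}} \IContainerCartArr{u}{v} \IDescToIS{G}$ constructed in the proof of Lemma~\ref{lemma:orn-to-polyc}.

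Next I would check that $\mathcal{I}$ really is a lax (in fact pseudo) double functor. The comparison $2$-cells come from the isomorphism identifying $\mathcal{I}$ applied to an identity description with the identity container, and from an isomorphism $\IDescToIS{D \ComposeDesc E} \Iso \IDescToIS{D} \odot \IDescToIS{E}$ obtained by combining Lemma~\ref{lemma:desc-closed-composition} with the compatibility of $\IDescToIS{\_}$ and $\InterpretIDesc{\_}$ that underlies Lemma~\ref{lemma:idesc-to-is}; functoriality of $o \mapsto \OrnToCartArr(o)$ for vertical and horizontal composition of squares reduces to the fact --- established for ornaments in the cited literature~\citep{mcbride:ornament,dagand:fun-orn} --- that composition of ornaments transports to composition of cartesian morphisms.

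Then I would show $\mathcal{I}$ is locally an equivalence, from which the biequivalence follows. On the base it is the identity, hence an equivalence. For horizontal $1$-cells with fixed frames, essential surjectivity is exactly Lemma~\ref{lemma:idesc-to-is} (equivalently Proposition~\ref{prop:desc-iso-polyfun}): every container indexed by $I$ and $J$ is isomorphic to $\IDescToIS{D}$ for some description $D$. For squares with fixed frames $u, v$ and fixed horizontal boundaries, $\OrnToCartArr$ is a bijection: a square of $\IDescC$ from $F$ to $G$ over $(u, v)$ is by definition an ornament of $G$ whose interpretation is isomorphic to $F$, and $\OrnToCartArr$ is the restriction, to such ornaments and their images, of the isomorphism $\orn[G\: u\: v] \Iso \PolyC{\LCCC{E}}(\_, \IDescToIS{G})_{u,v}$ of Lemma~\ref{lemma:orn-to-polyc}. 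A lax double functor that is an equivalence on the base and locally an equivalence in this sense is a framed biequivalence; concretely, the pseudo-inverse $\mathcal{J}$ is $\ISToIDesc{\_}$ on $1$-cells and $\CartArrToOrn$ on squares, and the framed natural isomorphisms $1 \Iso \mathcal{J} \circ \mathcal{I}$ and $\mathcal{I} \circ \mathcal{J} \Iso 1$ are assembled from the round trips $D \Iso \ISToIDesc{\IDescToIS{D}}$ and $\IDescToIS{\ISToIDesc{C}} \Iso C$ --- the latter a plain isomorphism of containers, the former an isomorphism of interpreted functors, which suffices for the globular squares of $\IDescC$. Composing $\mathcal{I}$ with the Gambino--Kock biequivalence $\PolyC{\LCCC{E}} \to \PolyFuncC{\LCCC{E}}$ then gives the statement.

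The hard part is not a single step but the double-categorical bookkeeping that makes the above legitimate. One must first pin down that $\IDescC$ is a genuine double category --- that ornaments admit identities, compose vertically, compose horizontally along $\ComposeDesc$, and carry a frame/reindexing action --- and then verify that $\mathcal{I}$ respects all of it coherently, \ie that ornament composition matches composition of cartesian morphisms in $\PolyC{\LCCC{E}}$ both vertically and horizontally and that framing is preserved on the nose. These facts about ornaments are only sketched here and in the cited work, so spelling them out carefully --- ideally, as with the rest of the paper, in Agda up to the usual extensional equalities --- is where the real effort lies; granting them, the biequivalence is an immediate consequence of Lemmas~\ref{lemma:idesc-to-is} and~\ref{lemma:orn-to-polyc}.
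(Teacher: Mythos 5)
Your proposal is correct and follows essentially the same route as the paper: the identity functor on the base $\Set$, the translation $\IDescToIS{\_}$ on horizontal $1$-cells (essentially surjective by Lemma~\ref{lemma:idesc-to-is}), and $\OrnToCartArr$ on squares (full and faithful by the isomorphism of Lemma~\ref{lemma:orn-to-polyc}), composed with the Gambino--Kock biequivalence between polynomials (containers) and polynomial functors. The only difference is that you spell out the double-functor bookkeeping (comparison cells, compatibility with composition and framing) that the paper leaves implicit.
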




\begin{proof}


As for the proof of Lemma~\ref{lemma:orn-to-polyc}, we work from
\(\IDescC\) to \(\ICont\) to prove this theorem. Since \(\ICont\) is
equivalent to \(\PolyFuncC{\LCCC{E}}\), this gives the desired result. To
prove a framed biequivalence, we need a functor on the base category
and another on the total category. In this particular case, both base
categories are \(\Set\): we shall therefore take the identity functor,
hence trivialising the natural isomorphisms on composition,
identity, and frames.


On the total category, we prove the equivalence by exhibiting a full
and faithful functor from \(\IDesc\) to \(\ICont\) that is essentially
surjective on objects. Unsurprisingly, this functor is defined on
objects by \(\IDescToIS{\_}\), which is indeed essentially
surjective by Lemma~\ref{lemma:idesc-to-is}. The morphism part is
defined by \(\OrnToCartArr\), which is full and faithful by
Lemma~\ref{lemma:orn-to-polyc}.

\end{proof}


We have therefore established the following equivalences of framed
bicategories:
\[
\begin{tikzpicture}
\matrix (m) [matrix of math nodes
            , row sep=1em
            , column sep=2em
            , text height=1.5ex
            , text depth=0.25ex
            , ampersand replacement=\&]
{ 
    \IDescC  \&        \& \PolyC{\LCCC{E}}  \& \PolyFuncC{\LCCC{E}} \\
             \& \ICont \&               \&                  \\
};
\path[<->] 
   (m-1-1) edge (m-2-2)
   (m-2-2) edge (m-1-3)
   (m-1-3) edge (m-1-4)
;
\path[dashed,<->]
   (m-1-1) edge (m-1-3)
;
\end{tikzpicture}
\]
%
%
We may now conflate the notions of ornament, cartesian morphism, and
cartesian natural transformation. In particular, we shall say that
``\(F\) ornaments \(G\)'' when we have a cartesian morphism from \(F\) to
\(G\). Let us now raid the polynomial toolbox for the purpose of
programming with ornaments. The next section shows the beginning of
what is possible.


\section{Tapping into the categorical structure}
\label{sec:orn-structure}


In the previous section, we have characterised the notion of ornament
in terms of cartesian morphism. We now turn to the original ornamental
constructions~\citep{mcbride:ornament} -- such as the ornamental algebra
and the algebraic ornament -- and rephrase them in our categorical
framework. Doing so, we extract the structure governing their type
theoretic definition.

Next, we study the categorical structure of cartesian morphisms and
uncover novel and interesting ornamental constructions. We shall see
how the identity, composition, and frame reindexing translate into
ornaments. We shall also be interested in pullbacks in the category
\(\PolyFuncC{\LCCC{E}}\) and the functoriality of the derivative in
that category.


\subsection{Ornamental algebra}

\newcommand{\CartesianArr}[2]{\overset{#1}{\underset{#2}{\Longrightarrow^c}}}

\if 0

As remarked in the previous section, the \(\Function{erase}\) function
that helps implementing the ornamental algebra corresponds exactly to
the cartesian natural transformation described by the ornament. 

\fi

Ornamenting a datatype is an effective recipe to augment it with new
information. We thus expects that, given an ornamented object, we can
\emph{forget} its extra information and regain a raw object. This
projection is actually a generic operation, provided by the
\emph{ornamental algebra}. It is a corollary of the very definition of
ornaments as cartesian morphisms.

\renewcommand{\OrnForgetAlg}[1]{\Function{\(#1\)-forgetAlg}}

\begin{corollary}[Ornamental algebra]
From an ornament \(\TypeAnn{o}{F \CartesianArr{\MorphSource}{\MorphTarget} G}\), we obtain the
\emph{ornamental algebra} \(\TypeAnn{\OrnForgetAlg{o}}{F\: (\lfp{G} \circ v) \to
  \lfp{G} \circ u}\).
\end{corollary}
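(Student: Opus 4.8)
The plan is to read $\OrnForgetAlg{o}$ directly off the cartesian natural transformation that $o$ denotes, evaluated at the initial algebra of $G$. This is a corollary in the strict sense: all the content has gone into identifying ornaments with cartesian morphisms, and what remains is bookkeeping.

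First I would produce the transformation. By the framed biequivalence between $\IDescC$ and $\PolyFuncC{\LCCC{E}}$ just established --- concretely, by Lemma~\ref{lemma:orn-to-polyc} followed by the interpretation of polynomial morphisms --- the ornament $o : F \CartesianArr{\MorphSource}{\MorphTarget} G$ denotes a cartesian natural transformation $\phi : \CatExists{\MorphTarget} \circ F \NatTrans G \circ \CatExists{\MorphSource}$. Transposing along $\CatExists{\MorphTarget} \leftAdj \Reindex{\MorphTarget}$ gives $\overline{\phi} : F \NatTrans \Reindex{\MorphTarget} \circ G \circ \CatExists{\MorphSource}$. In the container reading $\phi$ is just the interpretation of the morphism part $\OrnToCartArr$ of Lemma~\ref{lemma:orn-to-polyc}: it forgets the inserted $\Sigma$'s on shapes by means of $\OrnToCartForget$ and is the identity on positions, the ornamented positions being a pullback of the original ones.

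Then I would assemble the algebra. Since $\lfp{G}$ is assumed to exist, $G$ is an endofunctor and $\lfp{G} \circ \MorphSource$, $\lfp{G} \circ \MorphTarget$ both name $\lfp{G}$ reindexed into the slice on which $F$ acts (for an ornament of a genuine inductive family the two frame maps are the common index-refinement; I keep both letters only to mirror the statement). I would then take $\OrnForgetAlg{o}$ to be the composite
\[
F(\lfp{G} \circ \MorphTarget)
   \stackrel{\overline{\phi}}{\longrightarrow}
\Reindex{\MorphTarget}\, G\, \CatExists{\MorphSource}(\lfp{G} \circ \MorphTarget)
   \stackrel{\Reindex{\MorphTarget}\, G\, \epsilon}{\longrightarrow}
\Reindex{\MorphTarget}\, G\, \lfp{G}
   \stackrel{\Reindex{\MorphTarget}\, \inLfp}{\longrightarrow}
\lfp{G} \circ \MorphSource ,
\]
where $\epsilon : \CatExists{\MorphSource}(\lfp{G} \circ \MorphTarget) \to \lfp{G}$ is the counit of $\CatExists{\MorphSource} \leftAdj \Reindex{\MorphSource}$ at $\lfp{G}$ (here $\MorphSource = \MorphTarget$). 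In the internal language this unfolds to: an element of $F(\lfp{G} \circ \MorphTarget)$ at an index is an ornamented shape together with recursive sub-results at refined indices; $\OrnForgetAlg{o}$ discards the extension of the shape via $\OrnToCartForget$ --- the recursive sub-results are then exactly what the bare $G$-shape asks for, since positions are preserved --- and applies $\inLfp$, with the coherence condition $\CatOrnCoe$ of the ornament making the recursive indices match. Folding $\OrnForgetAlg{o}$ over $\lfp{F}$ recovers the $\Function{erase}$/forgetful map of \citet{mcbride:ornament}, in agreement with the remark following Lemma~\ref{lemma:orn-to-polyc}.

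I do not expect a conceptual obstacle: the construction is forced. The only genuinely tedious step --- and the one an intensional model such as the Agda development must actually discharge --- is the bookkeeping around reindexings and (co)units, namely checking that transposition along $\CatExists{\MorphTarget} \leftAdj \Reindex{\MorphTarget}$, the counit $\epsilon$, and the various $\Reindex{-}$'s compose to exactly the stated domain $F(\lfp{G} \circ \MorphTarget)$ and codomain $\lfp{G} \circ \MorphSource$, and transporting the coherence equality $\CatOrnCoe$ through the interpretation so that the reindexed recursive results may legitimately be fed to $\inLfp$.
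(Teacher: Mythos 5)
Your proposal is correct and is essentially the paper's own proof: the paper simply applies the interpreted cartesian natural transformation \(o\) at \(\lfp{G}\) and post-composes with \(\inLfp\), and your mate-plus-counit composite (together with the internal-language reading via the forgetful map on shapes) is just an explicit unfolding of what that component is. The only cosmetic difference is that you restrict to \(\MorphSource = \MorphTarget\) to make the reindexings line up, a bookkeeping point the paper glosses over by reading the square's component directly in the form \(F\:(\lfp{G} \circ v) \to (G\: \lfp{G}) \circ u\).
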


\begin{proof}


  We apply the natural transformation \(o\) at \(\lfp{G}\) and
  post-compose by \(\inLfp\):
\[
\OrnForgetAlg{o} : F\: (\lfp{G} \circ v) \stackrel{o_{\lfp{G}}}{\longrightarrow} 
                   (G\: \lfp{G}) \circ u \stackrel{\inLfp}{\longrightarrow}
                   \lfp{G} \circ u
\]
\end{proof}


Folding the ornamental algebra, we obtain a map from the ornamented
type \(\lfp{F}\) to its unornamented version \(\lfp{G}\). In effect,
the ornamental algebra describes how to \emph{forget} the
extra-information introduced by the ornament.


\begin{example}[Ornamental algebra of the \(\List{}\) ornament]
  \label{ex:orn-alg-list}

The cartesian morphism from list to natural numbers
(Example~\ref{example:cart-morph-list-nat}) maps the \(\Nil\)
constructor to \(\Zero\), while the \(\Cons\) constructor is mapped to
\(\Suc\). Post-composing by \(\inLfp\), we obtain a natural
number. This is the algebra computing the length of a list.

\end{example}


\subsection{Algebraic ornaments}

\newcommand{\AlgOrn}[2]{{#1}^{#2}}


The notion of algebraic ornament was initially introduced by the
second author~\citep{mcbride:ornament}. A similar categorical
construction, defined for any functor, was also presented by~\citet{atkey:inductive-refinement}. In this section, we reconcile
these two works and show that, for a polynomial functor, the
refinement functor can itself be internalised as a polynomial functor.


\begin{definition}[Refinement functor~{\citep[\S{4.3}]{atkey:inductive-refinement}}]

Let \(F\) an endofunctor on \(\Slice{\LCCC{E}}{I}\). Let
\((\TypeAnn{X}{\Slice{\LCCC{E}}{I}}, \TypeAnn{\alpha}{F\: X \to X})\)
an algebra over \(F\).

The \emph{refinement functor} is defined by:
\[
\AlgOrn{F}{\alpha} \triangleq 
    \TypeAnn{\CatExists{\alpha} \circ \Lift{F}}
            {\Slice{(\Slice{\LCCC{E}}{I})}{X} \to \Slice{(\Slice{\LCCC{E}}{I})}{X}}
\]
Where \(\Lift{F}\) -- the lifting of
\(F\)~\citep{hermida:induction,fumex:phd} -- is taken, in an LCCC, to
be the morphism part of the functor \(F\).

\end{definition}


The idea, drawn from refinement types~\citep{freeman:refinement}, is
that a function \(\TypeAnn{\Fold{\alpha}}{\lfp{F} \To X}\) can be
thought of as a predicate over \(\lfp{F}\). By \emph{integrating} the
algebra \(\alpha\) \emph{into} the signature \(F\), we obtain a
signature \(\AlgOrn{F}{\alpha}\) indexed by \(X\) that describes the
\(F\)-objects satisfying, by construction, the predicate
\(\Fold{\alpha}\). Categorically, this translates to:
\begin{theorem}[Coherence property of algebraic ornament]
The fixpoint of the algebraic ornament of \(P_F\) by \(\alpha\)
satisfies the isomorphism
\(
\lfp{\AlgOrn{P_F}{\alpha}} \Iso \Sigma_{\Fold{\alpha}} \One[\lfp{F}]
\)
where \(\TypeAnn{\One}{\SliceL{E}{I} \to [\SliceL{E}{I},
    \SliceL{E}{I}]}\), the terminal object functor, maps objects \(X\)
to \(\CatId[X]\). 
\end{theorem}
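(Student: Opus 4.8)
The strategy is to identify the category of $\AlgOrn{P_F}{\alpha}$-algebras with the slice $\mathbf{Alg}(P_F)/(X,\alpha)$ of the category of $P_F$-algebras over the given algebra, and then simply read off its initial object.

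First I would unfold the refinement functor using the two elementary facts that govern $\Lift{P_F}$ and $\CatExists{\alpha}$. On one hand, $\Lift{P_F}$ is a \emph{lift} of $P_F$ along the forgetful functor $U : \Slice{(\Slice{\LCCC{E}}{I})}{X} \to \Slice{\LCCC{E}}{I}$ sending an object $p : Y \to X$ to its domain $Y$: indeed $\Lift{P_F}$ carries $p$ to $P_F\,p : P_F Y \to P_F X$, so $U\,\Lift{P_F} = P_F\,U$. On the other hand $\CatExists{\alpha}$, for $\alpha : P_F X \to X$, is post-composition with $\alpha$, so $U\,\CatExists{\alpha} = U$ likewise. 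Hence an $\AlgOrn{P_F}{\alpha}$-algebra, \ie a morphism $\CatExists{\alpha}\,\Lift{P_F}(p) \to p$ in $\Slice{(\Slice{\LCCC{E}}{I})}{X}$, projects under $U$ to a $P_F$-algebra $\beta : P_F Y \to Y$, while the requirement that this morphism lives over $X$ --- the structure map of $\CatExists{\alpha}\,\Lift{P_F}(p)$ over $X$ being $\alpha \circ P_F\,p$ --- says exactly that $p : (Y,\beta) \to (X,\alpha)$ is a morphism of $P_F$-algebras, and conversely. I would then promote this to an isomorphism of categories
\[
\mathbf{Alg}(\AlgOrn{P_F}{\alpha}) \;\cong\; \mathbf{Alg}(P_F) / (X,\alpha),
\]
checking that $\AlgOrn{P_F}{\alpha}$-algebra morphisms correspond precisely to $P_F$-algebra morphisms compatible with the two structure maps into $(X,\alpha)$ --- a routine unwinding of the adjunction $\CatExists{\alpha} \dashv \Reindex{\alpha}$ and of the functoriality of the lifting.

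Granting this isomorphism, the conclusion is formal. The initial object of any slice $\mathcal{C}/c$ is the initial object of $\mathcal{C}$ paired with its unique arrow to $c$; here the initial $P_F$-algebra is $(\lfp{F}, \inLfp)$ and the unique $P_F$-algebra morphism into $(X,\alpha)$ is the catamorphism $\Fold{\alpha}$. In particular the fixpoint $\lfp{\AlgOrn{P_F}{\alpha}}$ exists --- since $\mathbf{Alg}(P_F)$ has an initial object, $P_F$ being polynomial, so does its slice --- and, transported back into $\Slice{(\Slice{\LCCC{E}}{I})}{X}$, it is the arrow $\Fold{\alpha} : \lfp{F} \to X$. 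Finally I would observe that $\Sigma_{\Fold{\alpha}}\,\One[\lfp{F}] = \CatExists{\Fold{\alpha}}(\CatId[\lfp{F}]) = \Fold{\alpha}$, which is precisely that object, giving the stated isomorphism. The main obstacle is the middle step: being scrupulous about the equivalent presentations of the double slice $\Slice{(\Slice{\LCCC{E}}{I})}{X}$ and about the identities $U\,\Lift{P_F} = P_F\,U$ and $U\,\CatExists{\alpha} = U$, so that one genuinely obtains an isomorphism of categories --- whence a canonical isomorphism of their initial objects --- rather than a bare bijection on objects; everything after that is bookkeeping.
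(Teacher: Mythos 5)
Your proposal is correct, but it follows a genuinely different route from the paper: the paper's entire proof is a one-line appeal to Theorem 4.6 of Atkey, Johann and Ghani's work on inductive refinement, specialised to the codomain fibration, whereas you give a self-contained argument tailored to exactly this special case. Your central step is sound: since \(\Lift{P_F}\) is the morphism part of \(P_F\) and \(\CatExists{\alpha}\) is post-composition, an \(\AlgOrn{P_F}{\alpha}\)-algebra structure on an object \(p : Y \to X\) of \(\Slice{(\Slice{\LCCC{E}}{I})}{X}\) is precisely a map \(\beta : P_F\: Y \to Y\) with \(p \circ \beta = \alpha \circ P_F\: p\), i.e.\ the datum of a \(P_F\)-algebra \((Y,\beta)\) together with an algebra morphism \(p : (Y,\beta) \to (X,\alpha)\), and the morphisms match up likewise, giving the isomorphism of categories \(\mathbf{Alg}(\AlgOrn{P_F}{\alpha}) \cong \mathbf{Alg}(P_F)/(X,\alpha)\). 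Reading off the initial object of the slice as \((\lfp{F}, \inLfp)\) equipped with the catamorphism \(\Fold{\alpha}\), and observing that \(\Sigma_{\Fold{\alpha}} \One[\lfp{F}]\) is \(\CatExists{\Fold{\alpha}}(\CatId[\lfp{F}])\), i.e.\ the object \(\Fold{\alpha}\) of the double slice, yields exactly the stated isomorphism; as a bonus, your argument establishes the existence of \(\lfp{\AlgOrn{P_F}{\alpha}}\) without appealing to the fact that the refinement functor is polynomial, which the paper only proves in the subsequent proposition. What the citation buys the paper is generality and brevity --- Atkey et al.'s theorem applies to an arbitrary functor and a suitably structured fibration, not just polynomial functors in the codomain fibration, and the paper's own Remark acknowledges its proof of the companion proposition is rather specialised anyway --- while your argument buys transparency: it makes the informal reading \(\lfp{\AlgOrn{F}{\alpha}}\: i\: x \cong \{\,t : \lfp{F}\: i \mid \Fold{\alpha}\: t = x\,\}\) literally visible as the fibre description of the object \(\Fold{\alpha}\), and it is arguably closer in spirit to the paper's stated goal of exposing the structure behind the type-theoretic constructions.
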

%
%
\begin{proof}
This is an application of Theorem
4.6~\citep{atkey:inductive-refinement}, specialised to the codomain
fibration (\ie an LCCC).
\end{proof}


Informally, using a set theoretic notation, this isomorphisms reads
as
\(
\lfp{\AlgOrn{F}{\alpha}}\: i\: x \Iso 
    \ISet{\TypeAnn{t}{\lfp{F}\: i}}{\Fold{\alpha}\: t \PropEqual x}
\).
That is, the algebraic ornament \(\lfp{\AlgOrn{F}{\alpha}}\) at index
\(i\) and \(x\) corresponds \emph{exactly} to the pair of a witness
\(t\) of \(\lfp{F}\: i\) and a proof that this witness satisfies the
indexing equation \(\Fold{\alpha}\: t \PropEqual x\).  In effect, from
an algebraic predicate over an inductive type, we have an effective
procedure reifying this predicate as an inductive family. This theorem
also has an interesting computational interpretation. Crossing the
isomorphism from left to right, we obtain the
\(\Function{Recomputation}\) theorem\cite[\S{8}]{mcbride:ornament}:
from any \(\TypeAnn{t^+}{\lfp{\AlgOrn{F}{\alpha}}\: i\: x}\), we can
extract a \(\TypeAnn{t}{\lfp{F}\: i}\) together with a proof that
\(\Fold{\alpha}\: t\) equals \(x\). From right to left, we obtain the
\(\Function{remember}\) function~\citep[\S{7}]{mcbride:ornament}: from
any \(\TypeAnn{t}{\lfp{F}\: i}\), we can lift it to its ornamented
form with \(\TypeAnn{\Function{remember}\:
  t}{\lfp{\AlgOrn{F}{\alpha}}\: i\: (\Fold{\alpha}\: t)}\).


When \(F\) is a polynomial functor, we show that the refinement
functor can be internalised and presented as an ornament of \(F\). In
practice, this means that from a description \(D\) and an algebra
\(\alpha\), we can \emph{compute} an ornament code that describes the
functor \(\AlgOrn{D}{\alpha}\). This should not come as a surprise:
algebraic ornaments were originally presented as ornamentations of the
initial description~\citep[\S{5}]{mcbride:ornament}. The following
theorem abstracts the original definition:
\begin{proposition}

  Let \(F\) a polynomial endofunctor on \(\Slice{\LCCC{E}}{I}\). Let
  \((X,\alpha)\) an algebra over \(P_F\), \ie \(\TypeAnn{\alpha}{P_F X
    \to X}\).
  The refinement functor \(\AlgOrn{P_F}{\alpha}\) is polynomial and ornaments \(F\).

\end{proposition}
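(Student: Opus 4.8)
The plan is to \emph{compute}, from the description $D$ presenting $F$ and the algebra $\alpha$, an ornament code $\AlgOrn{D}{\alpha} : \orn[D\: u\: u]$ framed on both sides by the display map $u : \SigmaTimes{i}{I}{X\, i} \to I$ of the carrier $X$ (the first projection, in the internal language), and then to check that its interpretation computes the refinement functor, $\InterpretIDesc{\Interpretorn{\AlgOrn{D}{\alpha}}} \Iso \CatExists{\alpha} \circ \Lift{P_F} = \AlgOrn{P_F}{\alpha}$. Granting this, both halves of the statement follow at once: $\AlgOrn{P_F}{\alpha}$ is then a described functor, hence polynomial by Proposition~\ref{prop:desc-iso-polyfun}; and, being the interpretation of an ornament of $D$, it comes --- via Lemma~\ref{lemma:orn-to-polyc}, transported along the framed biequivalence of Section~\ref{sec:orn} --- with a cartesian morphism $\AlgOrn{P_F}{\alpha} \to P_F$ framed by $\CatExists{u}$ on either side, which is exactly what it means for $\AlgOrn{P_F}{\alpha}$ to ornament $F$.

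First I would write down $\AlgOrn{D}{\alpha}$ by recursion on $D$, reprising the original construction~\citep[\S 5]{mcbride:ornament}: the $\DUnit$, $\DSigma$ and $\DPi$ codes are copied, recursing structurally; on a recursive code pointing at an index $k$ (a $\DVar$) --- where a subtree of index $k$ is expected, and where $\alpha$ wishes to consume an element of $X\, k$ --- the ornament inserts a $\Sigma$-type storing that element $x' : X\, k$ and then refines the $\DVar$ to the index $(k, x') \in u \Inverse k$; and, at the head of each constructor, once the whole argument $xs$ has been reassembled, the ornament inserts a final $\Sigma$-type carrying the equation $\alpha\, xs \PropEqual x$. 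Read off as a container via Figure~\ref{fig:desc-to-cont}, $\AlgOrn{D}{\alpha}$ replaces the shape $\Shape\, i$ of $F$ by $\SigmaTimes{sh}{\Shape\, i}{\SigmaTimes{f}{\PiTo{p}{\Position\, sh}{X\, (\Next\, p)}}{\alpha_i(sh, f) \PropEqual x}}$, leaves the positions $\Position\, sh$ untouched, and refines the next index from $\Next\, p$ to $(\Next\, p,\, f\, p)$; the projection of shapes, together with this index whose first component is the old one, is visibly a cartesian morphism sitting on top of $F$.

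The remaining, and principal, obstacle is the reconciliation step $\InterpretIDesc{\Interpretorn{\AlgOrn{D}{\alpha}}} \Iso \CatExists{\alpha} \circ \Lift{P_F}$, which is precisely the bridge between McBride's syntactic algebraic ornament and the categorical refinement functor of~\citet{atkey:inductive-refinement}. I would unfold $\Lift{P_F}$ --- the morphism part of $P_F = \CatExists{\PolyTarget} \CatForall{\Polynome} \Reindex{\PolySource}$ --- on a display map over $X$: one finds that $\Lift{P_F}$ distributes through the $\Sigma/\Pi$ skeleton of $P_F$ and, at each recursive slot, pairs a value of $X$ at the next index with the corresponding sub-object, after which $\CatExists{\alpha}$ post-composes with $\alpha$, thereby recording $\alpha_i(sh, f) \PropEqual x$. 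Matching this against the container above then needs only the distributivity isomorphism $\PiTo{p}{\Position\, sh}{\SigmaTimes{x_p}{X\, (\Next\, p)}{Y_{(\Next\, p,\, x_p)}}} \Iso \SigmaTimes{f}{\PiTo{p}{\Position\, sh}{X\, (\Next\, p)}}{\PiTo{p}{\Position\, sh}{Y_{(\Next\, p,\, f\, p)}}}$ --- valid in any LCCC --- together with the $\Sigma$-reassociations already used in the proof of Lemma~\ref{lemma:desc-sub-polyfun}. As with Lemma~\ref{lemma:orn-to-polyc}, turning these isomorphisms into a fully formal, intensional proof is laborious but routine; all the content lies in the bookkeeping just sketched.

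For completeness I would remark that polynomiality of $\AlgOrn{P_F}{\alpha}$ can also be obtained abstractly, without a code: $\CatExists{\alpha}$ is a colift, hence polynomial by Lemma~\ref{lemma:polyfunc-smallest-class}; the lifting commutes with composition and carries pullback functors and their adjoints to polynomial functors, so $\Lift{P_F}$ is polynomial by the same lemma; whence $\CatExists{\alpha} \circ \Lift{P_F}$ is polynomial, and the cartesian morphism to $P_F$ is the projection forgetting the adjoined witnesses, its naturality square being a pullback because positions are preserved on the nose. I would nonetheless prefer the code-based argument, since only that route actually delivers the promised \emph{computation} of an ornament code and reuses the apparatus of Section~\ref{sec:orn} verbatim.
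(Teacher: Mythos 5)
Your proposal is correct in outline, but it takes a genuinely different route from the paper. The paper never leaves the abstract setting: it exhibits one cartesian natural transformation from the lifting \(\Lift{P_F}\) to \(P_F\) (using that, in an LCCC, the lifting is just the morphism part of \(P_F\), so the relevant square is functoriality plus an op-cartesian lifting), a second one from \(\CatExists{\alpha}\) to the identity polynomial, and horizontally composes them to obtain a cartesian natural transformation \(\AlgOrn{P_F}{\alpha} \Rightarrow \CatId \circ P_F \Iso P_F\); polynomiality is then immediate from Lemma~\ref{lemma:cart-induce-poly}, with no code and no appeal to Lemma~\ref{lemma:polyfunc-smallest-class}. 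Your main route instead recomputes McBride's syntactic algebraic ornament and pushes it through the equivalences of Sections~\ref{sec:desc-poly} and~\ref{sec:orn} (Proposition~\ref{prop:desc-iso-polyfun} and Lemma~\ref{lemma:orn-to-polyc}). What your route buys is exactly the effective ornament code that the paper only promises informally after the proposition statement, and your container-level description of it (extended shapes \(\SigmaTimes{sh}{\Shape\,i}{\SigmaTimes{f}{\PiTo{p}{\Position\,sh}{X\,(\Next\,p)}}{\alpha_i(sh,f) \PropEqual x}}\), untouched positions, refined next index \((\Next\,p, f\,p)\)) together with the choice/distributivity isomorphism is the right reconciliation with \(\CatExists{\alpha}\circ\Lift{P_F}\); the price is the extensional bookkeeping you acknowledge, which the paper's composition-of-squares argument avoids entirely.

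Two cautions. First, the phrase ``at the head of each constructor, once the whole argument \(xs\) has been reassembled, insert a final \(\Sigma\)-type'' hides the one real difficulty of the code-level construction: in this universe recursive positions may sit under \(\DPi\) binders, and the ornament grammar has no place ``after'' a copied \(\DPi\) at the same level, so a single global equation \(\alpha\,xs \PropEqual x\) cannot be placed by naive structural recursion on \(D\). You must either first pass to the container normal form \(\ISToIDesc{\IDescToIS{D}}\) --- legitimate, since everything here is only up to natural isomorphism, and there your \(\DSigma\)/\(\OInsert\)/\(\DPi\)/\(\DVar\) layout works verbatim --- or thread an accumulator (or insert the whole \(xs\) up front and discharge the copied \(\DSigma\)s by \(\ODelete\)). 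Second, your closing abstract variant is not quite the paper's argument: the claims that the lifting commutes with composition and sends the base-change functors and their adjoints to polynomial functors would themselves need proof, whereas the paper only needs the single cartesian transformation from \(\Lift{P_F}\) to \(P_F\) and then Lemma~\ref{lemma:cart-induce-poly}.
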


\begin{proof}

To show that \(\AlgOrn{P_F}{\alpha}\) is a polymonial ornamenting
\(F\), we exhibit a cartesian natural transformation from
\(\AlgOrn{P_F}{\alpha}\) to \(P_F\). Since \(P_F\) is polynomial, we
obtain that \(\AlgOrn{P_F}{\alpha}\) is polynomial by
Lemma~\ref{lemma:cart-induce-poly}.


First, there is a cartesian natural transformation from the lifting
\(\Lift{P_F}\) to \(P_F\). Indeed, for an LCCC, the lifting consists
of the morphism part of \(P_F\), denoted
\(\CatArr{P_F}\)~\citep{fumex:phd}. We therefore have the following
isomorphism, hence cartesian natural transformation:
\[
\begin{tikzpicture}
\matrix (m) [matrix of math nodes
            , row sep=3em
            , column sep=10em
            , text height=2ex
            , text depth=1ex
            , ampersand replacement=\&]
{ 
    \Slice{(\Slice{\LCCC{E}}{I})}{X} \&
        \Slice{(\Slice{\LCCC{E}}{I})}{P_F\: X} \\
    \Slice{\LCCC{E}}{I} \Iso \Slice{(\Slice{\LCCC{E}}{I})}{\Terminal} \&
        \Slice{(\Slice{\LCCC{E}}{I})}{\Terminal} \Iso \Slice{\LCCC{E}}{I} \\
};
\node (SetF) at ($(m-2-1) !.5! (m-2-2)$) {\(\Slice{(\Slice{\LCCC{E}}{I})}{P_F\: \Terminal}\)}
;
\path[->] 
    (m-1-1) edge node[above] {\(\CatArr{P_F}\)} (m-1-2)
    (m-2-1) edge node[below] {\(\CatArr{P_F}\)} (SetF)
    (SetF) edge node[below] {\(\CatExists{!_{P_F\: \Terminal}}\)} (m-2-2)
    (m-2-1) edge[bend right=30] node[below] {\(P_F\)} (m-2-2)
    (m-1-1) edge node[left] {\(\CatExists{!_X}\)} (m-2-1)
    (m-1-2.south west) edge node[left=10pt] {\(\CatExists{P_F\: !_X}\)} (SetF)
    (m-1-2.south) edge node[right] {\(\CatExists{!_{P_F\: X}}\)} (m-2-2)
;    
\end{tikzpicture}
\]
Indeed, unfolding the definition of \(\CatExists{f} \triangleq f \circ
\_\), the left square reduces to the functoriality of \(P_F\). The
right triangle is simply the op-cartesian lifting of 
\(
\begin{tikzpicture}[baseline=-0.65ex,scale=0.25]
\matrix (m) [matrix of math nodes
            , row sep=3em
            , column sep=6em
            , text height=1.5ex
            , text depth=0.25ex
            , ampersand replacement=\&]
{ 
                    \& P_F\: X \\
    P_F\: \Terminal \& \Terminal \\
};
\path[->] 
    (m-1-2) edge node[above=1ex] {\(P_F\: !_{X}\)} (m-2-1)
    (m-1-2) edge node[right] {\(!_{P_F X}\)} (m-2-2)
    (m-2-1) edge node[below] {\(!_{P_F \Terminal}\)} (m-2-2)
;    
\end{tikzpicture}
\). The bottom triangle commutes by the isomorphism relating the slice
over the terminal and the total category, \ie
\(\Slice{(\Slice{\LCCC{E}}{I})}{\Terminal} \Iso \Slice{\LCCC{E}}{I}\) .


There is also a cartesian natural transformation from
\(\CatExists{\alpha}\) to the identity polynomial indexed by \(J\):
\[
\begin{tikzpicture}
\matrix (m) [matrix of math nodes
            , row sep=3em
            , column sep=3em
            , text height=1.5ex
            , text depth=0.25ex
            , ampersand replacement=\&]
{ 
    \Slice{\LCCC{E}}{\CatExists{!} P_F\: X} \Iso \Slice{(\Slice{\LCCC{E}}{I})}{P_F\: X} \&
    \Slice{(\Slice{\LCCC{E}}{I})}{X} \Iso \Slice{\LCCC{E}}{\CatExists{!} X} \\
    \Slice{\LCCC{E}}{I} \Iso \Slice{(\Slice{\LCCC{E}}{I})}{\Terminal} \&
    \Slice{(\Slice{\LCCC{E}}{I})}{\Terminal} \Iso \Slice{\LCCC{E}}{I}  \\
};
\path[->] 
   (m-1-1) edge node[above] { \(\CatExists{\alpha}\) } (m-1-2)
   (m-2-1) edge[double,-] (m-2-2) 
   (m-1-1) edge node[left] { \(\CatExists{!_{P_F X}}\) } (m-2-1)
   (m-1-2) edge node[right] { \(\CatExists{!_X}\) } (m-2-2)
;
\node at ($(m-1-1) !.5! (m-2-2)$) {\(\Iso\)} ;
\end{tikzpicture}
\]


By horizontal composition of these two cartesian natural
transformations, we obtain a cartesian natural transformation from
\(\AlgOrn{P_F}{\alpha}\) to \(\CatId \circ P_F \Iso P_F\).

\end{proof}

\begin{remark}

  This proof is not entirely satisfactory: it is specialised to the
  predicate lifting in the codomain fibration. The construction of the
  cartesian natural transformation from the lifting to the functor is
  therefore a rather pedantic construction. Hopefully, a more abstract
  proof could be found.

\end{remark}


\subsection{Categorical structures}


\paragraph{Identity}
A trivial ornamental construction is the \emph{identity}
ornament. Indeed, for every polynomial, there is a cartesian morphism
from and to itself, introducing no extension and no refinement. In
terms of \(\Orn\) code, this construction simply consists in copying
the code of the description: this is a generic
program, taking a description as input and returning the identity
ornament.


\paragraph{Vertical composition}
The next structure of interest is composition. Recall that an ornament
corresponds to a (cartesian) natural transformation. There are
therefore two notions of composition. First, vertical composition lets
us collapse chains of ornaments:
\[
\begin{tikzpicture}[description/.style={fill=white,inner sep=2.5pt}]
\matrix (m) [matrix of math nodes
            , row sep=5em
            , column sep=6em
            , text height=1.5ex
            , text depth=0.25ex
            , ampersand replacement=\&]
{ 
    \Slice{\LCCC{E}}{I} \& \Slice{\LCCC{E}}{J} 
    \& \Slice{\LCCC{E}}{I} \& \Slice{\LCCC{E}}{J} \\
};
\path[->] 
   (m-1-1) edge[bend left=60] node[above] { \(F\) } (m-1-2)
   (m-1-1) edge node[description] { \(G\) } (m-1-2)
   (m-1-1) edge[bend right=60] node[below] { \(H\) } (m-1-2)
   (m-1-3) edge[bend left=60] node[above] { \(F\) } (m-1-4)
   (m-1-3) edge[bend right=60] node[below] { \(H\) } (m-1-4)
;
\path 
   (m-1-1) -- (m-1-2) node [midway,above=4pt] {\(\quad\Downarrow o_1\)}
   (m-1-1) -- (m-1-2) node [midway,below=4pt] {\(\quad\Downarrow o_2\)}
   (m-1-3) -- (m-1-4) node [midway] {\(\Code[c]{o_2 \bullet o_1\\\Downarrow}\)}
;
\node at ($(m-1-2) !.5! (m-1-3)$) {\(\Iso\)} ;
\end{tikzpicture}
\]

\begin{example}[Vertical composition of ornaments]

We have seen that \(\List{}\) ornaments \(\Nat\). We also know that
\(\Vector{}\) ornaments \(\List{}\). By vertical composition, we thus
obtain that \(\Vector{}\) ornaments \(\Nat\).

\end{example}


\paragraph{Horizontal composition}
Turning to horizontal composition, we have the following identity:
\[
\begin{tikzpicture}
\matrix (m) [matrix of math nodes
            , row sep=3em
            , column sep=6em
            , text height=1.5ex
            , text depth=0.25ex
            , ampersand replacement=\&]
{ 
    \Slice{\LCCC{E}}{I} \& \Slice{\LCCC{E}}{J} \& \Slice{\LCCC{E}}{K}  \\ 
    \Slice{\LCCC{E}}{I} \&                     \& \Slice{\LCCC{E}}{K} \\
};
\path[->] 
   (m-1-1) edge[bend left=30] node[above] { \(F_1\) } (m-1-2)
   (m-1-1) edge[bend right=30] node[below] { \(G_1\) } (m-1-2)
   (m-1-2) edge[bend left=30] node[above] { \(F_2\) } (m-1-3)
   (m-1-2) edge[bend right=30] node[below] { \(G_2\) } (m-1-3)
   (m-2-1) edge[bend left=20] node[above] { \(F_2 \circ F_1\) } (m-2-3)
   (m-2-1) edge[bend right=20] node[below] { \(G_2 \circ G_1\) } (m-2-3)
;
\node at ($(m-1-1) !.55! (m-1-2)$) {\(\Downarrow o_1\)} ;
\node at ($(m-1-2) !.55! (m-1-3)$) {\(\Downarrow o_2\)} ;
\node at ($(m-2-1) !.5! (m-2-3)$) {\(\Code[c]{o_2 \circ o_1\\\Downarrow}\)} ;
\node at ($(m-1-1) !.5! (m-2-1)$) {\(\Iso\)};
\end{tikzpicture}
\]


\newcommand{\PList}{\CN{PList}_A}
\newcommand{\PVec}{\CN{PVec}_A}
\newcommand{\PHeight}{\Function{Height}}
\newcommand{\PSquare}{\Function{Square}}

\begin{example}[Horizontal composition of ornaments]
  \label{ex:bbin-orn-bin-tree}

Let us consider the following polynomials:
{
\[
\begin{array}{l}
\TypeAnn{\PSquare\: X \mapsto X \times X}
        {\Slice{\Set}{\Unit} \to \Slice{\Set}{\Unit}} \\
\TypeAnn{\PHeight\: \ISet{X_n}{n \in \Nat} \mapsto \ISet{X_n \times X_{n+1}}{n \in \Nat} \\
                  \qquad\qquad\qquad + \ISet{X_n \times X_n}{n \in \Nat}}
        {\Slice{\Set}{\Nat} \to \Slice{\Set}{\Nat}} 
\end{array}
\]}
It is easy to check that \(\VecICont\) ornaments \(\ListICont\) and
\(\PHeight\) ornaments \(\PSquare\). By horizontal composition of
these ornaments, we obtain that \(\VecICont \circ \PHeight\) --
describing a balanced binary tree -- is an ornament of \(\ListICont
\circ \PSquare\) -- describing a binary tree. Thus, we obtain that
balanced binary trees ornament binary trees.

\end{example}


\paragraph{Frame structure}
Finally, the frame structure of the bicategory lets us lift morphisms
on indices to polynomials. 

\newcommand{\Even}{\Canonical{Even}}

\begin{example}[Reindexing ornament]

Let \(\TypeAnn{\Function{twice}}{\Nat \to \Even}\), the function that
multiplies its input by 2.
The \(\Vector{}\) polynomial is indexed by \(\Nat\): we can therefore
reindex it with \(\Function{twice}\). We automatically obtain an
ornament of \(\Vector{}\) that is indexed by \(\Even\). Needless to
say, this construction is not very interesting on its own. However, in
a larger development, we can imagine retrofitting an indexed datatype
to use another index, making it usable by a library function.

\end{example}


The identity, vertical, and horizontal compositions illustrate the
algebraic properties of ornaments. The categorical simplicity of
cartesian morphisms gives us a finer understanding of datatypes and
their relation to each other, as illustrated by
Example~\ref{ex:bbin-orn-bin-tree}.


\subsection{Pullback of ornaments}


\newcommand{\PolyFuncCart}[2]{\mathit{PolyFun}^C(#1,#2)}

\if 0
\begin{theorem}
The category \(\PolyFuncCart{1}{1}\) has all pullbacks.
\end{theorem}
\fi


\if 0

\begin{definition}[Pullback of ornaments]

Let \(\TypeAnn{o_1}{F \CartesianArr{f_1}{g_1} H}\) and
\(\TypeAnn{o_2}{G \CartesianArr{f_2}{g_2} H}\) two ornaments. 

The \emph{pullback} of \(o_1\) and \(o_2\) consists the object
\(\Pullback{F}{H}{G}\) and the two ornaments 
\(\TypeAnn{o_1^\dagger}
          {\Pullback{F}{H}{G} \CartesianArr{f_1^\dagger}{g_1^\dagger} G}\)
and 
\(\TypeAnn{o_2^\dagger}
          {\Pullback{F}{H}{G} \CartesianArr{f_2^\dagger}{g_2^\dagger} F}\) 
such that the following diagram commutes and is universal:
%

\end{definition}

\fi


So far, we have merely exploited the fact that
\(\PolyFuncC{\LCCC{E}}\) is a framed bicategory. However, it has a
much richer structure. That extra structure can in turn be translated
into ornamental constructions. We shall focus on pullbacks, but we
expect other categorical notions to be of programming interest.

\begin{proposition}

The category \(\PolyFuncC{\LCCC{E}}\) has all pullbacks.

\end{proposition}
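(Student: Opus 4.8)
The plan is to build the pullback pointwise in the codomain and then check that the construction never leaves the subcategory of polynomial functors and cartesian natural transformations. I would work in a fixed fibre, reading ``the category $\PolyFuncC{\LCCC{E}}$'' as each $\PolyFuncC{\LCCC{E}}(I,J)$ with frames fixed to identities, so that all functors in sight are polynomial functors $\Slice{\LCCC{E}}{I} \to \Slice{\LCCC{E}}{J}$ and all morphisms are cartesian natural transformations; the argument is uniform in $I$ and $J$. Start from two cartesian natural transformations $\TypeAnn{\phi_1}{P_F \NatTrans P_H}$ and $\TypeAnn{\phi_2}{P_G \NatTrans P_H}$ with common codomain. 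Since $\LCCC{E}$ is pullback complete and has a terminal object it has all finite limits, hence so does every slice $\Slice{\LCCC{E}}{J}$; therefore the functor category $[\Slice{\LCCC{E}}{I},\Slice{\LCCC{E}}{J}]$ has pullbacks, computed pointwise. I would take $P \triangleq P_F \times_{P_H} P_G$ to be this pointwise pullback, with projections $\pi_1$ and $\pi_2$. By construction $(P,\pi_1,\pi_2)$ is already a pullback of $\phi_1$ and $\phi_2$ in the ambient functor category; everything that follows is about staying inside $\PolyFuncC{\LCCC{E}}$.

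The first real step is to show the projections $\pi_1,\pi_2$ are \emph{cartesian}, using the standard fact that the pullback of a cartesian natural transformation along an arbitrary natural transformation is cartesian. Concretely, for $\TypeAnn{h}{X \to Y}$ in $\Slice{\LCCC{E}}{I}$ one analyses the naturality square of $\pi_1$ at $h$ by pullback pasting: $P\: X = P_F\: X \times_{P_H\: X} P_G\: X$, and since the naturality square of $\phi_2$ at $h$ is a pullback (because $\phi_2$ is cartesian), re-associating and cancelling pullbacks yields $P\: X \Iso P_F\: X \times_{P_F\: Y} P\: Y$, which is exactly the statement that the naturality square of $\pi_1$ at $h$ is a pullback. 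By symmetry $\pi_2$ is cartesian too. Then, since $\pi_1$ is a cartesian natural transformation from $P$ into the polynomial functor $P_F$, Lemma~\ref{lemma:cart-induce-poly} immediately gives that $P$ is itself a polynomial functor.

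The second step is the universal property \emph{relative to the subcategory}. Given a polynomial functor $Q$ and cartesian natural transformations $\TypeAnn{\psi_1}{Q \NatTrans P_F}$, $\TypeAnn{\psi_2}{Q \NatTrans P_G}$ with $\phi_1 \circ \psi_1 = \phi_2 \circ \psi_2$, the pointwise universal property already supplies the unique mediating $\langle \psi_1,\psi_2 \rangle : Q \NatTrans P$. I would check that this mediator is cartesian using the complementary pasting fact: since $\pi_1$ is cartesian and the composite $\pi_1 \circ \langle \psi_1,\psi_2 \rangle = \psi_1$ is cartesian, cancellation of a pullback square from the right forces every naturality square of $\langle \psi_1,\psi_2 \rangle$ to be a pullback. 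Hence the mediator lies in $\PolyFuncC{\LCCC{E}}$, and uniqueness is inherited from the functor category, so $(P,\pi_1,\pi_2)$ is a genuine pullback there.

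The main obstacle is not conceptual but bookkeeping: making the two pullback-pasting arguments fully precise — that cartesian natural transformations are stable under pullback, and that they cancel on the right along a cartesian leg — and pinning down exactly which category the statement refers to. Both pasting lemmas are instances of the standard pasting lemma for pullback squares applied objectwise; as in the proof of Lemma~\ref{lemma:orn-to-polyc}, transporting the whole argument along the equivalence with $\ICont$ and spelling it out on containers makes the diagram chases entirely elementary, at the cost of one more round of extensional equalities.
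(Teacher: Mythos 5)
Your proof is correct, but it takes a genuinely different route from the paper's. You construct the pullback pointwise in the functor category \([\Slice{\LCCC{E}}{I},\Slice{\LCCC{E}}{J}]\) and then run two pasting arguments: stability of cartesian natural transformations under pullback (so the projections \(\pi_1,\pi_2\) are cartesian) and right-cancellation of pullback squares along a cartesian leg (so the mediating transformation is cartesian), finishing with Lemma~\ref{lemma:cart-induce-poly} to conclude that the vertex is again a polynomial functor. The paper instead exploits the fibration \(\_\:\Terminal \colon [\Slice{\LCCC{E}}{I},\Slice{\LCCC{E}}{J}] \to \Slice{\LCCC{E}}{J}\), whose cartesian morphisms are exactly the cartesian natural transformations: it pulls back the components \(\phi_{\Terminal}\) and \(\psi_{\Terminal}\) in the base slice, lifts the resulting square into the total category by reindexing, and appeals to Exercise 1.4.4 of Jacobs together with uniqueness of cartesian morphisms for the universal property. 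The two constructions produce the same object, since reindexing in this fibration is computed by pointwise pullback, and both (like your reading of the statement) work over a fixed pair of slices, i.e.\ with identity frames, so the ambiguity you flag is resolved the same way the paper resolves it. What each buys: your version is elementary and self-contained, needing only the objectwise pasting lemma, and it makes explicit -- via Lemma~\ref{lemma:cart-induce-poly} -- that the pullback vertex is polynomial, a point the paper's proof leaves implicit; the paper's version is shorter and foregrounds the structural fact that a cartesian transformation into a polynomial functor is determined by its component at the terminal object, which is also what makes the pullback effectively computable on shapes, as in the bounded-list example.
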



\begin{proof}


First, let us recall that the notion of \emph{cartesian} morphism
arises from the fact that the following functor is a fibration:
\[
\begin{tikzpicture}
\matrix (m) [matrix of math nodes
            , row sep=2em
            , column sep=5em
            , text height=1.5ex
            , text depth=0.25ex
            , ampersand replacement=\&]
{ 
  [\Slice{\LCCC{E}}{I}, \Slice{\LCCC{E}}{J}] \\
  \Slice{\LCCC{E}}{J} \\
};
\path[->] 
   (m-1-1) edge node[right] {\(\_\: \Terminal\)} (m-2-1)
;
\end{tikzpicture}
\]
Where cartesian natural transformation corresponds to the cartesian
morphisms of that fibration.


\if 0

Let \(\TypeAnn{\phi}{F \NatTrans^c H}\) and \(\TypeAnn{\psi}{G
  \NatTrans^c H}\) be two cartesian natural transformations. Being
natural and cartesian, they are entirely determined by their behavior
on the terminal object, respectively \(\phi_{\Terminal}\) and
\(\psi_{\Terminal}\). Since \(\Slice{\LCCC{E}}{J}\) is pullback
complete, we can construct the pullback of \(\phi_{\Terminal}\) and
\(\psi_{\Terminal}\). We define the pullback \(\Pullback{P}{R}{Q}\) to
be \(\Reindex{\psi_{\Terminal}^\dagger \circ \phi_{\Terminal}} R =
\Reindex{\phi_{\Terminal}^\dagger \circ \psi_{\Terminal}} R\).

Now, take any \(\TypeAnn{\alpha}{T \to^c P}\) and \(\TypeAnn{\beta}{T
  \to^c Q}\) such that \( \phi \circ \alpha = \psi \circ
\beta\). Again, \(\alpha\) and \(\beta\) are entirely determined by
their behavior on the terminal object, where we have
\(\phi_{\Terminal} \circ \alpha_{\Terminal} = \psi_{\Terminal} \circ
\beta_{\Terminal}\). Using the pullback, there exists a unique
morphism \(\TypeAnn{u_{\Terminal}}{T\: \CatId \to (\Pullback{P\: \CatId}{R\:
    \CatId}{Q\: \CatId})}\): using the fibration's reindexing along
\(u_{\Terminal}\), we obtain a (necessarily unique) cartesian morphism
from \(T\) to \(\Pullback{P}{R}{Q}\).

\todo{Bleh.}

\fi

Let \(\TypeAnn{\phi}{F \NatTrans^c H}\) and \(\TypeAnn{\psi}{G
  \NatTrans^c H}\) two cartesian natural transformation. They are
projected to \(\phi_{\Terminal}\) and \(\psi_{\Terminal}\) in the base
category. Since \(\Slice{\LCCC{E}}{J}\) is pullback complete, we can
construct the pullback of \(\phi_{\Terminal}\) and
\(\psi_{\Terminal}\), thus obtaining the following pullback square:
\[
\begin{tikzpicture}
\matrix (m) [matrix of math nodes
            , row sep=3em
            , column sep=3em
            , text height=1.5ex
            , text depth=0.25ex
            , ampersand replacement=\&]
{ 
  \cdot            \& F \Terminal \\
  G \Terminal      \& H \Terminal \\
};
\path[->] 
   (m-1-2) edge node[right] {\(\phi_{\Terminal}\)} (m-2-2) 
   (m-2-1) edge node[below] {\(\psi_{\Terminal}\)} (m-2-2) 
   (m-1-1) edge node[above] {\(\psi^\dagger_{\Terminal}\)} (m-1-2)
   (m-1-1) edge node[left] {\(\phi^\dagger_{\Terminal}\)} (m-2-1)
;
\begin{scope}[shift=($(m-1-1)!.25!(m-2-2)$)]
\draw +(-.25,0) -- +(0,0)  -- +(0,.25);
\end{scope}
\end{tikzpicture}
\]
By reindexing, we thus obtain the following square in the total
category:
\[
\begin{tikzpicture}
\matrix (m) [matrix of math nodes
            , row sep=3em
            , column sep=3em
            , text height=1.5ex
            , text depth=0.25ex
            , ampersand replacement=\&]
{ 
  \cdot            \& F X \\
  G X      \& H X \\
};
\path[->] 
   (m-1-2) edge node[right] {\(\phi_X\)} (m-2-2) 
   (m-2-1) edge node[below] {\(\psi_X\)} (m-2-2) 
   (m-1-1) edge node[above] {\({\psi^\dagger}_X\)} (m-1-2)
   (m-1-1) edge node[left] {\({\phi^\dagger}_X\)} (m-2-1)
;
\end{tikzpicture}
\]
By Exercise 1.4.4~\citep{jacobs:categorical-logic}, we have that this
square is actually a pullback. In a nutshell, we rely on the unicity
of cartesian morphisms in the total category to prove the universal
property of pullbacks for that square.

\end{proof}


\newcommandx{\BoundList}[2][2=\!]{\Canonical{BoundedList}\ifthenelse{\isempty{#1}}{}{_{#1}}\xspace\, #2}

\begin{example}[Pullback of ornament]

Natural numbers can be ornamented to lists
(Example~\ref{ex:nat-orn-list}) as well as finite sets
(Example~\ref{ex:nat-orn-fin}). Taking the pullback of these two
ornaments, we obtain bounded lists that correspond to lists of bounded
length, with the bound given by an index \(\TypeAnn{n}{\Nat}\). Put
explicitly, the object thus computed is the following datatype:
\[
\Data{\BoundList{}}
     {\Param{\Var{A}}{\Set}
      \Index{\Var{n}}{\Nat}}
     {\Set}{
  \Emit{\BoundList{\Var{A}}[\Constraint{\Var{n}}{\Suc[\Var{n'}]}]}
       {}
       {\Nil\: \PiTel{\Var{n'}}{\Nat}} 
  \OrEmit{\Cons[\PiTel{\Var{n'}}{\Nat}
                \PiTel{\Var{a}}{\Var{A}}
                \PiTel{\Var{as}}{\BoundList{\Var{A}}[\Var{n'}]}]}}
\]

\end{example}


The pullback construction is another algebraic property of ornaments:
given two ornaments, both describing an extension of the same datatype
(\eg extending natural numbers to lists and extending natural numbers
to finite sets), we can ``merge'' them into one having both
characteristics (\ie bounded lists). In type theory,
\citet{ko:modularising-inductive} have experimented with a similar
construction for composing indexing disciplines.


\subsection{Derivative of ornament}


\newcommand{\Derive}[1]{\partial_{#1}}
\newcommand{\PolyDelta}[2]{{\Poly{\!}}^{\Derive{#2}}_{#1}}

\citet{abbott:derivative} have shown that the
Zipper~\citep{huet:zipper} data-structure can be computed from the
derivative of signature functors. Interestingly, the derivative is
characterised by the existence of a universal arrow in the category
\(\PolyC{\LCCC{E}}\):
\begin{definition}[Differentiability~\citep{abbott:derivative}]

A polynomial \(F\) is differentiable in \(i\) if and only if, for any
polynomial \(G\), we have the following bijection of morphisms:
\[\begin{array}{l}
\Hom{\PolyC{\LCCC{E}}}{G \times \pi_i}{F} \\
\hline
\hline
\Hom{\PolyC{\LCCC{E}}}{G}{\Derive{i} F}
\end{array}
\]
Where \(\pi_i \triangleq \PolyDiag{\CatId}{I}{I}{k_i}{I}{\CatId}{I}\).
We denote \(\PolyDelta{\LCCC{E}}{i}\) the class of polynomials
differentiable in \(i\).

\end{definition}


\begin{proposition}

  Let \(F\) and \(G\) two polynomials in \(\PolyDelta{\LCCC{E}}{i}\). 

  If \(F\) ornaments \(G\), then \(\Derive{i} F\) ornaments
  \(\Derive{i} G\).

\end{proposition}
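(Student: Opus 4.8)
The plan is to exploit the universal property that defines the derivative. For a polynomial $F$ differentiable in $i$, the polynomial $\Derive{i} F$ represents the functor $G \mapsto \Hom{\PolyC{\LCCC{E}}}{G \times \pi_i}{F}$, so that $\Derive{i}(-)$ behaves as a (partial) right adjoint to the endofunctor $-\times\pi_i$ on $\PolyC{\LCCC{E}}$. Since a representing object is functorial in its parameter, a morphism $F \to G$ in $\PolyC{\LCCC{E}}$ automatically induces a morphism $\Derive{i} F \to \Derive{i} G$; and because every morphism of $\PolyC{\LCCC{E}}$ is, by construction, a cartesian morphism of polynomials, this induced morphism is exactly the ornament we want.

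First I would fix the data. Since $F, G \in \PolyDelta{\LCCC{E}}{i}$, the representing objects $\Derive{i} F$ and $\Derive{i} G$ exist, each equipped with its universal arrow (counit) $\varepsilon_F \colon \Derive{i} F \times \pi_i \to F$ and $\varepsilon_G \colon \Derive{i} G \times \pi_i \to G$ in $\PolyC{\LCCC{E}}$. The hypothesis that $F$ ornaments $G$ provides a cartesian morphism $\phi \colon F \to G$, which by the framed biequivalence established earlier is the same datum as a morphism of $\PolyC{\LCCC{E}}$.

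Next I would construct $\Derive{i}\phi \colon \Derive{i} F \to \Derive{i} G$ diagrammatically: apply the universal property of $\varepsilon_G$ to the composite $\phi \circ \varepsilon_F \colon \Derive{i} F \times \pi_i \to G$, obtaining the unique morphism $\Derive{i}\phi \colon \Derive{i} F \to \Derive{i} G$ satisfying $\varepsilon_G \circ (\Derive{i}\phi \times \pi_i) = \phi \circ \varepsilon_F$. Equivalently, one transposes across the differentiability bijection, or invokes Yoneda on the natural transformation $\Hom{\PolyC{\LCCC{E}}}{-\times\pi_i}{F} \to \Hom{\PolyC{\LCCC{E}}}{-\times\pi_i}{G}$ given by post-composition with $\phi$. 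Functoriality — identities to identities, composites to composites — is then immediate from the uniqueness clause of the universal property. In particular $\Derive{i}\phi$ is a morphism of $\PolyC{\LCCC{E}}$, hence a cartesian morphism from $\Derive{i} F$ to $\Derive{i} G$; that is, $\Derive{i} F$ ornaments $\Derive{i} G$.

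The step demanding the most care is bookkeeping rather than conceptual: one must confirm that $-\times\pi_i$ is a genuine endofunctor on $\PolyC{\LCCC{E}}$ (products with $\pi_i$ exist and act functorially, as implicitly assumed in the very definition of differentiability) and that the differentiability bijection is natural in its $F$-argument as well, so that $\Derive{i}\phi$ is well defined and $F \mapsto \Derive{i} F$ is functorial. Since this amounts to the standard fact that a parametrised representing object is functorial, no real obstacle arises beyond keeping the cartesian-morphism discipline of $\PolyC{\LCCC{E}}$ in view throughout; a purely container-level argument via the Leibniz rule for $\Derive{i}$ is also possible but would be considerably more laborious and is not needed here.
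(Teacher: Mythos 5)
Your proposal is correct and follows essentially the same route as the paper: compose the counit \(\Derive{i} F \times \pi_i \to^c F\) with the ornament \(F \to^c G\) and transpose across the differentiability bijection, i.e.\ invoke the functoriality of \(\Derive{i}\) on \(\PolyDelta{\LCCC{E}}{i}\) (which the paper cites from Abbott's thesis rather than re-deriving). Your additional remarks on uniqueness and the endofunctor \(-\times\pi_i\) just spell out the bookkeeping the paper leaves to that reference.
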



\begin{proof}


The proof simply follows from the functoriality of \(\Derive{i}\) over
\(\PolyDelta{\LCCC{E}}{i}\)~\citep[Section 6.4]{abbott:PhD}. In a
nutshell, this follows from the existence of the following cartesian
morphism:
\[ \Derive{i} F \times \pi_i \to^c F \to^c G \]
where the first component is the unit of the universal arrow while the
second component is the ornament from \(F\) to \(G\). By definition of
differentiability, we therefore have the desired cartesian morphism:
\[
\Derive{i} F \to^c \Derive{i} G
\]

\end{proof}


\begin{example}[Ornamentation of derivative]

Let us consider binary trees, with signature functor \(1 + A \times
X^2\). Balanced binary trees are an ornamentation of binary trees
(Example~\ref{ex:bbin-orn-bin-tree}). By the theorem above, we have
that the derivative of balanced binary trees is an ornament of the
derivative of binary trees.

\end{example}

The derivative is thus an example of an operation on datatypes that
preserves ornamentation. Knowing that the derivative of an ornamented
datatype is an ornamentation of the derivative of the original
datatype, we get that the order in which we ornament or derive a
datatype does not matter. This let us relate datatypes across such
transformations, thus preserving the structural link between them.


\section{Related work}


Ornaments were initially introduced by the second
author~\citep{mcbride:ornament} as a programming artefact. They were
presented in type theory, with a strong emphasis on their
computational contribution. Ornaments were thus introduced through a
universe. Constructions on ornaments -- such as the ornamental
algebra, algebraic ornament, and reornament -- were introduced as
programs in this type theory, relying crucially on the concreteness of
the universe-based presentation.

While this approach has many pedagogical benefits, it was also clear
that more abstract principles were at play. For example, in a
subsequent paper~\citep{dagand:fun-orn}, the authors successfully
adapted the notion of ornaments to another universe of inductive
families, whilst \citet{ko:modularising-inductive} explore datatype
engineering with ornaments in yet a third. The present paper gives
such an abstract treatment. This focus on the theory behind ornaments
thus complements the original, computational treatment.

Building upon that original paper, our colleagues
\citet{ko:modularising-inductive} also identify the pullback structure
-- called ``composition'' in their paper -- as significant, giving a
treatment for a concrete universe of ornaments and compelling examples
of its effectiveness for combining indexing disciplines. The
conceptual simplicity of our approach lets us subsume their type
theoretic construction as a mere pullback.


The notion of algebraic ornament was also treated categorically by
\citet{atkey:inductive-refinement}: instead of focusing on a
restricted class of functors, the authors described the refinement of
any functor by any algebra. The constructions are presented in the
generic framework of fibrations. The refinement construction described
in this paper, once specialised to polynomial functors, corresponds
exactly to the notion of algebraic ornament, as we have shown.


\citet{fiore:gadts} also give a model of inductive families in terms
of polynomial functors. To do so, they give a translation of inductive
definitions down to polynomials. By working on the syntactic
representation of datatypes, their semantics is \emph{defined by} this
translation. In our system, we can actually prove that descriptions --
our language of datatypes -- are equivalent to polynomial functors.


Finally, it is an interesting coincidence that cartesian morphisms
should play such an important role in structuring ornaments. Indeed,
containers stem from the work on shapely
types\citep{jay:shapely-types}. In the shape framework, a few base
datatypes were provided (such as natural numbers) and all the other
datatypes were grown from these basic blocks by a pullback
construction, \ie an ornament. However, this framework was simply
typed, hence no indexing was at play.



\section{Conclusion}


Our study of ornaments began with the equivalence between our universe
of descriptions and polynomial functors. This result lets us step away
from type theory, and gives access to the abstract machinery provided
by polynomials.  For practical reasons, the type theoretic definition
of our universe is very likely to change. However, whichever concrete
definition we choose will always be a syntax for polynomial
functors. We thus get access to a stable source of mathematical
results that informs our software constructions.


We then gave a categorical presentation of ornaments. Doing so, we get
to the essence of ornaments: ornamenting a datatype consists in
extending it with new information, and refining its indices. Formally,
this characterisation turns into a presentation of ornaments as
cartesian morphisms of polynomials.


Finally, we reported some initial results based on our explorations of
this categorical structure. We have translated the type theoretic
ornamental toolkit to the categorical framework. Doing so, we have
gained a deeper understanding of the original definitions. Then, we
have expressed the categorical definition of \(\PolyC{\LCCC{E}}\) in
terms of ornaments, discovering new constructions -- identity,
vertical, and horizontal composition -- in the process. Also, we have
studied the structure of \(\PolyC{\LCCC{E}}\), obtaining the notion of
pullback of ornaments.


\paragraph{Future work}


We have barely scratched the surface of \(\PolyC{\LCCC{E}}\): a lot
remain unexplored. Pursuing this exploration might lead to novel and
interesting ornamental constructions.
Also, our definition of ornaments in terms of polynomials might be
limiting. One can wonder if a more abstract criterion could be found
for a larger class of functors. For instance, the functor
\(\TypeAnn{\_\: \Terminal[\Cat{C}]}{[\Cat{C},\Cat{D}] \to \Cat{D}}\)
is a fibration for \(\Cat{D}\) pullback complete and \(\Cat{C}\)
equipped with a terminal object \(\Terminal[\Cat{C}]\). Specialised to
the categories of slices of \(\LCCC{E}\), the cartesian morphisms are
exactly our ornaments. What about the general case?

Finally, there has been much work recently on homotopy inductive
types~\citep{awodey:inductive-homotopy}. Coincidentally, the formalism
used in these works is based on W-types, \ie the type theoretic
incarnation of polynomial functors. It would be there be interesting
to study what ornaments could express in this framework.

\paragraph{Acknowledgements} 
We would like to thank Gabor Greif and Jos\'{e} Pedro Magalh\~{a}es
for their input on this paper. We also thank our colleagues
Cl\'{e}ment Fumex and Lorenzo Malatesta for their feedback on our
proofs. The authors are supported by the Engineering and Physical
Sciences Research Council, Grant EP/G034699/1.


\bibliographystyle{abbrvnat}
\bibliography{paper,%
  ../thesis-2011-phd/levitation,%
  ../report-2011-patch/paper,%
  ../report-2012-data/paper} 



\end{document}